\newcounter{quotecount}
\begin{document}

\title{Timed pushdown automata revisited}

\author{\IEEEauthorblockN{Lorenzo Clemente\thanks{The first author acknowledges a partial support of the Polish National Science Centre grant 2013/09/B/ST6/01575.}}
\IEEEauthorblockA{University of Warsaw}
\and
\IEEEauthorblockN{S{\l}awomir Lasota\thanks{The last author acknowledges a partial support of the Polish National Science Centre grant 2012/07/B/ST6/01497.}}
\IEEEauthorblockA{University of Warsaw}
}

\maketitle

\begin{abstract}
	This paper contains two results on timed extensions of pushdown automata (PDA).
	As our first result we prove that the model of dense-timed PDA
	of Abdulla \emph{et al.}
	collapses: it is expressively equivalent to dense-timed PDA with \emph{timeless} stack.
	Motivated by this result, we advocate the framework of first-order definable PDA,
	a specialization of PDA in sets with atoms, as the right setting to define and investigate timed extensions of PDA.
	The general model obtained in this way is Turing complete.
	As our second result we prove NEXPTIME upper complexity bound for the non-emptiness problem for an expressive subclass.
	As a byproduct, we obtain a tight EXPTIME complexity bound for a more restrictive subclass of PDA with timeless stack,
	thus subsuming the complexity bound known for dense-timed PDA.
\end{abstract}

\IEEEpeerreviewmaketitle


\newcommand{\mysubsubsection}[1]{\vspace{1mm} \noindent{\bf #1. }}
\newcommand{\introsubsection}[1]{\vspace{1mm} \noindent{\bf #1. }}

\newcommand{\lorenzo}[1]{\todo[color=blue!20]{#1}}
\newcommand{\nop}{\mathsf{nop}}
\newcommand{\push}[1]{\mathsf{push}(#1)}
\newcommand{\pop}[1]{\mathsf{pop}(#1)}
\newcommand{\pushpop}{\text{\sc push-pop}}
\newcommand{\pushpopA}{\pushpop^0}
\newcommand{\pushpopB}{\pushpop^1}
\newcommand{\pushpopC}{\pushpop^2}
\newcommand{\sep}{\ | \ }
\newcommand{\true}{\mathbf{t}}
\newcommand{\safe}[1]{\mathsf{safe}({#1})}
\newcommand{\eventually}[1]{\mathsf{eventually}({#1})}
\newcommand{\test}[1]{\mathsf{test}({#1})}
\newcommand{\dtPDA}{dtPDA\xspace}

\newcommand{\g}{{\cal G}}
\newcommand{\wlg}{w.l.o.g.\xspace}
\newcommand{\Wlog}{W.l.o.g.\xspace}

\newcommand{\timed}{timed\xspace}

\newcommand{\tr}{timed register\xspace}
\newcommand{\Tr}{Timed register\xspace}
\newcommand{\trPDA}{{trPDA}\xspace}
\newcommand{\TrPDA}{{TrPDA}\xspace}
\newcommand{\trCFG}{{trCFG}\xspace}
\newcommand{\TrCFG}{{TrCFG}\xspace}

\newcommand{\atoms}{\mathbb A}
\newcommand{\nat}{\mathbb N}
\newcommand{\N}{\mathbb N}
\newcommand{\Q}{\mathbb Q}
\newcommand{\R}{\mathbb R}
\newcommand{\D}{{\mathbb D}}
\newcommand{\Z}{{\mathbb Z}}

\newcommand{\Aa}{{\cal A}}
\newcommand{\Bb}{{\cal B}}
\newcommand{\Cc}{{\cal C}}
\newcommand{\Qq}{{\hat Q}}
\newcommand{\Oo}{{\cal O}}
\newcommand{\Hh}{{\cal H}}
\newcommand{\aaut}{{\cal A}}
\newcommand{\baut}{{\cal B}}
\newcommand{\caut}{{\cal C}}
\newcommand{\taut}{{\cal T}}
\newcommand{\uaut}{{\cal U}}

\newcommand{\psQ}{\ddot Q}
\newcommand{\psorb}{\text{orbits}(\psQ)}
\newcommand{\dist}[2]{\text{d}(#1, #2)}
\newcommand{\psQf}{\tilde Q}
\newcommand{\psorbf}{\text{orbits}(\psQf)}
\newcommand{\pspsQ}{\dddot Q}
\newcommand{\pspspsQ}{\ddddot Q}

\newcommand{\reg}{{\cal R}}
\newcommand{\regrules}{{\cal S}}
\newcommand{\const}{c}
\newcommand{\kmax}{k_{\text{max}}}

\newcommand{\trans}[3]{ #1 \stackrel{#2}{\longrightarrow} #3 }
\newcommand{\trrule}[3]{(#1,#3)}
\newcommand{\transtrans}[3]{ #1 \stackrel{#2}{\longrightarrow}^* #3 }
\newcommand{\horiztrans}[3]{ #1 \stackrel{#2}{\leadsto} #3 }
\newcommand{\horiztranstrans}[3]{ #1 \stackrel{#2}{\leadsto}^* #3 }
\newcommand{\horiztransplus}[3]{ #1 \stackrel{#2}{\leadsto}^+ #3 }
\newcommand{\invreach}[2]{\text{Reach}^{-1}_{#1}(#2)}
\newcommand{\invacc}[1]{\text{Reach}^{-1}_{#1}}
\newcommand{\ew}{\varepsilon}
\newcommand{\eps}{\tau} 
\newcommand{\slcomm}[1]{\marginpar{\vspace{1mm}\textcolor{blue}{\tiny SL: #1}}}
\newcommand{\exptime}{{\tt ExpTime}\xspace}
\newcommand{\exptimec}{{\tt ExpTime}-complete\xspace}
\newcommand{\nexptime}{{\tt NExpTime}\xspace}

\newcommand{\eqdef}{\stackrel {\text{def}} =}
\renewcommand{\setminus}{-}
\newcommand{\func}[3]{\mathop{\mathchoice{%
    {#1}\colon\;{#2}\;\longrightarrow\;{#3}}{%
    {#1}\colon{#2}\to{#3}}{%
    script}{%
    sscript}
  }}
\newcommand{\defin}[1]{[#1]}
\newcommand{\ov}[1]{\overline{#1}}
\newcommand{\aut}[1]{\text{Aut}(#1)}
\newcommand{\orbit}[1]{\text{orbit}(#1)}
\newcommand{\fodef}{FO-definable\xspace}
\newcommand{\Fodef}{FO-definable\xspace}
\newcommand{\qfdef}{qf-definable\xspace}
\newcommand{\Qfdef}{Qf-definable\xspace}
\newcommand{\prestar}[1]{\text{Pre}^*(#1)}
\newcommand{\autom}{{\cal A}}
\newcommand{\quot}[2]{#1/{#2}}
\newcommand{\set}[1]{\left\{ #1 \right\}}
\newcommand{\setof}[2]{\set{#1 \; | \; #2}}
\newcommand{\tuple}[1]{(#1)}  
\newcommand{\goesto}[1]{\stackrel {#1} \longrightarrow}
\newcommand{\lang}[1]{\mathcal L({#1})}
\newcommand{\poly}{\text{poly}}

\newcommand{\bigrule}[2]{\frac{#1}{#2}}
\renewcommand{\phi}{\varphi}
\newcommand{\vars}[1]{\text{vars}(#1)}

\newcommand{\rhopush}{\text{\sc push}}
\newcommand{\rhopop}{\text{\sc pop}}
\newcommand{\rhonop}{\text{\sc nop}}
\newcommand{\concat}{\cdot}
\newcommand{\elt}[1]{\langle #1\rangle}

\newcommand{\inhabited}[1]{\vdash{#1}}

\newcommand{\len}[1]{|#1|}

\newcommand{\ignore}[1]{}


\section{Introduction}

\introsubsection{Background}
Timed automata~\cite{AD94} are a popular model of time-dependent behavior.
A timed automaton is a finite automaton extended with a finite number of variables,
called clocks, that can be reset and tested for inequalities with integers; so equipped, a timed automaton
can read timed words, whose letters are labeled with real (or rational) timestamps. 
The value of a clock implicitly increases with the elapse of time, which is modeled 
by monotonically increasing time-stamps of input letters. 

In this paper, we investigate timed automata extended with a stack.
An early model extending timed automata with an untimed stack,
which we call \emph{pushdown timed automata} (PDTA),
has been considered by Bouajjani \emph{et al.}~\cite{bouajjani:timed:PDA:94}.
Intuitively, PDTA recognize timed languages that can be obtained
by extending an untimed context-free language with regular timing constraints.
A more expressive model, called \emph{recursive timed automata} (RTA),
has been independently proposed (in an essentially equivalent form)
by Trivedi and Wojtczak \cite{TrivediWojtczak:RTA:2010},
and by Benerecetti \emph{et al.} \cite{BenerecettiMinopoliPeron:RTA:2010}.
RTA use a timed stack to store the current clock valuation, which can be restored at the time of pop.
This facility makes RTA able to recognize timed language with non-regular timing constraints (unlike PDTA).

More recently, \emph{dense-timed pushdown automata} (\dtPDA)
have been proposed by Abdulla \emph{et al.}~\cite{AbdullaAtigStenman:DensePDA:12}
as yet another extension of PDTA.
In \dtPDA, a clock may be pushed on the stack,
and its value increases with the elapse of time, exactly like the value of an ordinary clock.
When popped from the stack, the value may be tested for inequalities with integers.
The non-emptiness problem for \dtPDA is solved in \cite{AbdullaAtigStenman:DensePDA:12}
by an ingenious reduction to non-emptiness of classical \emph{untimed} PDA.
As a byproduct, this shows that the untiming projection of \dtPDA-language is context-free.
Perhaps surprisingly, we prove the semantic collapse of \dtPDA to PDTA,
i.e., \dtPDA with timeless stack but timed control locations:
every \dtPDA may be effectively transformed into a PDTA \emph{that recognizes the same timed language}.
Notice that this is much stronger than a mere reduction of the non-emptiness problem from the former to the latter model.
Intuitively, the collapse is caused by the accidental interference of the LIFO stack discipline with the monotonicity of time,
combined with the restrictions on stack operations assumed in \dtPDA.
Thus, \dtPDA are equivalent to PDTA, and therefore included in RTA.
The collapse motivates the quest for a more expressive framework for timed extensions of PDA.

\introsubsection{Timed register pushdown automata}
We advocate \emph{sets with atoms} as the right setting for defining and investigating timed extensions of various classes of automata.
This setting is parametrized by a logical structure $\atoms$, called \emph{atoms}.
Intuitively speaking, sets with atoms are very much like classical sets,
but the notion of finiteness is relaxed to orbit-finiteness,
i.e., finiteness up to an automorphism of atoms $\atoms$.
The relaxation of finiteness allows to capture naturally various infinite-state models.
For instance, ignoring some inessential details, register automata~\cite{FK94} (recognizing data languages)
are expressively equivalent to the reinterpretation of the classical definition of `finite NFA' as `orbit-finite NFA' in sets with 
\emph{equality atoms} $(\N, =)$ (see~\cite{BKL11full} for details),
and analogously for register pushdown automata~\cite{ChengKaminski:CFL:AI98}. 

Along similar lines, timed automata (without stack) are essentially a subclass of NFA in sets with 
\emph{timed atoms} \mbox{$(\Q, \leq, +1)$}, i.e.,
rationals with the natural order and the $+1$ function
(see~\cite{BL12icalp} for details).
The automorphisms of timed atoms are thus monotonic bijections
from $\Q$ to $\Q$ that preserve integer differences.
In fact, to capture timed automata it is enough to work in a well-behaved subclass of
sets with timed atoms, namely in \emph{(first-order) definable sets}. Examples of definable 
sets are 
\begin{align*}
A \ & = \ \{ (x, y, z) \in \Q^3 \ : \ x < y < z+1 < x+4 \}\\
A' \ & = \ \{ (x, y) \in \Q^2 \ : \ x = y \ \lor \ y > x+2\}.
\end{align*}
The first one is orbit-finite, while the other is not.

By reinterpreting the classical definition of PDA in definable sets we obtain a powerful model,
which we call \emph{timed register PDA} (\trPDA), where, roughly speaking, 
a clock (or even a tuple of clocks) may be pushed to, and popped from the stack, conditioned by arbitrary
clock constraints referring possibly to other clocks.
Notice that monotonicity is not part of the definition of timed atoms,
and thus in general \trPDA read non-monotonic timed words,
unlike classical timed automata or dense-timed PDA.
This is not a restriction, since monotonicity can be checked by the automaton itself,
and thus we can model monotonic as well as non-monotonic timed languages.
An example language recognized by a \trPDA (or even by \trCFG)
is the language of palindromes over the alphabet $A$ defined above.
Another example is the language of bracket expressions over the alphabet $\{ [,]\} \times \Q$,
where the timestamps of every pair of matching brackets belong to $A'$.
These languages intuitively require a timed stack in order to be recognized,
and thus fall outside the class of \dtPDA due to our collapse result.

\begin{figure}
	
	\centering{\includegraphics[scale=0.25]{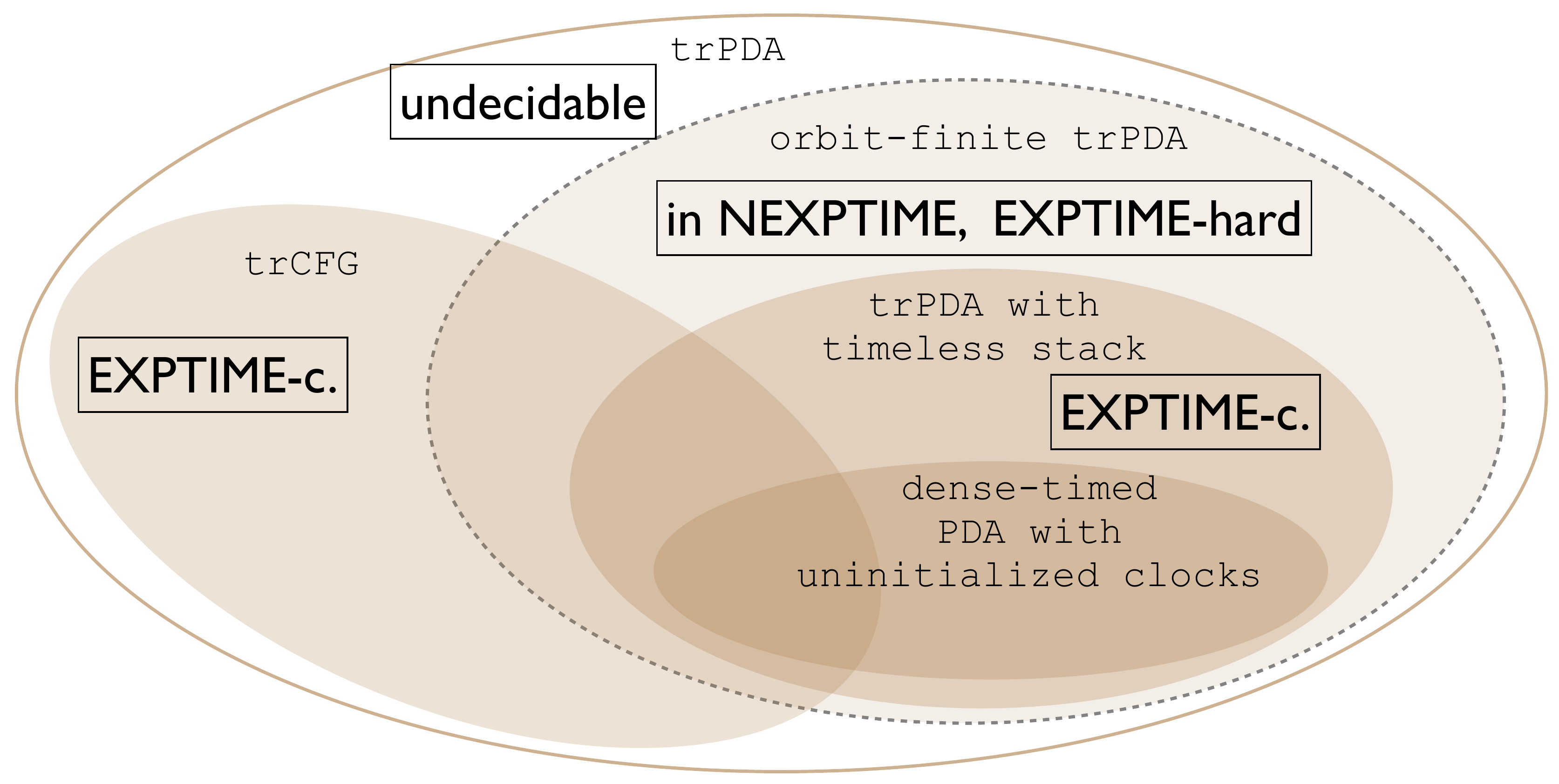}}
	\caption{Classes of timed pushdown languages.}
	\label{fig:languages}
	
\end{figure}

\introsubsection{Contributions}
In view of possible applications to verification of time-dependent recursive programs,
we focus on the computational complexity of the non-emptiness problem for \trPDA.
We isolate several interesting classes of \trPDA,
which are summarized in Fig.~\ref{fig:languages}.
All intersections are non-trivial.
Our model subsume \dtPDA,
for the simple reason that the finite-state control is essentially a timed-register NFA,
which subsumes timed automata, i.e., the finite-state control of \dtPDA.
For the general model we prove undecidability of non-emptiness. 
This motivates us to distinguish an expressive subclass,
which we call \emph{orbit-finite} \trPDA,
which is obtained from the general model by imposing a certain orbit-finiteness restriction on push and pop operations.
We show that non-emptiness of orbit-finite \trPDA is in \nexptime.
This is shown by reduction to non-emptiness of the least solution of 
a system of equations over sets of integers (cf.~\cite{JO11} and references therein).
This reduction is the technical core of the paper.
Moreover, it shows the essentially \emph{quantitative} flavor of the dense time domain $(\Q, \leq, +1)$
as opposed to other kind of atoms, like equality $(\N, =)$ or total-order atoms $(\Q, \leq)$.
Note that $(\R, \leq, +1)$ has the same first-order theory of the rationals,
and thus considering the latter instead of the reals is with no loss of generality.
Interestingly, our proofs work just as well over the discrete time domain $(\Z, \leq, +1)$.

In order to establish the claimed complexity upper bound,
we establish, along the way, tight complexity results for solving systems of equations in special form.
From this analysis, we derive \exptime-completeness of the subclass of \trPDA with timeless stack.
Due to our collapse result, under a simple technical assumption that preserves non-emptiness,
\dtPDA can be effectively transformed into \trPDA with timeless stack,
and thus we subsume the \exptime upper bound shown in~\cite{AbdullaAtigStenman:DensePDA:12}.

Finally, we consider the reinterpretation of context-free grammars in sets with timed atoms.
We prove that timed context-free languages are a strict subclass of \trPDA languages,
and that their non-emptiness is \exptime-complete.

Except for the technical results, the paper offers a wider perspective on modeling timed systems.
We claim that sets with  atoms have a significant and still unexplored potential 
for capturing timed extensions of classical models of computation.

\introsubsection{Organization}
In Sec.~\ref{sec:dtPDA} we show the collapse result for \dtPDA.
In Sec.~\ref{sec:defin} we introduce the setting of definable sets.
Then, in Sec.~\ref{sec:trPDA} we define \trPDA and its subclasses, formulate our complexity results,
and relate in detail these results to the previously known \exptime-completeness of \dtPDA.
The following Sec.~\ref{sec:proof} is the core technical part of the paper and it is devoted to the proofs of the upper bounds.
The last section contains final remarks and sketch of future work.
The missing parts of the proofs are delegated to the appendix.


\section{Dense-timed pushdown automata}  \label{sec:dtPDA}

As the first result of the paper, we show that \dtPDA as proposed by \cite{AbdullaAtigStenman:DensePDA:12}
recognize the same timed languages as its variant with timeless stack.
This result is much stronger than the reduction proposed in \cite{AbdullaAtigStenman:DensePDA:12},
which shows that \dtPDA and its variant with timeless stack are equivalent w.r.t. the \emph{untimed} language
(as opposed to the full timed language).
In fact, we even prove this for a non-trivial generalization of the model of \cite{AbdullaAtigStenman:DensePDA:12} with diagonal pop constraints (cf. below).
In view of our collapse result,
we abuse terminology and we also call the extended model \dtPDA.
A \emph{clock constraint} over a set of clocks $X$ is a formula $\phi$ generated by the following grammar:
\begin{align*}
	\phi \ ::= \	\true \sep
					x \sim k \sep
					x - y \sim k \sep
					\phi \wedge \phi,
\end{align*}
where $\true$ is the trivial constraint which is always true,
$x, y \in X$, $k \in \Z$, and $\sim \ \in \{ <, \leq, \geq, > \}$.
We do not have disjunction $\vee$ since it can be simulated by nondeterminism in the transition relation of the automaton.
We write $y - x \sim k \in \phi$ to denote that $y - x \sim k$ is a conjunct in $\phi$.
A \emph{dense-timed pushdown automaton} (\dtPDA) 
is a tuple $\taut = (L, l_0, \Sigma, \Gamma, X, z, \Delta)$
where $L$ is a finite set of control locations,
$l_0 \in L$ is the initial location,
$\Sigma$ is a finite input alphabet,
$\Gamma$ is a finite stack alphabet,
$X$ is a finite set of clocks,
and $z$ is a special clock not in $X$ representing the age of the topmost stack symbol.
The last item $\Delta$ is a set of transition rules of the form:
$l \goesto {a, \phi, Y, op} l'$
with $l, l' \in L$ control locations,
$a \in \Sigma_\varepsilon = \Sigma \cup \{\varepsilon\}$  an input letter, 
$\phi$ a constraint over clocks in $X$,
a subset $Y \subseteq X$ of clocks that will be reset,
and $op$ is either $\nop$, $\pop{\alpha \models \psi_0}$, or $\push{\alpha \models \psi_1}$,
where $\alpha \in \Gamma$ a stack symbol,
$\psi_0$ a constraint over clocks in $X \cup \{z\}$ (called \emph{pop constraint})
and $\psi_1$ a constraint over $\{z\}$ (called \emph{push constraint}).
An automaton has \emph{timeless stack}
if all its pop operations $\pop{\alpha \models \true}$ have the trivial constraint $\true$,
in which case just write $\pop{\alpha}$.
%

The formal semantics of \dtPDA follows~\cite{AbdullaAtigStenman:DensePDA:12},
and can be found in the appendix.
Intuitively, every symbol on the stack carries a nonnegative rational number representing its age.
Ages increase monotonically as time elapses, all at the same rate, and at the same rate as the other clocks of the automaton.
Every time a new symbol is pushed on the stack,
its age is nondeterministically initialized to a value of $z$ satisfying the push constraint $\psi_1$,
and it can be popped only if its current age satisfies the constraint $\psi_0$.
Note that the push constraint $\psi_1$ essentially forces the initial age into a (possibly unbounded) interval.
The original definition of~\cite{AbdullaAtigStenman:DensePDA:12} imposed the same restriction on pop constraints.
Our definition of pop constraint is more liberal,
since we allow more general \emph{diagonal pop constraints} of the form $z - x \sim k$.
Despite this seemingly more general definition,
we show nonetheless that the stack can be made \emph{timeless} while preserving the timed language recognized by the automaton.
\begin{restatable}{theorem}{thmDTPDAuntimedstack}
	\label{thm:Abdulla}
	A \dtPDA $\taut$ can be effectively transformed into a \dtPDA $\uaut$ with timeless stack 
	recognizing the same timed language.
	Moreover, $\uaut$ has linearly many clocks w.r.t. $\taut$,
	and exponentially many control locations. 
\end{restatable}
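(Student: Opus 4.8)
The plan is to remove the timed stack by a region-based construction that keeps the control timed but replaces the real-valued age of each stack symbol by a bounded amount of guessed information, to be verified when the symbol is popped. Write $\kmax$ for the largest absolute value of a constant occurring in $\taut$. As a preprocessing step I would, \wlg (splitting rules by nondeterminism), normalise every push constraint so that the initial age of a pushed symbol is forced into a single region among $\{0\}, (0,1), \{1\}, \ldots, \{\kmax\}, (\kmax,\infty)$, and I would put pop constraints in a convenient conjunctive normal form.

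The automaton $\uaut$ keeps the clocks $X$ and $z$ plus a constant number of auxiliary clocks measuring elapsed time inside the current stack frame, and maintains in its control a region-like description of all these clocks; this accounts both for the linear number of clocks and for the exponential blow-up of control locations. Its stack alphabet is $\Gamma$ times a finite set of \emph{age summaries}. On a push of $\alpha$, $\uaut$ (i) initialises $z$ subject to $\psi_1$ exactly as $\taut$ does, and (ii) nondeterministically guesses, and stores in the pushed symbol, an age summary predicting: the region of $\alpha$'s age at the moment $\alpha$ will be popped; the region of the difference between $\alpha$'s age and each $x \in X$ at that moment; and, for each $x \in X$, whether $x$ will be reset before $\alpha$ is popped. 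For the topmost symbol the summary is also mirrored in the control, so that it can be kept current. On a pop of $\alpha$, $\uaut$ uses the exposed summary together with the current clock region and the elapsed-time clocks to (a) check that the predictions made when $\alpha$ was pushed have come true, and (b) replace the pop constraint $\psi_0(X,z)$ — including the diagonal conjuncts $z - x \sim k$ — by an equivalent constraint over $X$ alone, thereby making the pop timeless. A routine two-way simulation then shows $\lang{\uaut} = \lang{\taut}$.

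The invariant that makes the guesses meaningful is that time is monotone and ages are never reset, so the difference between the ages of any two symbols simultaneously on the stack is constant over time; hence an age summary guessed at push time is still valid at the matching pop. During a run $\uaut$ updates its region description on time elapse and resets as usual, and on a reset of $x$ it checks consistency with the topmost summary (a symbol that had predicted ``$x$ not reset'' becomes falsified) and records the reset's effect on the frame clocks.

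The delicate point — and the main obstacle — is the interplay of clock resets with the summaries of symbols buried deep in the stack: a reset of $x$ occurring long after $\alpha$ was pushed changes the relation between $\alpha$'s age and $x$, but $\uaut$ cannot modify the now-buried summary of $\alpha$. This is exactly what the ``will $x$ be reset before the pop of $\alpha$'' prediction is for. If the prediction is ``no'', then at the pop of $\alpha$ the value $z - x$ equals $\text{initage}(\alpha) - x$ frozen at push time, which the summary records directly. If it is ``yes'', then $z - x$ at the pop of $\alpha$ equals $\alpha$'s age at the last such reset, and since at every reset the then-topmost symbol's age is tracked while the constant age-offsets between consecutive stack symbols are recorded in the summaries, this value can be recovered from the region at pop time together with the stored offsets. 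Getting these offsets to compose without losing a region of precision at each intervening pop — which forces the summaries to record exact fractional relationships rather than coarse regions — is where most of the technical effort lies.
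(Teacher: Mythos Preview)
Your proposal takes a different route from the paper and, at the point you yourself flag as ``delicate'', it has a genuine gap.

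The paper's construction does \emph{not} try to track ages, age summaries, or age-offsets between consecutive stack symbols. Instead, after normalising so that every pop constraint is a conjunction of atoms $z-x\sim k$ and every pushed symbol has initial age~$0$, it makes a single structural observation: for a fixed atom $(x,\lesssim,k)$, the \emph{reset restriction} ``$x$ may not be reset later than $k$ time units after this push'' imposed by an outer $\push\alpha$ always \emph{subsumes} the same restriction imposed by any nested $\push\beta$ (the forbidden interval for $\beta$ is contained in that for $\alpha$, by LIFO plus time monotonicity). Dually, for $(x,\gtrsim,k)$, new reset \emph{obligations} subsume old ones. Hence one only needs to record, in the control, a set $R$ of currently active atoms $(x,\lesssim,k)$ and a set $O$ of pending obligations $(x,\gtrsim,k)$, together with one extra clock $\hat x_{\sim k}$ per atom, reset at the (outermost) push that activated the atom. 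These clocks are checked when $x$ is reset (for restrictions) or used to discharge obligations; a pop requires $O=\emptyset$. No information about the age of any stack symbol is ever reconstructed.

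Your approach, by contrast, tries to recover $z-x$ at pop time from the age of the then-topmost symbol at the last reset of $x$, plus a chain of age-offsets stored in the summaries of intervening symbols. Those offsets are arbitrary non-negative reals, and the chain has unbounded length; storing them as regions loses a region of precision at each step, and ``exact fractional relationships'' would require storing the fractional part of an unbounded sum of reals, which cannot be done with a finite alphabet. You note that this is ``where most of the technical effort lies'' but do not actually solve it, and I do not see how to make this composition finite without something equivalent to the paper's subsumption insight. In short: the paper succeeds by observing that the composition problem need not be solved at all, whereas your plan commits to solving it and then stops.
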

\noindent
%

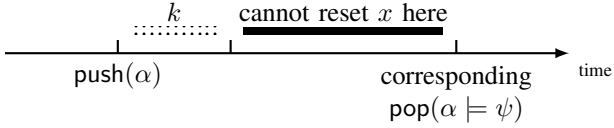
\begin{figure}
	
		\begin{tikzpicture}[x=1.5cm]
	
			\draw[->,thick,>=latex] (0,0) -- (5,0) node [below right] {$\scriptstyle \text{time}$};
			\draw[thick] (1,0) -- ++(0, 5pt) node [below = 1ex] {$\push\alpha$};
			\draw[thick] (2,0) -- ++(0, 5pt);
			\draw[thick] (4,0) -- ++(0, 5pt) node [below = 1ex] {$\begin{array}{cc}\text{corresponding} \\ \pop{\alpha \models \psi}\end{array}$};

			\fill[pattern=dots]([xshift=5pt]1,0.25) rectangle node[above] {$k$} ([xshift=-5pt]2,0.35); 
			\fill([xshift=5pt]2,0.25) rectangle node[above] {cannot reset $x$ here} ([xshift=-5pt]4,0.35);  
		
		\end{tikzpicture}
		
		
\ignore{
	\\[4ex]
	\begin{subfigure}[b]{\linewidth}
		\centering
		\begin{tikzpicture}[x=1.5cm]
	
			\draw[->,thick,>=latex] (0,0) -- (5,0) node [below right] {$\scriptstyle \text{time}$};
			\draw[thick] (0,0) -- ++(0, 5pt);
			\draw[thick] (1,0) -- ++(0, 5pt) node [below = 1ex] {$\push\alpha$};
			\draw[thick] (4,0) -- ++(0, 5pt) node [below = 1ex] {$\begin{array}{cc}\text{corresponding} \\ \pop{\alpha \models \psi}\end{array}$};

			\fill[pattern=dots]([xshift=5pt]0,0.25) rectangle node[above] {$-k$} ([xshift=-5pt]1,0.33); 
			\fill([xshift=5pt]0,0.85) rectangle node[above] {cannot reset $x$ here} ([xshift=-5pt]4,.95);  

		\end{tikzpicture}
		
		\caption{Case $k\leq0$.}
		
	\end{subfigure}
}
	
	\caption{Reset restriction on $x$ when $z - x \lesssim k \in \psi$.}
	\label{fig:reset_restriction}
	
\end{figure}


\begin{figure}
	
	\centering	
	\begin{tikzpicture}[x=1.5cm]

		\draw[->,thick,>=latex] (0,0) -- (5,0) node [below right] {$\scriptstyle \text{time}$};
		\draw[thick] (0,0) -- ++(0, 5pt) node [above = 1ex] {$\push\alpha$};
		\draw[thick] (1,0) -- ++(0, -5pt) node [below = 1ex, xshift = -2ex] {$\push\beta$};
		\draw[thick] (1.5,0) -- ++(0, 5pt);
		\draw[thick] (2.5,0) -- ++(0, -5pt);
		\draw[thick] (4,0) -- ++(0, -5pt) node [below = 1ex, xshift = 2ex] {$\pop{\beta}$};
		\draw[thick] (4.5,0) -- ++(0, 5pt) node [above = 1ex] {$\pop{\alpha}$};

		\fill[pattern=dots]([xshift=5pt]0,0.25) rectangle node[above] {$k$} ([xshift=-5pt]1.5,0.35); 
		\fill([xshift=5pt]1.5,0.25) rectangle node[above] {no reset of $x$} ([xshift=-5pt]4.5,0.35); 

		\fill[pattern=dots]([xshift=5pt]1,-.22) rectangle node[below] {$k$} ([xshift=-5pt]2.5,-.33); 
		\fill([xshift=5pt]2.5,-.25) rectangle node[below] {$\begin{array}{cc} \textrm{no reset} \\ \textrm{of } x \end{array}$} ([xshift=-5pt]4,-.35); 
	
	\end{tikzpicture}
		
	\caption{Current reset restrictions always subsume new ones.}
	\label{fig:reset_restriction_subsumption}
	
\end{figure}
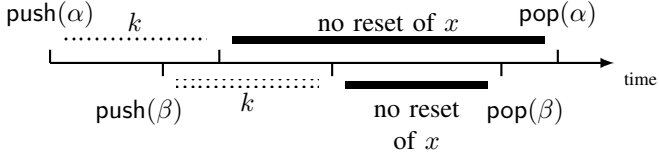

\begin{proof}[Proof (sketch)]

	We explain here the basic idea of the transformation.
	The formal construction can be found in the appendix.
	W.l.o.g. we assume that:
	\begin{inparaenum}
		\item Pop constraints are conjunctions of formulae of the form $z - x \sim k$,
		\item transition rules involving a push or pop operation never reset clocks, and
		\item the initial age of stack symbols pushed on the stack is always 0.
	\end{inparaenum}
	These assumptions will simplify the construction;
	we show in the appendix how an automaton can be modified in order to satisfy them.
	The intuition is that a pop constraint of the form $z - x \lesssim k$ with $\lesssim \ \in \{ <, \leq \}$
	implies that clock $x$ cannot be reset after $k$ (possibly negative) time units of the push and before the corresponding pop.
	We call this a \emph{reset restriction}; cf. Fig.~\ref{fig:reset_restriction}.
	We call a pop constraint $z - x \lesssim k$ \emph{active} if it has been guessed to hold at the time of a future pop.
	To keep track of reset restrictions,
	we carry in the control state a set $R$ of tuples of the form $(x, \lesssim, k)$ for every active pop constraint.
	An extra clock $\hat x_{\lesssim k}$ which is reset at time of push
	is used to check $\hat x_{\lesssim k} \lesssim k$ whenever $x$ is reset
	in order to guarantee that $x$ is not reset too late.
	If $k \leq 0$, then we need to additionally check that $x$ was not reset within the last $-k$ time units,
	which amounts to check $x \gtrsim -k$ at the time of push.
	The crucial observation is that, if a new reset restriction $(x, \lesssim, k)$ arises for an already active constraint,
	then we can safely ignore it 
	since it is always subsumed by the current one.
	In other words, whenever the old restriction is satisfied,
	so is the new one, which is thus redundant; cf. Fig.~\ref{fig:reset_restriction_subsumption}.
		
	The situation for a pop constraint of the form $z - x \gtrsim k$ with $\gtrsim \in \{ >, \geq \}$ is dual,
	since it requires that clock $x$ is reset after at least $k$ (possibly negative) time units of the push and before its corresponding pop.
	We call this a \emph{reset obligation}; cf. Fig.~\ref{fig:reset_obligation}.
	We keep track of a set $O$ of tuples $(x, \gtrsim, k)$ for every active pop constraint $z - x \gtrsim k$,
	meaning that clock $x$ must be reset before the \emph{next pop}.
	When $x$ is reset after $k$ time units of the push, we remove $(x, \gtrsim, k)$ from $O$.
	To verify the latter condition, we use an additional clock $\hat x_{\gtrsim k}$ which is reset at the time of push,
	and we check that $\hat x_{\gtrsim k} \gtrsim k$ holds.
	A new reset obligation with $k \leq 0$ is discarded if $-x \gtrsim k$ already holds at the time of push.
	A pop is allowed only if $O$ is empty, i.e., all reset obligations have been satisfied.
	The crucial observation is that
	a new reset obligation $(x, \gtrsim, k)$ always subsumes one already in $O$,
	in the sense that, whenever the former is satisfied, so it is the latter; cf. Fig.~\ref{fig:reset_obligation_subsumption}.
	Thus, previous obligations are always discarded in favor of new ones
	(this is dual w.r.t. reset restrictions).
	When there is a new push, we have to additionally guess whether obligations in $O$ not subsumed by new ones
	will be satisfied either before the matching pop, or after it.
	In the first case they are kept in $O$,
	while in the second case they are pushed on the stack in order to be put back into $O$ at the matching pop.
	%
\end{proof}
\noindent
The construction uses $\varepsilon$-transitions,
which simplifies substantially the encoding.
A more complex construction not using $\varepsilon$-transitions can be given,
and thus the collapse holds even for $\varepsilon$-free \dtPDA.
We don't know whether diagonal \emph{push} constraints make the model more expressive,
and, in particular, whether the stack can be untimed in this case.
(This potentially more general model would still be subsumed by our orbit-finite \trPDA from Sec.~\ref{sec:oftrPDA}.)

\begin{figure}
	
		\begin{tikzpicture}[x=1.5cm]
	
			\draw[->,thick,>=latex] (0,0) -- (5,0) node [below right] {$\scriptstyle \text{time}$};
			\draw[thick] (1,0) -- ++(0, 5pt) node [below = 1ex] {$\push\alpha$};
			\draw[thick] (2,0) -- ++(0, 5pt);
			\draw[thick] (4,0) -- ++(0, 5pt) node [below = 1ex] {$\begin{array}{cc}\text{corresponding} \\ \pop{\alpha \models \psi}\end{array}$};

			\fill[pattern=dots]([xshift=5pt]1,0.27) rectangle node[above] {$k$} ([xshift=-5pt]2,0.37);
			\fill([xshift=5pt]2,0.25) rectangle node[above] {must reset $x$ here} ([xshift=-5pt]4,0.35);
		
		\end{tikzpicture}
		
		

\ignore{
	\\[4ex]
	\begin{subfigure}[b]{\linewidth}
		\centering
		\begin{tikzpicture}[x=1.5cm]
	
			\draw[->,thick,>=latex] (0,0) -- (5,0) node [below right] {$\scriptstyle \text{time}$};
			\draw[thick] (0,0) -- ++(0, 5pt);
			\draw[thick] (1,0) -- ++(0, 5pt) node [below = 1ex] {$\push\alpha$};
			\draw[thick] (4,0) -- ++(0, 5pt) node [below = 1ex] {$\begin{array}{cc}\text{corresponding} \\ \pop{\alpha \models \psi}\end{array}$};

			\fill[pattern=dots]([xshift=5pt]0,0.25) rectangle node[above] {$-k$} ([xshift=-5pt]1,0.35); 
			\fill([xshift=5pt]0,0.85) rectangle node[above] {must reset $x$ here} ([xshift=-5pt]4,.95);  

		\end{tikzpicture}
		\caption{Case $k\leq0$.}
		
	\end{subfigure}
}

	\caption{Reset obligation on $x$ when $z - x \gtrsim k \in \psi$.}
	\label{fig:reset_obligation}
	
\end{figure}
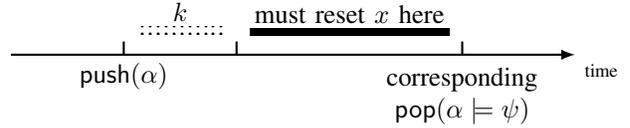

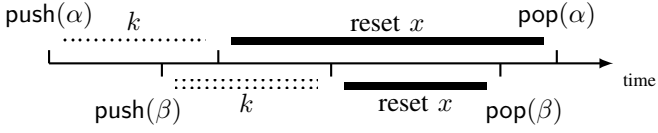
\begin{figure}
	
	\centering	
	\begin{tikzpicture}[x=1.5cm]

		\draw[->,thick,>=latex] (0,0) -- (5,0) node [below right] {$\scriptstyle \text{time}$};
		\draw[thick] (0,0) -- ++(0, 5pt) node [above = 1ex] {$\push\alpha$};
		\draw[thick] (1,0) -- ++(0, -5pt) node [below = 1ex, xshift = -2ex] {$\push\beta$};
		\draw[thick] (1.5,0) -- ++(0, 5pt);
		\draw[thick] (2.5,0) -- ++(0, -5pt);
		\draw[thick] (4,0) -- ++(0, -5pt) node [below = 1ex, xshift = 2ex] {$\pop{\beta}$};
		\draw[thick] (4.5,0) -- ++(0, 5pt) node [above = 1ex] {$\pop{\alpha}$};

		\fill[pattern=dots]([xshift=5pt]0,0.27) rectangle node[above] {$k$} ([xshift=-5pt]1.5,0.37);
		\fill([xshift=5pt]1.5,0.25) rectangle node[above] {reset $x$} ([xshift=-5pt]4.5,0.35); 

		\fill[pattern=dots]([xshift=5pt]1,-.22) rectangle node[below] {$k$} ([xshift=-5pt]2.5,-.33); 
		\fill([xshift=5pt]2.5,-.25) rectangle node[below] {reset $x$} ([xshift=-5pt]4,-.35); 
	
	\end{tikzpicture}
		
	\caption{New reset obligations always subsume current ones.}
	\label{fig:reset_obligation_subsumption}
	
\end{figure}

\ignore{
\begin{figure}
	
		\begin{tikzpicture}[x=1.5cm]
	
			\draw[->,thick,>=latex] (0,0) -- (5,0) node [below right] {$\scriptstyle \text{time}$};
			\draw[thick] (1,0) -- ++(0, 5pt) node [below = 1ex] {$\push\alpha$};
			\draw[thick] (3,0) -- ++(0, 5pt) node [below = 1ex]
				{$\begin{array}{cc}\text{corresponding} \\ \pop{\alpha \models \psi}\end{array}$};			
				
			\fill([xshift=5pt]1,0.25) rectangle node[above] {$x$ reset either here} ([xshift=-5pt]3,0.35); 
			\fill([xshift=5pt]3,0.25) rectangle node[above] {or here} ([xshift=-5pt]5,0.35);  
		
		\end{tikzpicture}
		
		

	\caption{Guess when $x$ is reset.}
	\label{fig:reset_obligation_guess}
	
\end{figure}
}

\ignore{
\begin{remark}
	The construction has exponential complexity in the number of clocks of $\taut$
	due to the guessing of which clocks cannot/must be reset.
	However, the number of clocks of $\uaut$ is \emph{linear} in the size of $\taut$.
	Since regionisation is exponential only in the number of clocks,
	this gives an overall EXPTIME procedure for non-emptiness.
\end{remark}
}


\section{Definable sets and relations}

\label{sec:defin}

In order to go beyond the recognizing power of \dtPDA,
we define automata that use timed registers instead of clocks.
While a clock stores the difference between the current time and the time of its last reset,
a timed register stores an absolute timestamp.
Unlike ordinary clocks, timed registers are suitable for the modeling of non-monotonic time,
and, even in the monotonic setting,
they are more expressive since they can be manipulated with greater freedom than clocks.
While in the semantics of clocks diagonal and non-diagonal constraints are inter-reducible \cite{BerardDiekertGastinPetit:1998:Epsilon},
in the setting of timed registers only diagonal constraints are meaningful.
Consequently, we drop non-diagonal constraints of the form $x \sim k$,
and we redefine the notion of \emph{constraint},
by which we mean a positive boolean combination of formulas 
$x - y \sim k$, where $x, y$ are variables, $k \in \Z$, and $\sim \ \in \{ <, \leq, \geq, > \}$.
We use $=$ and $\neq$ as syntactic sugar.
Constraints are expressively equivalent to the quantifier-free language of the structure $(\Q, \leq, +1)$,
for instance
\[
(x +1 \leq y+1 \ \land \ y \leq x) \ \lor \ \neg (x \leq (y + 1) + 1)
\] 
can be rewritten as a constraint
$x = y \ \lor \ x -y > 2$.
For complexity estimations
we assume that the integer constants are encoded in binary.
%

A constraint $\phi$ over variables $x_1, \ldots, x_n$ defines a subset $\defin{\phi} \subseteq \Q^n$, assuming
an implicit order on variables; $n$ is called the \emph{dimension} of $\phi$, or $\dim{\phi}$. 
In the sequel, we use disjoint unions of sets defined by constraints,
and we call these sets \emph{definable sets}.
Formally, a {definable} set is an indexed set
\begin{align} \label{eq:def set}
X \quad = \quad \{ X_l \}_{l \in L},
\end{align}
where $L$ is a finite index set and for every $l \in L$, the set $X_l = \defin{\phi_l}$ is defined by a constraint.
$(L, \{\phi_l\}_{l \in L})$ is a \emph{constraint representation} of the set~\eqref{eq:def set}.
When convenient we identify $X$ with the disjoint union
\begin{align}  \label{eq:disjoint union}
\biguplus_{l \in L} \ X_l,
\end{align}
and write $\elt{l, v} \in X$ instead of $v \in X_l$.
The automata in this paper will have definable state spaces.
An index  $l \in L$ may be understood as a control location, and a tuple $v \in \Q^n$
may be understood as a valuation of $n$ registers (hence variables may be understood as register names). 
Under this intuition, $\phi_l$ is an invariant that constraints register valuations in a control location $l$.
Similarly, an alphabet letter will contain an element of a finite set $L$,
and a tuple $v \in \Q^n$ conforming to a constraint.

We do not assume that all component sets  $\defin{\phi_l}$ have the same dimension
(in particular, the number of registers may vary from one control location to another).
Observe that when all dimensions are 0, the set~\eqref{eq:disjoint union} is a finite set of the same cardinality as the indexing set $L$. Those sets, as well their elements, we call \emph{timeless}; the elements which are not timeless we call
\emph{\timed}.
When describing concrete definable sets we will omit the formal indexing;
for instance, we will write
\[
\{ l, l', l'' \} \ \uplus \ \Q^2  \qquad \text{ or } \qquad
\{ l, l', l'' \} \ \cup \ \{k \} \times \Q^2
\]
for a set consisting of $\Q^2$ and three other elements.

Along the same lines we define definable (binary) relations. 
For two definable sets
$X = \set{ X_l }_{l \in L}$ and
$Y = \set{ Y_k }_{k \in K}$,
a definable relation $R \subseteq X \times Y$ is an indexed set
\begin{align} \label{eq:fodef rel} 
R \quad = \quad \{ R_{(l, k)} \}_{(l, k) \in L\times K},
\end{align}
where the indexing set is the Cartesian product $L \times K$
and every set $R_{(l, k)}$ is defined by a constraint
satisfying $R_{(l, k)} \subseteq X_l \times Y_k$;
in particular, $\dim R_{(l, k)} = \dim X_l + \dim Y_k$.

Transition relations of automata will be definable relations in the sequel.
The relation $R_{(l, k)}$ is a constraint on
a transition from a control location $l$ to another control location $k$: it prescribes
how a valuation of registers in $l$ \emph{before} the transition may relate to a valuation of registers in $k$
\emph{after} the transition.

Likewise one defines relations of greater arities. Thus a constraint representation of an $n$-ary definable relation
consists of $n$ finite index sets $L_1, \ldots, L_n$, and formulas 
\[
\phi_{(l_1, \ldots, l_n)}, \quad \text{ for } (l_1, \ldots, l_n) \in L_1 \times \ldots \times L_n.
\]
Note that the number of indexes $(l_1,  \ldots, l_n)$ may be exponential in $n$.
When such a relation is input to an algorithm, the presentation is allowed to omit those
formulas which define the empty set, i.e., $\defin{\phi_{(l_1, \ldots, l_n)}} =  \emptyset$.


\vspace{1mm}
\begin{remark}
Constraints are as expressive as first-order logic of $(\Q, \leq, +1)$:
similarly like a constraint, a first-order formula $\phi$ with free variables $x_1, \ldots, x_n$
\emph{defines} the subset $\defin{\phi} \subseteq \Q^n$, and
may be effectively transformed into an equivalent constraint $\psi$, namely one
satisfying $\defin{\phi} = \defin{\psi}$.
%
\end{remark}

\subsection{Orbit-finite sets}

The setting of definable sets is a natural specialisation of the more general setting of 
\emph{sets with timed atoms $(\Q, \leq, +1)$}.
A \emph{time automorphism}, i.e., an automorphism of timed atoms $(\Q, \leq, +1)$,
is a monotonic bijection $\pi : \Q \to \Q$ preserving integer distances, i.e., $\pi(x + k) = \pi(x) + k$ for every $k \in \Z$.
We consider only sets invariant under time automorphism,
which are called \emph{equivariant sets}\footnote{For well-behaved atoms, like equality atoms, 
finitely supported sets can be considered. In case of timed atoms, we restrict ourself to equivariant sets,
i.e., those which are supported by the empty set.}.
In general, equivariant sets are infinite unions of orbits,
where the \emph{orbit} of an element $e$ is
\[
\orbit{e} \ = \  \{ \pi \, e \ : \ \pi \text{ time automorphism} \}.
\]
We restrict our attention to \emph{orbit-finite} sets,
which are those equivariant sets that decompose into a \emph{finite} union of orbits.

A time automorphism $\pi$ acts on any element $e$ by renaming all 
time values $t \in \Q$ appearing in $e$, but leaves
the other structure of $e$ intact. For instance, it distributes on tuples and disjoint unions:
\begin{align*}
	\pi(t_1, \ldots, t_n) \ &= \ (\pi(t_1), \ldots, \pi(t_n)) \\
	\pi(\elt{l, v}) \ & = \ \elt{l, \pi(v)}.
\end{align*}
Thus components $X_l$'s of a definable set are preserved by time automorphisms, and independently partition into orbits.

As an example, the orbit of $(2, 3.3, -1.7)$ is the set
$\set{ (x, y, x') \in \Q^3} {1 < y-x < 2 \ \land \ y - x' = 5}$.
The set $\Q$ has only one orbit
(i.e., $\orbit{x} = \Q$ for every $x \in \Q$),
but the set $\Q^2$ is already orbit-infinite,
its orbits being of the form
\[
\{ (x, y) \ : \ x - y = z\}
\quad \text{ or } \quad
\{ (x, y) \ : \ z < x - y < z+1 \}
\]
for every integer $z \in \Z$. 
Orbits in $\Q^n$ are definable by constraints; we call these constraints \emph{minimal} as
they define the inclusion-minimal nonempty equivariant subsets of $\Q^n$. 
Consequently, every orbit-finite subset of $\Q^n$ is definable. Further, every definable set is equivariant.
On the other hand not every equivariant subset of $\Q^n$ is definable
(e.g., the equivariant set $\set{ (x, y) : x - y \text{ is a prime number}}$),
and not every definable subset is orbit-finite,
due to the orbit-infiniteness of $\Q^n$~\cite{BL12icalp}.

In the sequel whenever we consider an orbit-finite set, we implicitly assume that it is a disjoint union 
of subsets of $\Q^n$, $n \geq 0$, and therefore definable.

Define the \emph{span} of a tuple $v \in \Q^n$ with $n > 0$ as $\max v - \min v$,
the difference between the maximal and the minimal value in $v$,
and for $n = 0$, let the span be $0$ by convention.
\begin{restatable}{lemma}{BoundedSpanLemma} \label{lem:bounded-of}
	An equivariant subset $X \subseteq \Q^n$ is orbit-finite if, and only if, it has uniformly bounded span,
	i.e., it admits a common bound on the spans of all its elements.
\end{restatable}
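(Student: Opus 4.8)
The statement is an iff about orbit-finiteness of an equivariant subset $X \subseteq \Q^n$, and I would prove the two directions separately.

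For the "only if" direction, suppose $X$ is orbit-finite, so $X = O_1 \cup \dots \cup O_m$ for finitely many orbits. Each orbit $O_i$ is $\orbit{v_i}$ for some representative $v_i \in \Q^n$. Since a time automorphism $\pi$ preserves integer distances and is monotonic, one checks that it cannot change the span by more than a bounded amount — in fact I would argue that the span, while not invariant under $\pi$ (non-integer spans can be stretched within the unit interval), stays within the interval $[\lfloor s \rfloor, \lceil s \rceil + 1)$ or so, where $s = \max v_i - \min v_i$. More carefully: write $v_i$'s coordinates in sorted order; consecutive gaps are each either an integer or have a fixed integer part, and $\pi$ preserves the integer part of each gap (this follows from $\pi(x+k) = \pi(x)+k$ and monotonicity: between consecutive integers $\pi$ is an order isomorphism of a unit interval onto a unit interval). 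Hence the span of any element of $O_i$ is bounded by $\operatorname{span}(v_i) + n$ (crudely, each of the at most $n-1$ gaps can grow by at most $1$). Taking $B = \max_i \operatorname{span}(v_i) + n$ gives the uniform bound.

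For the "if" direction, suppose $X$ has all spans bounded by some $B \in \N$. I want to show $X$ is a finite union of orbits, equivalently (by the remark that orbits of $\Q^n$ are exactly the minimal constraints, and definability = orbit-finiteness on subsets of $\Q^n$) that only finitely many orbits of $\Q^n$ meet $X$. An orbit of $\Q^n$ is determined by: (i) the ordering type of the coordinates (a preorder on $\{1,\dots,n\}$, finitely many choices), and (ii) for each pair $i,j$, the value $\lfloor v_i - v_j \rfloor$ together with the information whether $v_i - v_j \in \Z$ — equivalently the "integer differences and their fractional-order pattern." All of this data is pinned down once we know, for the sorted coordinates, the integer part of each consecutive gap and whether it is exactly an integer. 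If every element of $X$ has span $\le B$, then each consecutive gap has integer part in $\{0,1,\dots,B\}$, so there are only finitely many possible gap-patterns, hence finitely many orbits of $\Q^n$ intersecting $X$. Since $X$ is equivariant, it is the union of exactly those orbits, so it is orbit-finite.

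**Main obstacle.** The genuinely careful part is the bookkeeping in both directions relating the span to the orbit invariants — precisely, establishing that the orbit of $v \in \Q^n$ is determined by the sorted sequence of (integer part of gap, is-gap-an-integer) flags, and conversely that a bounded span bounds these flags. This is elementary but needs to be stated cleanly, using the structure of time automorphisms (monotone, integer-distance-preserving) to show they permute orbits while moving spans only by a bounded additive amount. I would phrase it via the normal form: every $v$ can be moved by a time automorphism so that $\min v \in [0,1)$ and all coordinates lie in $[0, \operatorname{span}(v)]$, which makes the finiteness of reachable orbits under a span bound transparent. Everything else is routine.
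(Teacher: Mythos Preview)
Your proof is correct. The paper's proof takes the same overall shape but leans more heavily on the correspondence between orbits and minimal constraints: for the ``only if'' direction it simply observes that each orbit, being defined by a minimal constraint, has bounded span (rather than tracking how automorphisms act on consecutive gaps as you do); for the ``if'' direction it embeds $X$ into the definable set $\{(x_1,\ldots,x_n) : \bigwedge_{i\neq j}\, x_i - x_j \leq k\}$ and asserts that this set is a finite disjunction of minimal constraints, hence orbit-finite, so the equivariant subset $X$ is orbit-finite too. Your version unpacks the same ideas more explicitly---you work out the orbit invariants (ordering type plus integer parts and integrality of consecutive gaps) by hand and count them---which makes the argument self-contained but a bit longer. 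In substance the two arguments are the same; the paper just treats the finiteness of orbits in a bounded-span box as an already-known fact about minimal constraints, whereas you reprove it from scratch.
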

For an orbit $O \subseteq X$,
we will make the notational difference between \emph{the orbit of} $e$, when $O = \orbit{e}$ with $e \in X$,
and \emph{an orbit in} 
$X$, when $O \in \orbit X := \setof {\orbit e} {e \in X}$.

 
\subsection{Normal form}

We prove that every definable set can be transformed into a convenient normal form,
which is like the classical partitioning into regions but without restricting to non-negative rationals.

We say that a tuple $v \in \Q^d$ admits a \emph{gap} $g \in \Q$, $g > 0$, if the set of rationals  
appearing in $v$ can be split into two non-empty sets $L, R \subseteq \Q$ such that 
\[
\max(L) + g = \min(R).
\]
Let a \emph{$g$-extension} of $v$ be any tuple in $\Q^d$ obtained from $v$ by
adding a positive rational $h$  to all elements of $R$ appearing in $v$,
regardless of the choice of sets $L$ and $R$.
(Subtracting $h$ from all elements of $L$ would be equivalent for our purposes 
as the sets we consider are closed under translations.)

Note that 
if $v$ admits an integer gap $k$ then all other tuples in $\orbit{v}$ also do.
If this is the case for some $k \in \Z$,
let the $k$-extension of an orbit $O \subseteq \Q^d$ be the closure of $O$ under $k$-extensions, i.e., the smallest set
containing $O$ that contains all $k$-extensions of all its elements. 
We will build on the specific weakness of constraints: a fixed constraint can not distinguish an orbit $O$ from its $k$-extension 
when $k$ is sufficiently large.

An extension (i.e., a $k$-extension for some integer $k$) of an orbit $O$ is a definable set.
Indeed, a defining constraint is obtained from the minimal constraint $\phi$ defining $O$ by 
syntactically replacing certain equalities $=$ with inequalities $\leq$ (call this constraint extension of $\phi$).
For instance, consider 
\[
\phi(x, y, z, w) \ \equiv \  0 < y -x < 1 \ \land \ z-y = 7 \ \land \  w -z = 7;
\]
its $7$-extension is $0 < y -x < 1 \ \land \ z-y \leq 7 \ \land \  w -z \leq 7$.

\begin{restatable}[Normal Form Lemma]{lemma}{NormalFormLemma}   \label{lem:nf}
Every definable set $X$ decomposes into a finite union of orbits $O \subseteq X$
and of extensions of orbits $O \subseteq X$. A decomposition can be effectively computed in \exptime.
\end{restatable}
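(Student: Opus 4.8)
The plan is to work component-wise: since $X = \{X_l\}_{l\in L}$ is a finite disjoint union, it suffices to decompose each $X_l = \defin{\phi_l} \subseteq \Q^d$ separately and take the union of the results. So fix a single constraint $\phi$ on $d$ variables. The first step is to put $\phi$ into a canonical disjunctive form: by quantifier elimination for $(\Q,\leq,+1)$ (the remark after the definition of constraints), $\defin{\phi}$ is a finite union of ``cells'' each described by a conjunction of atoms $x-y \sim k$. By taking a common refinement (intersecting with all the finitely many minimal constraints whose integer constants are bounded by the largest constant $\kmax$ appearing in $\phi$, say) I can assume each cell is either an \emph{orbit} $O \subseteq \Q^d$, or a union of orbits all of whose defining minimal constraints agree except that some equalities $x - y = k$ have been relaxed to strict inequalities $k < x - y < k+1$, $\ldots$ — i.e. the cell is a finite or infinite union of orbits differing only in how far apart two ``blocks'' of coordinates sit.

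The second and central step is to show that any such infinite union of orbits, over an unbounded range of block-distances, is exactly a $k$-extension of a single orbit once the block-distance exceeds $\kmax$. Concretely: group the $d$ coordinates into maximal ``clusters'' that are pinned together by equalities within the cell; between two consecutive clusters the cell only constrains the gap $g$ to lie in some set $S \subseteq \Q_{>0}$ determined by the atoms of $\phi$, and because $\phi$'s constants are all $\le \kmax$, the set $S$ is eventually an up-set: either $S \cap (\kmax,\infty)$ is empty, or it is all of $(\kmax,\infty)$. In the latter case, the sub-union of orbits in the cell with that gap $> \kmax$ is precisely the $k$-extension (for $k = \kmax$, or any larger integer witnessed in the cell) of the orbit obtained by setting that gap to exactly $k$ — this is literally the syntactic ``replace $=$ by $\leq$'' operation described before the lemma, iterated over all clusters whose inter-gap can grow unboundedly. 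The finitely many orbits with \emph{bounded} gaps (all gaps $\le \kmax$, hence bounded span, hence orbit-finitely many by Lemma~\ref{lem:bounded-of}) are listed explicitly. This yields the decomposition as a finite union of orbits and extensions of orbits.

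For the complexity claim, I would observe that the number of clusters is at most $d$, the relevant integer constant $\kmax$ is at most exponential in the binary encoding size, and the number of minimal constraints on $d$ variables with constants bounded by $\kmax$ is exponential in the input size; enumerating them, testing emptiness of each intersection with $\phi$ (a polynomial-time check on difference constraints, e.g. via Bellman--Ford on the constraint graph), and reading off which gaps are unbounded, all fit within \exptime. Summing over the finitely many components $l \in L$ keeps us in \exptime.

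The main obstacle I expect is the bookkeeping in the second step: making precise the claim that ``once the gap exceeds $\kmax$, a constraint cannot tell the orbits apart'', when there are \emph{several} independent inter-cluster gaps that may each be individually bounded or unbounded, and the clusters themselves may be nested in a non-trivial interval order rather than lying on a line. The clean way to handle this is to first normalize to a total order on clusters — observe that within a single orbit the coordinates do sit in a fixed weak order, so a cell already fixes the cluster order — and then treat each consecutive-cluster gap independently, since the defining constraint, being a positive boolean combination of binary atoms, decomposes its constraints on distinct gaps. Once that independence is isolated, the one-dimensional eventual-up-set argument handles each gap, and the product of a finite orbit-set with unboundedly-growing gaps is exactly the iterated extension.
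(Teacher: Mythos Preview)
Your proposal is correct and rests on the same core insight as the paper: a constraint whose constants are bounded by $K$ cannot distinguish an orbit from its $K$-extension, so the orbit-infinite part of $X$ is covered by finitely many extensions while the bounded-span part is a finite list of orbits. The paper's proof is organized more directly than your top-down cell/cluster decomposition, however. It fixes $K$ strictly larger than every constant in $\phi_l$, enumerates all minimal constraints of span at most $(d-1)K$, tests each for inclusion in $X_l$ (polynomial time per orbit, by evaluating $\phi_l$ on the interval data of the minimal constraint), and for every included orbit that admits a gap $K$ it adds the $K$-extension. The multiple-gap bookkeeping you flag as the main obstacle is then absorbed into two short claims stated without your cluster language: first, if $O \subseteq X$ admits a gap $K$ then its full $K$-extension (the closure under extending \emph{any} gap of size $K$, possibly several in succession) remains inside $X$; second, by pigeonhole any orbit of span exceeding $(d-1)K$ must have some consecutive pair of values at distance at least $K$, and collapsing such gaps lands in an orbit of span at most $(d-1)K$ that is still contained in $X$. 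This bottom-up enumeration avoids having to argue that each unbounded ``cell'' is itself a single extension (which, as you may have noticed, it need not be once boundary orbits with gap exactly $K$ are excluded), at the cost of producing a somewhat redundant but still \exptime-sized list.
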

\noindent
For orbit-finite $X$, the lemma yields an effective enumeration of orbits in $X$, since extensions of orbits,
being orbit-infinite, do not appear
in the decomposition of $X$:
\begin{corollary} \label{cor:nf}
A decomposition of an orbit-finite definable set $X$ into orbits is computable in \exptime.
\end{corollary}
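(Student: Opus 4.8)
The plan is to obtain the corollary as an immediate consequence of the Normal Form Lemma (Lemma~\ref{lem:nf}), using the span characterization of orbit-finiteness (Lemma~\ref{lem:bounded-of}) to rule out the orbit-infinite pieces. First I would run the \exptime\ algorithm of Lemma~\ref{lem:nf} on the input $X$, obtaining a decomposition of $X$ into finitely many pieces, each of which is either an orbit $O \subseteq X$ or a $k$-extension (for some integer $k$) of an orbit $O \subseteq X$. The key claim is that, when $X$ is orbit-finite, no piece of the second kind can occur.

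To see this, observe that a $k$-extension of an orbit $O \subseteq \Q^d$ contains, by definition, $k$-extensions of its elements obtained by adding an arbitrary positive rational $h$ to the part of a tuple lying beyond an integer gap; letting $h$ grow without bound produces tuples of arbitrarily large span. Hence every extension of an orbit is an equivariant set of unbounded span, and therefore orbit-infinite by Lemma~\ref{lem:bounded-of}. If such an extension were contained in $X$, then $X$ itself would contain tuples of arbitrarily large span and would likewise be orbit-infinite, again by Lemma~\ref{lem:bounded-of}, contradicting our hypothesis. Consequently every piece returned by the algorithm is an orbit contained in $X$; since their union is $X$, these orbits cover every orbit of $X$, and after discarding duplicates we recover exactly the finite set $\orbit{X}$.

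For the complexity bound, the only work beyond the single \exptime\ call to Lemma~\ref{lem:nf} is to inspect the constraint defining each returned piece and keep those that are minimal (equivalently, discard those that are proper constraint extensions), which is a syntactic test running in time polynomial in the size of the output of Lemma~\ref{lem:nf}. Thus the whole procedure stays within \exptime. The one point that needs genuine care — and where I expect the substance of the argument to lie — is the step above: one must argue that the extension pieces are \emph{provably} absent for orbit-finite $X$, not merely usually absent, and this is precisely what the bounded-span criterion of Lemma~\ref{lem:bounded-of} delivers, so no further ingredient should be required.
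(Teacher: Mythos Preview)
Your proposal is correct and follows exactly the paper's own argument: apply the Normal Form Lemma and observe that extensions of orbits, being orbit-infinite (which you justify via Lemma~\ref{lem:bounded-of}), cannot occur in the decomposition of an orbit-finite $X$, leaving only orbits. The paper states this in a single sentence, and your write-up simply fleshes out the same reasoning with a bit more detail on the span argument and the post-processing.
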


\begin{example}
	Consider the following set 
	$X = \setof{ (x, y, z) \in \Q^3}{0 < y-x < 1 \land z-y > 3}$.
	One possible decomposition of
	$X$ consists of orbits in $X$ that do not admit a gap larger than $4$, and of $4$-extensions of all orbits that
	admit a gap $4$, for instance the $4$-extension of the orbit
	$\{ (x, y, z) \in \Q^3 \ : \ 0 < y-x < 1 \ \land \  z-y = 4  \}$.
\end{example}

\noindent
Thanks to the Normal Form Lemma, we define the \emph{normal form} of constraints, i.e., disjunction of 
minimal constraints and extensions thereof. 
In the sequel we assume whenever convenient that the constraint representations of definable sets 
are already in normal form.
The exponential blowup introduced by this transformation will combine well with the polynomial complexity w.r.t.~the
normal form representations, thus yielding the exponential time overall complexity. 

A relevant property of normal form sets is that they admit easy computation of projections:
\begin{lemma}[Projection Lemma] \label{lem:proj}
	Given a definable set $X \subseteq \Q^d$ in normal form, its projection 
	onto a subset of coordinates $\{1 \ldots d\}$, in normal form, is computable in polynomial time.
\end{lemma}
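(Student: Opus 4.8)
The plan is to exploit the normal form directly: a set $X \subseteq \Q^d$ in normal form is a finite disjunction of minimal constraints and extensions of minimal constraints, so it suffices to show that the projection of a single such disjunct onto a coordinate subset $I \subseteq \{1,\dots,d\}$ is again (a disjunction of) minimal constraints and extensions thereof, computable in polynomial time; the whole projection is then the union of these, still polynomially bounded since projection does not increase the number of disjuncts. So first I would reduce to the case where $X$ is the $[\phi]$ for a single minimal constraint $\phi$, or the $k$-extension of such, and moreover to projecting out a single coordinate at a time (iterating $d$ times keeps the blowup polynomial because, as I will argue, each step does not increase the disjunct count).

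Next I would describe a minimal constraint $\phi(x_1,\dots,x_d)$ concretely. Minimal constraints define orbits in $\Q^d$, which by the earlier discussion are exactly described by fixing, for every pair $(i,j)$, either an exact integer difference $x_i - x_j = k_{ij}$ or an open unit band $k_{ij} < x_i - x_j < k_{ij}+1$; equivalently, a minimal constraint is given by a linear order on the fractional parts of the coordinates together with an assignment of the integer parts of all pairwise differences (the span being bounded, by Lemma~\ref{lem:bounded-of}, these data are finite and of polynomial size under binary encoding). To project out the last coordinate $x_d$, the key observation is that $x_d$ is pinned relative to the others: its position in the fractional order and its integer offsets are all prescribed by $\phi$, so a witnessing value of $x_d$ always exists once $x_1,\dots,x_{d-1}$ are fixed consistently. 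Hence $\exists x_d\, \phi$ is obtained by simply erasing from the minimal data every entry mentioning index $d$, and this again describes an orbit, i.e.\ a minimal constraint on $d-1$ variables. For a $k$-extension of $\phi$, recall that extending replaces certain equalities across the gap by $\le$; the same erasure argument applies, except one must check that if the projected-out variable $x_d$ sat on the boundary of (or inside) the slack band introduced by the extension, removing it still yields the extension of the projected minimal constraint — which it does, because the slack is attached to a gap between blocks of coordinates and removing one coordinate only shrinks a block or leaves the gap intact.

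The main obstacle, and the one step that needs care, is precisely this last point: making sure that eliminating $x_d$ from an \emph{extended} minimal constraint still lands in normal form (a disjunction of minimal constraints and their extensions), rather than producing a constraint that is equivalent to such a disjunction only after further rewriting. Concretely, if $x_d$ was the \emph{sole} coordinate witnessing one side of a gap, then after projection the two blocks separated by that gap may need to be recombined or the gap may disappear, potentially splitting into a small number of cases; one must bound this case split by a constant (independent of $d$) so that iterating the single-coordinate projection $d$ times stays polynomial. I would handle this by a direct syntactic case analysis on whether $x_d$ is interior to a block, a singleton block adjacent to a gap, or one of several coordinates in a boundary block, showing in each case that the result is a single minimal constraint or a single extension — no disjunction is introduced — so the disjunct count of the whole representation is non-increasing and the total running time is polynomial in the size of the input representation.
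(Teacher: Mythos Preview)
Your proposal is correct and follows essentially the same approach as the paper, which dispatches the lemma in a single sentence: projection distributes over the disjunction in the normal form, and projecting a single minimal constraint or an extension thereof ``is computed essentially by elimination of variables.'' Your reduction to one disjunct and one coordinate at a time, together with the block/gap case analysis to verify that no new disjuncts are introduced, is precisely the detail the paper leaves implicit.
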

\noindent
Indeed, projection distributes over disjunction, and projection of
a minimal constraint, or of extension thereof, is computed essentially by elimination of variables.


\section{\Tr pushdown automata}  \label{sec:trPDA}


\ignore{
	As a preparation consider the classical definition of NFA consisting of a finite input alphabet $A$, finite set of states $Q$, initial and 
	final states $I, F \subseteq Q$, and a transition relation $\delta \subseteq Q \times A \times Q$.
	Let's reinterpret (and generalize) this definition by dropping the requirement that the alphabet and 
	states are finite sets;
	instead, we require $Q$, $A$, $I$ and $F$ to be orbit-finite definable sets, and $\delta$ to be 
	a definable relation\footnote{Assuming the transition relation to be orbit-finite, i.e., 
	of uniformly bounded span, would be too restrictive, as it would impose a bound on
	change of time values stored in the state, in one transition.}
	Call this generalized model \emph{\tr NFA}.

	The semantics of a time-register NFA is defined as usual, including runs, accepting runs, and the language
	$L \subseteq A^*$ recognized by an automaton. 
	We often write $\trans{q}{a}{q'}$ instead of $(q, a, q') \in \delta$.
	As usual, we will allow for epsilon-transitions in NFA, i.e., transitions (labeled by a special symbol $\eps \in A$) 
	which do not read an input letter.

	How do \tr NFA relate to classical timed automata~\cite{AD94}?
	A timed automaton stores in a clock the difference between the current time value 
	and some past one (the time when the clock was last reset),
	and compares the difference to an integer constant;
	a \tr automaton stores just a time value in a register, but likewise compares a difference to a constant.
	If one assumes timed automata to have uninitialised clocks (see~\cite{BL12icalp} for details),
	timed automata are easily shown to correspond to a subclass of \tr NFA, where: 
	the input alphabet is 1-dimensional; only monotonic input words are considered,
	i.e., input time values appear in non-decreasing order;
	all states have the same dimension; and registers may be updated only with the current input time value.
}


We define a new model of timed PDA by reinterpreting the standard presentation of PDA in the setting of 
definable sets. 
Our approach generalized the approach of~\cite{BL12icalp} where NFA were considered.
Classical PDA can be defined in a number of equivalent ways. 
In the setting of this paper, the choice of definition will be critical for tractability.
In the most general variant, a PDA $\aaut$ consists of a finite input alphabet $A$,
a finite set of states $Q$, initial and final states $I, F \subseteq Q$, 
a finite stack alphabet $S$, and a finite set of transition rules 
\begin{align*}
	\rho \ \subseteq \ (Q \times S^*) \times A_\varepsilon \times (Q \times S^*),
\end{align*}
where $A_\varepsilon = A \cup \set{\varepsilon}$.
The semantics of a PDA is defined as usual.
A transition rule $(q, v, a, q', v') \in \rho$ describes a transition which reads input $a$,
changes state from $q$ to $q'$, pops a sequence of symbols $v$ from the stack and replaces it by $v'$.
Formally, the transitions of a PDA are between configurations $c, c' \in Q \times S^*$, and 
$(q, v, a, q', v') \in \rho$ induces a transition $\trans{c}{a}{c'}$ if 
$c = (q, v w)$ and $c' = (q', v' w)$ for some $w \in S^*$.
Similarly, one defines unlabeled transitions $\trans{c}{}{c'}$,  
the reachability relation $\transtrans{c}{}{c'}$,
runs, accepting runs (runs starting in a state from $I$ with empty stack,
and ending in a state from $F$ with arbitrary stack), 
and the language $L(\aaut)\subseteq A^*$ accepted by a PDA $\aaut$.

We reinterpret the definition of PDA by dropping the finiteness of the components.
Instead, we require $Q, A, S, I$ and $F$ to be orbit-finite (and, thus, definable), and the relation~$\rho$ to be definable.
The \emph{dimension} of a PDA is the maximal dimension of its states $Q$.
These orbit-finiteness requirements are necessary to obtain a model with decidable emptiness,
since it has been shown in \cite{BL12icalp} that having orbit-infinite states leads to undecidability already in NFA.
Since $Q$ is orbit-finite, by Lemma~\ref{lem:bounded-of} there exists a uniform bound 
on the span of every vector in $Q$.

Note that $\rho$, being definable, is necessarily a subset of 
\begin{align} \label{eq:rules m n}
	\rho \ \subseteq \ (Q \times S^{\leq n}) \times A_\varepsilon \times (Q \times S^{\leq m}),
\end{align}
for some $n, m \in\nat$, where $S^{\leq n} = S_0 \cup S \cup S^2 \cup \ldots \cup S^n$,
where $S_0 = \{\ew\}$.
The generalized model we call \emph{\tr PDA} (\trPDA).
Most importantly, the semantics of \trPDA is defined \emph{exactly as} the semantics of classical PDA.
We assume acceptance by final state.
This is expressively equivalent to acceptance by empty stack, or by final state and empty stack.

By the \emph{size} of a \trPDA we mean the size of its constraint representation, i.e., the
sum of sizes of all defining constraints, where we assume that integer constants are encoded in binary.

As already in the case of NFA, also for PDA imposing an orbit-finiteness restriction on $\rho$ would be too restrictive,
in the sense that the model would recognize a strictly smaller class of timed languages than with unrestricted $\rho$.
Example~\ref{ex:pda1} illustrates this, and shows the interaction between 
timed symbols in the stack, state, and input.
\begin{example}
	\label{ex:pda1}
	Consider the input alphabet $A = \defin{\phi}$, where
	$\phi(x, y)  \ \equiv \ x < y < x+4$,
	and the language $L$ of even-length monotonic palindromes over $A$,
	i.e., $L = \{(u_1, v_1) \ldots (u_{2n}, v_{2n}) \in A^* \ | \ 
	u_1 \leq \ldots \leq u_n \textrm{ and } (u_i, v_i) = (u_j, v_j) \textrm{ for every } 1 \leq i \leq 2n \textrm{ and } j = 2n + 1 - i\}$.
	A \trPDA recognizing this language has state space of dimension 1 (i.e., 1 register)
	$Q = \{ i \} \uplus \{ 1 \} \times \Q \uplus \{2, f\}$,
	with $i$ and $f$ the initial and final states, respectively.
	The stack alphabet $S = A \uplus \{\bot\}$ extends the input alphabet by the symbol $\bot$.
	There are three groups of $\varepsilon$-transition rules, namely
	\[
	(i, \varepsilon, \varepsilon, (1, t), \bot), \qquad ((1, t), \varepsilon, \varepsilon, 2, \varepsilon), \qquad (2, \bot, \varepsilon, f, \varepsilon),
	\]
	for any $t \in \Q$,
	used to initiate the first half, to change to the second half, and to finalize the second half of 
	a computation of an automaton. 
	In state $(1, t)$ the automaton pushes an input letter $(u, v)$ to the stack,
	while checking for monotonicity $t \leq u$, as described by the transition rules, for $t \leq u$,
	\[
	((1, t), \varepsilon, (u, v), (1, u), (u, v)) \in Q \times S^0 \times A \times Q \times S. 
	\]
	Finally, in state $2$ the automaton pops a symbol $(u, v)$ from the stack, while checking for equality with
	the input letter, as described by the transition rules:
	\[
	(2, (u, v), (u, v), 2, \varepsilon) \in  Q \times S \times A \times Q \times S^0. 
	\]
	Observe that we can not require the set $\rho$ of transition rules to be orbit finite. Indeed, this  would impose a
	bound on the span of tuples in $\rho$, in particular on the difference $u - t$ in the push transition rules, and therefore
	also on the differences $u_{i+1} - u_i$ between consecutive input letters.
\end{example}
%
%


The \emph{non-emptiness problem} asks
whether the language recognised by a given \trPDA is non-empty.
We observe that the problem is undecidable for general \trPDA.
\begin{restatable}{theorem}{ThmUndecid} \label{thm:undecid}
	Non-emptiness of \trPDA is undecidable.
\end{restatable}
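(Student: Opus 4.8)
The plan is to encode a two-counter Minsky machine into a \trPDA, using the stack to simulate one counter and the timed registers to simulate the other (or, more robustly, using the stack contents themselves — which carry timed data — to store both counter values). The key observation enabling undecidability is exactly the point stressed in the paragraph preceding Example~\ref{ex:pda1}: the transition relation $\rho$ is only required to be \emph{definable}, not orbit-finite, so a single push rule may relate a register value $t$ in the state to an arbitrarily distant value $u$ with $t \le u$, with no a priori bound on the difference. This unbounded spread is what lets us represent an unbounded natural number by a \emph{distance} between two timed values.

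First I would fix the representation of a counter value $n \in \nat$ as a pair of rationals $(x, y)$ with $y - x = n$. Incrementing means producing a new pair $(x', y')$ with $y' - x' = n+1$; this is a definable relation on $\Q^2 \times \Q^2$ (one can take $x' = x$, $y' = y + 1$, or more flexibly any pair satisfying $y' - x' - (y - x) = 1$). Decrementing is the analogous definable relation subtracting $1$, and the zero-test ``$n = 0$'' is the definable constraint $x = y$, while ``$n > 0$'' is $y - x > 0$ (equivalently $y - x \ge 1$ since we maintain the invariant that the difference is a nonnegative integer). The subtlety is maintaining the \emph{integrality} invariant $y - x \in \nat$: since all updates add or subtract exactly $1$ and the initial pair is $(x_0, x_0)$, integrality is preserved automatically, and crucially it never needs to be tested — the decrement rule simply has a guard $y - x > 0$, and the machine's own control flow guarantees we only decrement when the simulated counter is positive; if the simulation ever cheats by decrementing a zero counter it will just get stuck (or, with the guard, be blocked), which is harmless for a reachability/non-emptiness reduction.

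Next I would lay out the simulation itself. One clean option: keep counter $c_1$ in the state (dimension-$2$ register block) and counter $c_2$ on the stack, where each stack symbol carries a $2$-tuple and the topmost pair records $c_2$; a push/pop combination lets the automaton read and rewrite the top, implementing increment, decrement, and zero-test of $c_2$ by definable push/pop rules exactly as $c_1$ is handled in the state. Since only the top of the stack is ever consulted, we never actually need the LIFO depth — we can keep the stack at height one and use $\push{}$ then $\pop{}$ purely to get read-modify-write access to a timed value. (Alternatively, and perhaps more transparently, one stores \emph{both} counters in the top stack symbol as a $4$-tuple, so the whole configuration graph of the Minsky machine is simulated with a bounded-height stack.) The \trPDA has one control location per Minsky-machine instruction, all transitions are $\varepsilon$-transitions, and we set the final state to be the machine's halting state. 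Then $L(\aaut) \ne \emptyset$ iff $\varepsilon \in L(\aaut)$ iff the Minsky machine halts. Since halting of two-counter machines is undecidable, so is non-emptiness of \trPDA.

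The main obstacle — and the only place requiring care — is verifying that every gadget is genuinely \emph{definable} in the sense of Sec.~\ref{sec:defin}, i.e., a finite disjoint union of sets cut out by constraints over $(\Q, \le, +1)$, and in particular that the increment step is expressible: it is, since $\{\, ((x,y),(x',y')) : y' - x' = (y - x) + 1 \,\}$ is defined by the constraint $(x - x') - (y - y') = -1$ (a single diagonal equation after rearrangement), which is exactly of the allowed form $x - y \sim k$. One should also check the boundary behaviour of zero-tests and the initialisation, but these are one-line constraints ($x = y$, and an $\varepsilon$-transition from $I$ that guesses $x_0$ and sets both registers to it). I expect no genuine difficulty beyond bookkeeping; the conceptual content is entirely in the remark that dropping orbit-finiteness of $\rho$ already breaks decidability, so the theorem is essentially a formalisation of that remark via the counter encoding $n \mapsto (x, x+n)$.
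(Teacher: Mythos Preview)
Your construction does not produce a valid \trPDA. The model requires $Q$ and $S$ to be \emph{orbit-finite}; only the transition relation $\rho$ is allowed to be merely definable. Encoding a counter value $n$ as a pair $(x,y)\in\Q^2$ with $y-x=n$ forces the state space (and, in your variants, the stack alphabet) to contain tuples of unbounded span, and by Lemma~\ref{lem:bounded-of} any such set is orbit-infinite. The bounded-height alternative where both counters sit in a $4$-tuple on top of the stack fails for the same reason: $S$ becomes orbit-infinite. Note that with orbit-infinite $Q$ undecidability already holds for NFA, which is exactly why the paper imposes orbit-finiteness on $Q$ and $S$ in the first place; the content of Theorem~\ref{thm:undecid} is that \emph{even with orbit-finite $Q$ and $S$} non-emptiness remains undecidable.

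The paper's proof is engineered around this constraint. Both $Q$ and $S$ are taken $1$-dimensional (a finite set times $\Q$), hence orbit-finite. Counter $n_1$ is encoded in \emph{unary} as the stack height, with symbols $(\bot,t),(\top,t{+}1),\ldots,(\top,t{+}n_1)$ whose timestamps increase by $1$ along the stack; counter $n_2$ is the difference between the register in the state and the register in the current top-of-stack. The crucial point is that the unbounded quantity $n_2$ never resides in $Q$ or in $S$ alone: it only appears in the pair $(q,s)\in Q\times S$, which is part of a transition rule, and $\rho$ \emph{is} permitted to be orbit-infinite. Zero-test of $n_1$ inspects whether the top symbol is $\bot$; zero-test of $n_2$ is the equality of the two registers; increment of $n_1$ reads the current top $(\top,u)$ and pushes $(\top,u{+}1)$ on top of it, which is why the proof needs $n=1$, $m=2$ in~\eqref{eq:rules m n}.

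A smaller side issue: your ``more flexible'' increment relation $y'-x'=(y-x)+1$ is a genuine four-variable linear constraint and is \emph{not} equivariant under automorphisms of $(\Q,\leq,+1)$, hence not definable by a boolean combination of atoms $a-b\sim k$. The rigid version $x'=x\wedge y'-y=1$ that you also mention is fine, so this alone would not have been fatal; the orbit-finiteness of $Q$ and $S$ is the real obstruction.
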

The undecidability of the general model motivates us to consider several restrictions of \trPDA
for which we can show decidability of the non-emptiness problem.
We consider \tr context-free grammars in Sec.~\ref{sec:CFG},
orbit-finite \trPDA in Sec.~\ref{sec:oftrPDA},
and \trPDA with timeless stack in Sec.~\ref{sec:timelesstrPDA}.


\subsection{\Tr context-free grammars}

\label{sec:CFG}


Context-free grammars are PDA with one state
where each transition pops exactly one symbol off the stack.
A \emph{\tr context-free grammar} (\trCFG) $\g$ consists of the following items:
an orbit-finite set $S$ of symbols,
a starting symbol $I \in S$ which is initially pushed on the stack,
an orbit-finite input alphabet $A$,
and a definable set of productions
\[
\rho \ \subseteq \ S \times A_\varepsilon \times S^*.
\]
Acceptance is by empty stack, i.e., when all symbols are popped off the stack.
We call languages recognized by \trCFG \emph{\tr context-free languages}.

\begin{example}
	\label{ex:timed_palindromes}
	Let $A = \Q$, and consider the language $L$ of timed palindromes of even length,
	i.e., $L = \setof{x_1 \cdots x_{2n} \in \Q^*} {\forall (1 \leq i \leq n) \cdot x_i = x_{2n - i + 1}}$.
	This language can be recognized by a \trCFG with symbols $S = \set 1 \uplus \set 2 \times \Q$
	and productions of the form
	$\rho = \setof{(1, x, 1 \cdot (2, y)), (1, x, (2, y)), ((2,x), y, \varepsilon)} {x, y \in \Q \cdot x = y}$.
	We will see later that this language cannot be accepted by \trPDA with timeless stack.
\end{example}

Define the \emph{untiming} of a word $a_1 \ldots a_n \in A^*$ over an orbit-finite alphabet $A$ 
as its projection to orbits
$\orbit{a_1} \ldots \orbit{a_n} \in \Sigma^*$,
where $\Sigma = \text{orbits}(A)$.
Untiming naturally extends to languages.
In the lemma below we show that the untiming of a language of \trCFG is context-free.
This contrasts with languages of \trPDA; cf. Example~\ref{ex:pda2}.
Therefore, \trCFG are weaker than general \trPDA.

\begin{restatable}{lemma}{LemtrCFL}
	\label{lem:trCFL}
	The untiming of a \tr context-free language is effectively context-free.
\end{restatable}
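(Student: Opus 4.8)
The plan is to show that the untiming operation commutes with the grammar structure, so that a \trCFG $\g$ over the timed alphabet $A$ induces, after untiming, an ordinary context-free grammar over the finite alphabet $\Sigma = \text{orbits}(A)$. The key point is that although $\g$ has orbit-finitely many symbols and a definable (hence possibly orbit-infinite) production relation, its behaviour only ever depends on the \emph{orbit} of the symbols, letters, and register valuations involved, up to the limited precision that constraints can express. So the first step is to fix a constraint representation of $S$, $A$ and $\rho$, and then apply the Normal Form Lemma (Lemma~\ref{lem:nf}) and Corollary~\ref{cor:nf} to enumerate the finitely many orbits of $S$ and of $A$; write $\Sigma_S = \text{orbits}(S)$ and $\Sigma = \text{orbits}(A)$.

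The second step is to define a classical CFG $\g'$ with nonterminals $\Sigma_S$, terminals $\Sigma$, start symbol $\orbit I$, and a production $\sigma \to \orbit a \cdot \sigma_1 \cdots \sigma_k$ (resp.\ $\sigma \to \sigma_1 \cdots \sigma_k$ for $\varepsilon$-productions) whenever there exist a symbol $s \in \sigma$, a letter $a \in \orbit a$, and symbols $s_i \in \sigma_i$ with $(s, a, s_1 \cdots s_k) \in \rho$; since $\rho$ is definable and there are finitely many orbit-tuples, this is a finite set of productions, and each such production is decidable by a projection/emptiness check on the defining constraint of $\rho$ (using Lemma~\ref{lem:proj}). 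The heart of the argument is the correctness claim: $L(\g') = \text{untiming}(L(\g))$. The inclusion $\supseteq$ is immediate, since any derivation of $\g$ projects orbitwise to a derivation of $\g'$. For the inclusion $\subseteq$ one argues by induction on the length of a $\g'$-derivation: given a $\g'$-derivation of $\orbit{a_1}\cdots\orbit{a_n}$ from $\sigma$, one lifts it to a $\g$-derivation, choosing at each step concrete register values consistent with the previously chosen ones. The subtlety is that a production of $\g$ that is applicable to \emph{some} $s \in \sigma$ need not be applicable to the particular $s$ produced so far; here one uses equivariance of $\rho$: if $(s', a', w') \in \rho$ with $s' \in \orbit s$, then applying the time automorphism $\pi$ sending $s'$ to $s$ yields $(s, \pi a', \pi w') \in \rho$ with $\pi a' \in \orbit{a'}$, so the same orbit-level step is realizable from $s$ with a suitably renamed letter. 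Iterating this and reading off the consumed letters gives a word in $L(\g)$ whose untiming is $\orbit{a_1}\cdots\orbit{a_n}$.

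The main obstacle is precisely this lifting step: one must be careful that the renamings applied at different nodes of the derivation tree are mutually compatible. The clean way to handle this is to process the derivation tree top-down, maintaining the invariant that the symbol at each already-visited node has been assigned a concrete value; when expanding a node carrying a concrete symbol $s \in \sigma$ by a $\g'$-production coming from some $(s', a', w') \in \rho$, apply the automorphism $\pi$ with $\pi s' = s$ to the whole rule, producing concrete children $\pi w'$ and a concrete letter $\pi a'$. Because each node is expanded independently once its symbol is fixed, no global consistency constraint arises beyond what the single rule application already guarantees, so the induction goes through. The effectiveness statement then follows since all the orbit enumerations (Corollary~\ref{cor:nf}) and the production-membership tests (Lemmas~\ref{lem:proj} and emptiness of constraints) are computable.
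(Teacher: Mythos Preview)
Your proposal is correct and follows essentially the same approach as the paper: enumerate the orbits of $S$ and $A$ (via Corollary~\ref{cor:nf}), take them as nonterminals and terminals of an ordinary CFG $\g'$, and include a production whenever the existential orbit-level condition on $\rho$ holds. The paper's proof is terser---it simply asserts the final equivalence ``$\g$ recognizes a timed word iff $\g'$ recognizes its untiming'' without justification---whereas you spell out the lifting direction via equivariance and the top-down automorphism-renaming argument, which is the right way to fill that gap.
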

%
%

\begin{restatable}{theorem}{ThmCFG}
	\label{thm:trCFG}
	Non-emptiness problem of \trCFG is \exptime-complete.
\end{restatable}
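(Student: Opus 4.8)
The plan is to establish the two directions separately. For the \textbf{lower bound}, I would reduce from the non-emptiness (equivalently, the membership-of-a-fixed-word) problem for alternating Turing machines running in polynomial space, or more directly exploit the fact that the control structure of a \trCFG already subsumes that of classical timed automata, whose reachability problem is \pspace-complete; but since we are aiming at \exptime, the cleaner route is to reduce from the acceptance problem of alternating polynomial-space Turing machines, or alternatively from the non-emptiness problem of the intersection of exponentially many finite automata encoded succinctly through the binary encoding of integer constants. Concretely: the binary encoding of constants means a single constraint $x - y \sim k$ can force a difference to lie in a range of exponential size, and region-like reasoning on such constraints naturally produces an exponential state space. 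I would encode the configurations of an alternating linear-bounded automaton (or an \exptime-bounded alternating machine with a counter) using the differences between register values modulo the large constants, with the context-free stack simulating the alternation/recursion. This is the standard pattern (cf. the \exptime-hardness of emptiness for pushdown systems over succinctly-encoded data), and I expect it to be routine once the encoding is fixed.

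For the \textbf{upper bound}, the plan is to first put the \trCFG in normal form using Lemma~\ref{lem:nf} (the Normal Form Lemma) and Corollary~\ref{cor:nf}, paying an exponential blow-up once, so that every symbol set $S$, the input alphabet $A$, and the production relation $\rho$ are presented as finite disjoint unions of orbits (and extensions of orbits, though extensions will not matter for the orbit-finite components). The key observation is that non-emptiness of a context-free grammar depends only on which non-terminals \emph{derive some terminal string}, i.e., it is the least fixpoint of the productivity predicate; for a \trCFG, productivity is an equivariant property of symbols, hence it is a union of orbits of $S$. So the algorithm is: compute, by a least-fixpoint iteration over the finitely many orbits of $S$, the set of productive orbits, where an orbit $O$ is productive if there is a production $(s, a, s_1 \cdots s_k) \in \rho$ with $s \in O$, $a$ some input letter (or $\varepsilon$), and each $s_i$ lying in an already-productive orbit. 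The subtlety is that ``there exists such a production'' is itself a first-order question about the definable relation $\rho$ together with the already-computed union of productive orbits; by the remark that constraints are as expressive as first-order logic of $(\Q,\le,+1)$, and by the Projection Lemma (Lemma~\ref{lem:proj}), this existential question is decidable, and on normal-form representations it is decidable in polynomial time per step. Since there are only exponentially many orbits and the fixpoint iteration adds at least one orbit per round, the whole computation runs in \exptime; finally, the language is non-empty iff the starting symbol $I$ lies in a productive orbit.

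The \textbf{main obstacle} I anticipate is the bookkeeping in the fixpoint step: when checking productivity of an orbit $O$ against the current set $P$ of productive orbits, one must verify $\exists s \in O\ \exists a\ \exists (s_1,\dots,s_k)\in P^{\le m}$ with $(s,a,s_1\cdots s_k)\in\rho$, and this has to be done uniformly over all the exponentially many index tuples of $\rho$ (the presentation may omit tuples defining the empty set, which helps), and the arity $m$ is the one coming from~\eqref{eq:rules m n}. The point to get right is that, because $P$ is a union of orbits and $\rho$ is definable, the whole conjunction is a constraint, so existential quantification is just variable elimination (polynomial on normal forms by the Projection Lemma), and emptiness of the resulting constraint is decidable. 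Once this is phrased as a single polynomial-time primitive and plugged into the exponentially-bounded iteration, \exptime membership follows, matching the lower bound and giving \exptime-completeness. A mild additional care is needed for $\varepsilon$-productions and for the fact that the input alphabet $A$ may be empty or $0$-dimensional, but these are degenerate cases that do not affect the bound.
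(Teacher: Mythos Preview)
Your upper-bound argument is essentially the paper's: the paper proves it via Lemma~\ref{lem:trCFL}, which builds an ordinary CFG whose nonterminals are the orbits of $S$ and whose productions are obtained by existentially projecting $\rho$ onto orbit tuples---exactly your ``productive-orbits fixpoint'' phrased as a grammar. Both rely on Corollary~\ref{cor:nf} to enumerate the (exponentially many) orbits and on a polynomial-time check per candidate production, giving \exptime overall.

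Your lower-bound route differs from the paper's. The paper does \emph{not} go through alternating LBAs or through large binary-encoded constants; instead it reduces from the known \exptime-hard problem of non-emptiness of the intersection of $n$ ordinary NFAs $\aaut_1,\dots,\aaut_n$ with one ordinary CFG. The trick is to carry the full $n$-tuple of NFA states in the \emph{dimension} of a nonterminal: a pair of state-tuples $(r,r')\in\{1,\dots,k\}^{2n}$ is encoded by the orbit $\{(t,\,r{+}t,\,r'{+}t):t\in\Q\}\subseteq\Q^{2n+1}$, and the textbook triple-product construction (CFG intersected with a regular language) is then expressed with equality constraints of polynomial size. This yields a short, fully concrete reduction with no alternation machinery and no reliance on the magnitude of constants. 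Your alternating-TM sketch is plausible in spirit, but as you acknowledge it is not worked out, and the phrase ``differences between register values modulo the large constants'' is worrying: the constraint language here has only difference inequalities, not modular arithmetic, so that particular encoding would not go through as stated. The paper's dimension-based encoding sidesteps this entirely.
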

%


\subsection{Orbit-finite \tr PDA}

\label{sec:oftrPDA}

We have seen that restricting \trPDA to grammars yields a decidable model.
In this section, we investigate another natural restriction of \trPDA with decidable non-emptiness.
A transition rule $(q, v, a, q', v') \in \rho$ splits naturally into its left-hand side (lhs) $(q, v) \in Q\times S^*$ 
and its right-hand side (rhs) $(q', v') \in Q\times S^*$.
Let \emph{orbit-finite \trPDA} be the subclass of \trPDA where the projections 
of $\rho$ to both lhs's and rhs's, i.e., the following two sets
\begin{align*}
&\setof{ (q, v)}{\exists a, q', v' . \ (q, v, a, q', v') \in \rho } \\ 
&\setof{ (q', v')}{\exists q, v, a. \ (q, v, a, q', v') \in \rho },
\end{align*}
are orbit-finite.
By Lemma~\ref{lem:bounded-of} this means that both lhs's and rhs's have uniformly bounded span.
We still  \emph{do not} require the whole relation $\rho$ to be orbit-finite.


As long as the recognized language is considered,
orbit-finite \trPDA may be transformed into a convenient short form, with
the transition rules split into
\begin{align} \label{eq:short}
\begin{aligned}
\rho & \ = \ \rhopush \ \cup \ \rhopop, \\ 
\rhopush & \ \subseteq \ Q \times A_\varepsilon \times Q \times S, \\ 
\rhopop & \ \subseteq \ Q \times S \times A_\varepsilon \times Q
\end{aligned}
\end{align}
(thus one of lhs, rhs is a single state from $Q$, and the other is a pair from $Q \times S$)
where the two sets
\begin{align*}
& \setof{ (q', s')}{ \exists q, a . \ \rhopush(q, a, q', s') } \\
& \setof{ (q, s)}{ \exists q, a . \ \rhopop(q, s, a, q') } 
\end{align*}
are orbit-finite.
This short form easily enables the simulation of transition rules of the form 
$\rhonop(q, a, q') \in Q\times A_\varepsilon \times Q$
that do not operate on stack, by a push followed by a pop.
The \trPDA in Example~\ref{ex:pda1} is in short form.

\begin{restatable}{lemma}{LemoftrPDA}  \label{lem:oftrPDA}
An orbit-finite \trPDA can be transformed into a language-equivalent \trPDA in short form~\eqref{eq:short}
of polynomially larger size.
\end{restatable}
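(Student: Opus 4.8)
The plan is to convert an orbit-finite \trPDA $\aaut = (A, Q, I, F, S, \rho)$ into the short form by introducing auxiliary states that break each transition rule $(q, v, a, q', v') \in \rho$, which pops $v \in S^{\leq n}$ and pushes $v' \in S^{\leq m}$, into a chain of single-symbol pop steps followed by a chain of single-symbol push steps, with the single input letter $a$ read on exactly one of these steps. Concretely, a rule popping $v = s_1 \cdots s_p$ and pushing $v' = s'_1 \cdots s'_q$ is simulated by first popping $s_1, \ldots, s_p$ one at a time (innermost symbol $s_1$ first, so in the order matching $\rhopop$), passing through fresh intermediate states that remember the remaining suffix still to be popped and all the data ($q$, the already-seen register values, and the input obligation $a$), and then pushing $s'_q, \ldots, s'_1$ one at a time, with intermediate states remembering the remaining prefix still to be pushed, finally landing in $q'$. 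This is the standard textbook decomposition; the only new content is checking that it stays within the orbit-finite/definable framework with a polynomial size blow-up.

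First I would fix, for the given $\aaut$, the constants $n$ and $m$ from~\eqref{eq:rules m n}, and observe that since $\rho$ is definable it is a finite disjoint union of definable pieces indexed by (source location, stack-pop word of symbols' orbits, target location, stack-push word of symbols' orbits); there are at most exponentially-in-$n$-and-$m$ but, crucially, \emph{constantly} (w.r.t. the size of the representation) many such index words, and each carries a single constraint over the registers of $q$, the registers appearing in $v$, the registers of $q'$, and the registers appearing in $v'$. For each such piece I introduce intermediate control locations of the form $\elt{q, i, v, a, q', v', j}$ — or rather their definable analogues, carrying the relevant register tuple — for $0 \le i \le p$ (symbols popped so far) and $0 \le j \le q$ (symbols pushed so far); the new state space $Q^{new}$ is $Q$ together with the disjoint union of these, which is again definable and, because each added location carries a register tuple whose span is bounded by a constant times the span bound of $\aaut$ plus the largest integer constant occurring in the piece's constraint, is \emph{orbit-finite} by Lemma~\ref{lem:bounded-of}. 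The new transition relation $\rho^{new}$ consists of: one $\rhopop$ rule per decrement of $i$ (reading $\varepsilon$, except possibly on the designated step that reads $a$), one $\rhopush$ rule per increment of $j$, and the constraint of the original piece is distributed along this chain — each individual step's constraint is just a projection/conjunct of the original constraint, hence still a constraint, hence $\rho^{new}$ is definable. One checks $\rho^{new}$ satisfies the orbit-finiteness of its $\rhopush$-rhs and $\rhopop$-lhs projections because each such projection is contained in a constantly-indexed disjoint union of bounded-span sets.

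The correctness argument is a routine bisimulation-style claim: a single transition of $\aaut$ between configurations $c = (q, vw)$ and $c' = (q', v'w)$ is mirrored, in $\aaut^{new}$, by the $\varepsilon$-labelled (except for one $a$) path through the intermediate locations carrying the threaded data, and conversely any maximal such path starting and ending in an original location of $Q$ must, by construction of the intermediate locations (which are dead ends except along their intended chain), correspond to exactly one original rule; since intermediate locations are not in $I$ or $F$, accepting runs are in bijection on the level of accepted words, giving $L(\aaut^{new}) = L(\aaut)$. The size bound: the number of pieces is bounded by the representation size, each spawns $O(n + m)$ intermediate locations and $O(n+m)$ new rules, each with a constraint no larger than the original piece's constraint, so $|\aaut^{new}| = O((n+m) \cdot |\aaut|)$, i.e.\ polynomial (indeed, since $n, m$ are themselves bounded by $|\aaut|$, a crude bound is quadratic; a linear bound holds if $n,m$ are treated as fixed). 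The main obstacle — really the only non-bookkeeping point — is making sure the intermediate locations genuinely stay orbit-finite: one must argue that threading the popped-symbol register values and the to-be-pushed register values through the control state does not blow up the span unboundedly. This is where the short-form restriction (orbit-finiteness of lhs's and rhs's of $\rho$, hence bounded span of the $v$'s and $v'$'s appearing) is used essentially, together with the observation that within a single original rule all the register data involved lies within a bounded window (bounded by the span of the lhs plus span of the rhs plus the maximal integer constant of that rule's constraint), so carrying all of it simultaneously in one intermediate location keeps the span uniformly bounded; I would state this as the crux lemma and discharge the rest as standard PDA surgery.
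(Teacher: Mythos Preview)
Your overall strategy --- buffering stack symbols in the control state and simulating each multi-symbol rule by a chain of single-symbol pops followed by single-symbol pushes --- is the paper's strategy. But your orbit-finiteness argument for the intermediate states is wrong, and the error is exactly at the point you flag as the crux.

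You claim that ``within a single original rule all the register data involved lies within a bounded window (bounded by the span of the lhs plus span of the rhs plus the maximal integer constant of that rule's constraint), so carrying all of it simultaneously in one intermediate location keeps the span uniformly bounded.'' This is false: $\rho$ is only assumed \emph{definable}, and the orbit-finiteness hypothesis applies to the lhs's and rhs's \emph{separately}. A constraint such as $x' > x$ (maximal integer constant $0$) relates a source register $x$ to a target register $x'$ at arbitrary distance, so a state carrying both $(q,v)$ and $(q',v')$ has unbounded span and is orbit-infinite. Your proposed intermediate locations $\elt{q, i, v, a, q', v', j}$, read literally, therefore do not give an orbit-finite state space.

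The paper avoids this by never storing both sides at once. Its new states are pairs $(q, v)$ with $v$ a prefix of \emph{some} lhs or rhs of a rule of $\Aa$; orbit-finiteness then follows directly from the hypothesis (prefixes are projections of the orbit-finite lhs/rhs sets). Generic $\varepsilon$-rules $\rhopop((q,v), s, \varepsilon, (q, vs))$ and $\rhopush((q,vs), \varepsilon, (q,v), s)$ shuffle one symbol at a time between buffer and stack, and each original rule becomes a single $\rhonop((q,v), a, (q',v'))$ carrying the whole $\rho$-constraint. This $\rhonop$ is definable but not orbit-finite --- and it need not be, because the short-form conditions constrain only the $\rhopush$-rhs and $\rhopop$-lhs projections. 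Your chain is repairable the same way: carry only lhs data during the pop phase, only rhs data during the push phase, and put the entire $\rho$-constraint on the one handoff step rather than ``distributing'' it.
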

\noindent
Thus, from now on we always conveniently assume that an orbit-finite \trPDA is given in short form.
According to the following example, untiming of the language of an orbit-finite \trPDA needs not be context-free.
\begin{example}
	\label{ex:pda2}
	Consider the language $L$ of palindromes over the timeless alphabet $A = \set{a, b}$ 
	containing the same number of $a$'s and $b$'s.
	$L$ can be recognized by a \trPDA of dimension 1 
	with state space
	$Q = \{ i \} \ \uplus \ \{ 1,2  \} \times \Q \ \uplus \ \{f\}$
	and stack alphabet $S = \{a, b\} \ \uplus \ \{\bot\} \times \Q$.
	as follows.
	At the beginning, a rational $t \in \Q$ is guessed and $(\bot, t)$ is immediately pushed 
	to the stack according to the transition rules:
	\[
	(i, \varepsilon, (1, t), (\bot, t)) \in Q \times \{\varepsilon\} \times Q \times S, \qquad \text{ for } t \in \Q.
	\]
	Palindromicity of $L$ is checked by pushing timeless symbols $a, b$ on the stack in the first half of the computation,
	and by popping and matching them during the second half.
	Additionally, the value stored in the state is increased at each occurrence of $a$, and decreased at each occurrence of $b$,
	according to the transition rules:
	\begin{align*}
	\setof{\begin{array}{l}
	 ((1, t), a, (1, t+1), a), \\
	 ((1, t), b, (1, t-1), b)
	 \end{array}}{\ t \in \Q\ }
	  \ \subseteq \ Q \times A \times Q \times S \\
	\setof{\begin{array}{l}
	((2, t), a, a, (2, t+1)), \\
	 ((2, t), b, b, (2, t-1)) 
	 \end{array}}{\ t \in \Q\ }
	\  \subseteq \ Q \times S \times A \times Q 
	\end{align*}
	At the end of the computation, it remains to check that the number of $a$'s equals the number of $b$'s.
	After the last timeless symbol is popped off the stack,
	on the  bottom thereof we have $(\bot, t)$ where $t$ is the original value stored there at the beginning of the computation.
	It suffices to pop this timed symbol with a transition rule:
	\[
	((2, t), (\bot, t), \varepsilon, f) \ \in \ Q\times S \times \{\varepsilon\} \times Q, \qquad \text{ for } t\in \Q,
	\]
	which checks equality with the value stored in the state.
\end{example}

As our second main result, we prove decidability of non-emptiness for orbit-finite \trPDA:
\begin{theorem} \label{thm:nexptime}
	Non-emptiness of orbit-finite \trPDA is in \nexptime.
\end{theorem}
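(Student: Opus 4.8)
The plan is to reduce non-emptiness of an orbit-finite \trPDA to a query about the least solution of a system of equations over sets of integers, in the spirit of~\cite{JO11}; this reduction is the technical core, and the complexity of the resulting systems reflects the genuinely quantitative nature of the timed atoms $(\Q,\leq,+1)$.

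First I would preprocess the automaton $\aaut$. By Lemma~\ref{lem:oftrPDA} we may assume $\aaut$ is in short form~\eqref{eq:short}: every transition pushes or pops exactly one symbol, and the set of right-hand sides of $\rhopush$ and the set of left-hand sides of $\rhopop$ are orbit-finite, hence of uniformly bounded span (Lemma~\ref{lem:bounded-of}). Using Lemma~\ref{lem:nf} and Corollary~\ref{cor:nf} I would rewrite all constraint representations ($Q$, $S$, $A$, and the rules) in normal form, so that every orbit-finite set in sight is presented as an explicit finite union of orbits; this costs one exponential in $\len{\aaut}$, which is affordable since the target bound is \nexptime. After adding a fresh bottom-of-stack marker, non-emptiness amounts to deciding whether some final state is \emph{pop-reachable} from some initial state, i.e.\ whether, starting from an initial state on top of a one-symbol stack, one can reach a configuration with a final state and empty stack.

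The heart of the argument is the analysis of a \emph{matched episode}: a run fragment that pushes a symbol and later pops it with no net effect on the rest of the stack. Since the push rule is bounded-span on its right-hand side and the pop rule is bounded-span on its left-hand side, whenever the pushed symbol carries a register the states immediately after the push and immediately before the pop lie within a bounded span of it, hence of each other; so across such an episode the net displacement of the state along the time line is a bounded integer (tuple), and up to a time automorphism there are only finitely many \emph{episode types}. Inside an episode, however, the automaton may push further symbols at states arbitrarily far from the episode's home region, each carrying a distant timestamp, thereby opening a chain of frozen timestamps that must eventually return; the number of links is unbounded and each link contributes a bounded integer shift, so the accumulated shift has to land in a bounded target set by the time control comes back. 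I would capture exactly these constraints by a system of equations over sets of integers, with one variable $X_\tau$ per episode type (and per auxiliary reachable-configuration type) whose defining equation is a finite union --- over the applicable push/pop rules and input letters --- of sets of the form $k + Y_1 + \cdots + Y_m$, where $k$ is a bounded integer constant, $+$ on sets denotes the sumset, and the $Y_i$ are the variables of the nested episode types invoked. The least solution of this system assigns to each $X_\tau$ precisely the realisable displacements, and pop-reachability --- hence non-emptiness of $\aaut$ --- is read off as membership of a fixed bounded value in the appropriate component.

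The system obtained this way has size exponential in $\len{\aaut}$ (because of the normal-form blow-up and because rules are indexed by tuples of orbits), but it uses only union, translation by integer constants, and sumset. It then remains to bound the cost of the emptiness/membership query for systems of this special shape; here I would prove, as a self-contained ingredient, a tight complexity bound, and combine it with the exponential system size --- guessing a certificate polynomial in the system and verifying it deterministically --- to obtain the overall \nexptime\ bound. Specialising to \emph{timeless} stack, no chain of frozen timestamps can accumulate, the sumset operation disappears, the query becomes deterministically tractable in the size of the system, and the procedure collapses to \exptime; this yields the \exptime\ upper bound claimed for that subclass and, via the collapse of Section~\ref{sec:dtPDA}, for \dtPDA. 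I expect the last point to be the main obstacle: establishing that, despite the unbounded chains of frozen timestamps, the least solution of the resulting integer system can be interrogated within the claimed bounds --- i.e.\ that the quantitative interplay between the LIFO stack discipline and the $+1$ structure of timed atoms costs no more than an exponential plus a guess.
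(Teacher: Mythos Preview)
Your high-level plan matches the paper's: preprocess into short and normal form, then reduce to a system of equations over sets of integers whose least solution records, for each orbit of state-pairs, the set of realizable integer displacements, and read off non-emptiness. Your observation that a \emph{timed} stack symbol pins the states immediately after push and immediately before pop to within bounded span of each other---so the inner displacement $\bar z$ ranges over a bounded set---is exactly the consequence of orbit-finiteness that the paper exploits.

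The gap is in how you encode that constraint. You assert the system uses ``only union, translation by integer constants, and sumset''. But the condition you yourself name---that the accumulated inner shift must ``land in a bounded target set''---is a \emph{conditional} inside the fixed point: the outer variable should receive a contribution \emph{only if} the specific value $\bar z$ belongs to the inner variable $X_{\bar o}$. Union, constants, and sumset cannot express this; an inclusion $X_o \supseteq X_{\bar o}+C$ would let any value of $X_{\bar o}$ contribute and would overapproximate. The paper encodes the conditional as $X_o \supseteq (X_{\bar o}\cap\{\bar z\}) + Z_{I-\bar z}$, using intersection with a singleton. It is precisely this (restricted) intersection that raises the complexity of the resulting systems from P (Lemma~\ref{lem:setsofeq:PTIME}) to NP-complete (Lemma~\ref{lem:setsofeq:NP}), and hence the overall bound from \exptime\ to \nexptime.

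Your description of the timeless-stack specialisation is also inverted. Sumset does \emph{not} disappear: transitivity of the empty-stack reachability relation always contributes inclusions $X_{o_{13}}\supseteq X_{o_{12}}+X_{o_{23}}$. What disappears is the need for intersection. For a timeless symbol $\bar s$ the relation $\pushpopA$ factors as $\bigcup_{\bar s}\rhopush_{\bar s}\times\rhopop_{\bar s}$, so the inner displacement is unconstrained and the push--pop rule yields $X_{o_{14}}\supseteq X_{o_{23}}+Z_{I+I'}$ with no side condition. The resulting intersection-free system has polynomial-time non-emptiness, giving \exptime\ overall.
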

\noindent
Recall that we assume that integer constants appearing in constraint representation of a \trPDA are encoded in binary.
We prove the theorem in Sec.~\ref{sec:upper-bounds} by reducing non-emptiness of \trPDA
to non-emptiness of systems of equations over set of integers.


\subsection{\trPDA with timeless stack}

\label{sec:timelesstrPDA}

To obtain a better complexity upper-bound, and for comparison with previous work,
we identify the subclass of \trPDA where the stack alphabet is timeless 
(i.e., finite). We call this subclass \emph{trPDA with timeless stack},
which corresponds precisely to timed-register automata \cite{BL12icalp}
augmented with a timeless stack (in the spirit of \cite{bouajjani:timed:PDA:94}).
Observe that this is a subclass
of orbit-finite \trPDA, by the following observation:
\begin{proposition} \label{prop:product orbit finite}
	Cartesian product of an orbit-finite set and a timeless one is orbit-finite.
\end{proposition}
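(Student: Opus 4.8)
The plan is to reduce the statement to the span characterization of orbit-finiteness from Lemma~\ref{lem:bounded-of}, exploiting that a timeless element contains no time values at all and hence is untouched by every time automorphism, so the two factors cannot interfere.

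First I would fix convenient representations. Write the orbit-finite set as $X = \{X_l\}_{l\in L}$ with $X_l = \defin{\phi_l}\subseteq\Q^{n_l}$, and the timeless set as $Y = \{Y_k\}_{k\in K}$; timelessness means every $Y_k$ has dimension $0$, so $Y$ is just a finite set. Then $X\times Y$ is again a definable set, namely $\{X_l\times Y_k\}_{(l,k)\in L\times K}$, and since each $Y_k$ has dimension $0$ the component $X_l\times Y_k$ is, up to reindexing, defined by the very same constraint $\phi_l$ and has the same dimension $n_l$ as $X_l$. Being a product of equivariant sets, $X\times Y$ is equivariant.

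Next I would record the key observation: a time automorphism $\pi$ renames only the time values occurring in an element and leaves the rest of its structure intact, so it fixes every timeless element, and on a pair $\elt{v,y}$ with $y$ timeless it acts as $\pi\elt{v,y} = \elt{\pi v, y}$. Therefore $\orbit{\elt{v,y}} = \orbit v \times \{y\}$ — the product of one orbit of $X$ with one element of $Y$ is a single orbit of $X\times Y$. It follows that
\[
X\times Y \;=\; \bigcup_{\,O\in\orbit X,\ y\in Y}\, O\times\{y\}
\]
is a union of orbits indexed by the finite set $\orbit X \times Y$, hence orbit-finite. Equivalently, one may conclude via Lemma~\ref{lem:bounded-of}: the $\Q$-tuple underlying $\elt{v,y}$ is just $v$, so $\mathrm{span}(\elt{v,y}) = \mathrm{span}(v)$, and any uniform span bound witnessing orbit-finiteness of $X$ witnesses it for $X\times Y$ as well.

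I do not expect a genuine obstacle here. The only point that needs care is the bookkeeping of the disjoint-union presentation of definable sets (recalling the standing convention that orbit-finite sets are disjoint unions of subsets of some $\Q^n$) together with the explicit verification that time automorphisms act trivially on timeless coordinates; this is exactly what ensures that taking the product with a timeless set creates no new orbits.
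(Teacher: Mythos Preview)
Your argument is correct; in fact you give two independent proofs, either of which suffices. The paper itself states the proposition without proof, treating it as an immediate observation, so your write-up is strictly more detailed than what the paper provides.
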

\noindent
Thus, lhs and rhs are orbit-finite if $Q$ is orbit-finite and $S$ is timeless.
This class is weaker than orbit-finite \trPDA.
Indeed, the automaton recognizing language $L$ described in Example~\ref{ex:pda2}  is orbit-finite.
On the other hand $L$ is not recognized by a \trPDA with timeless stack, due to the following:
%
%
\begin{lemma} \label{lem:untiming}
Untiming of the language of \trPDA with timeless stack is effectively context-free.	
\end{lemma}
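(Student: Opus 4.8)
The plan is to show that the untiming operation commutes with the automaton construction, reducing a \trPDA with timeless stack to an ordinary (untimed) PDA over the finite alphabet of orbits. The key structural fact to exploit is that the stack alphabet $S$ is \emph{finite}, so when we project away the time values there is nothing to lose on the stack side; all the "timed content" lives in the states $Q$, which form an orbit-finite (hence definable) set of uniformly bounded span, and in the transition relation $\rho$, which is definable but possibly orbit-infinite. I would first invoke Lemma~\ref{lem:oftrPDA} to assume the automaton is in short form~\eqref{eq:short}, so that every transition rule is either a push rule in $Q \times A_\varepsilon \times Q \times S$ or a pop rule in $Q \times S \times A_\varepsilon \times Q$.

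The main step is to build an untimed PDA $\baut$ whose set of states is $\text{orbits}(Q)$, whose stack alphabet is the (already finite) set $S$, and whose input alphabet is $\Sigma = \text{orbits}(A)$. The transition rules of $\baut$ are obtained by projecting $\rho$ to orbits: a triple $(\orbit q, \orbit a, \orbit{q'}, s) \in \text{orbits}(Q) \times \Sigma \times \text{orbits}(Q) \times S$ is a push rule of $\baut$ exactly when there exist representatives $q \in \orbit q$, $a \in \orbit a$, $q' \in \orbit{q'}$ with $(q, a, q', s) \in \rhopush$, and symmetrically for pop rules. This set is finite because $\text{orbits}(Q)$, $\Sigma$ and $S$ are all finite, and it is computable: using the Normal Form Lemma (Lemma~\ref{lem:nf}, Corollary~\ref{cor:nf}) we enumerate the orbits of $Q$ and of $A$, and for each candidate tuple of orbits we test nonemptiness of the corresponding definable subset of $\rho$ via a projection (Lemma~\ref{lem:proj}) followed by a satisfiability check on constraints. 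The crucial soundness-and-completeness claim is that $L(\baut)$ is precisely the untiming of $L(\aaut)$. Completeness (untiming of an accepting run of $\aaut$ is an accepting run of $\baut$) is immediate by taking orbits coordinatewise. For soundness I would show that any accepting run of $\baut$ can be \emph{realized} by concrete rational time values: given a sequence of orbit-transitions, one builds the corresponding concrete run of $\aaut$ step by step, at each step choosing a time automorphism witnessing that the chosen rule orbit is inhabited compatibly with the current configuration. Here one uses that a time automorphism can always be chosen to map a given finite configuration of rationals to a prescribed orbit while agreeing with constraints; this is the standard "every region is realizable" argument, transported to the $(\Q,\le,+1)$ setting, and it crucially needs that the stack symbols carry \emph{no} time values, so the only timed data to keep consistent across a pop is the pair of states at push time and pop time — which is exactly what the orbit bookkeeping records.

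The main obstacle I expect is the stack: in an ordinary PDA run, a symbol pushed early may be popped arbitrarily later, so the realizing time automorphism used when the symbol is popped must be consistent with choices made much earlier in the run. This is exactly where timelessness of $S$ saves us — since no time value is stored on the stack, the pop rule only constrains the \emph{current} state (which is timed) and the \emph{current} stack symbol (which is timeless), so the constraint to satisfy at pop time refers only to the time values present at that moment, not to time values frozen at push time. Consequently the realization can be done greedily, extending the partial assignment of rationals one transition at a time, and the LIFO discipline imposes no extra coupling between time values at distant positions. (Contrast this with Example~\ref{ex:pda2}, where a timed bottom-of-stack symbol does couple the end of the run to its beginning, and indeed that language's untiming — $\{w w^R : \#_a(w) = \#_b(w)\}$ — is not context-free.) Finally, since $L(\baut)$ is context-free by construction and is computable from $\aaut$, the untiming of the language of a \trPDA with timeless stack is effectively context-free, as claimed.
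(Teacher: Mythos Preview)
Your proposal is correct and follows essentially the same approach as the paper's proof sketch: replace the state space $Q$ by its set of orbits (the region construction), define transitions between orbits existentially, and observe that since the stack alphabet $S$ is finite no timed information survives on the stack, so the orbit abstraction preserves reachability and hence the untiming of the language. You supply more detail than the paper's brief sketch---in particular the explicit greedy realization argument via equivariance of $\rho$ and the observation that a pop rule constrains only data present at pop time---but the core idea is identical.
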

\begin{proof}[Proof (sketch).]
	Replace the state space $Q$ by the set of orbits of $Q$ (similarly to the region construction), and consider
	transitions between orbits, labelled with orbits of the input alphabet $A$, defined existentially.
	This operation does not preserve the timed language $L$ recognized by the automaton in general,
	but it does preserve reachability properties,	
	and in particular the untiming projection of $L$.
	Since the stack is timeless, no special care is needed to handle it.
\end{proof}
Languages of \trPDA with timeless stack are thus a strict subclass of those of orbit-finite \trPDA,
even over finite alphabets.
Moreover, languages of \trCFG are incomparable with languages of \trPDA with timeless stack.
An example of \trCFG language which is not recognized by \trPDA with timeless stack
is the language of timed palindromes from Example~\ref{ex:timed_palindromes}.
This language clearly cannot be recognized with a timeless stack
since it requires to remember unboundedly many possibly different timestamps.
For the other inclusion,
the example below shows a language recognized by a \trPDA with timeless stack
but not recognized by a \trCFG.
\begin{example}
	Take $A = \set c \times \Q \uplus \set {a, b}$,
	and consider the language
	$$L = \setof {(c, x) \, w \, (c, y)} {w \textrm{ palindrome over } \set{a, b}, y - x = \len w}.$$
	$L$ can be recognized by a \trPDA with timeless stack which stores $x$ in a register,
	and then uses the untimed stack to check that $w$ is a palindrome and incrementing the register at every letter.
	Finally, it checks that $y$ equals the value of the register.
	It can be shown that $L$ cannot be recognized by a \trCFG by a standard pumping argument. 
	Intuitively, a sufficiently long word $s \in L$
	can be split into $s = uvwxy$ s.t. at least one of $v$ and $x$ is non-empty,
	and, for every $i \geq 0$, $s_i := uv^iwx^iy \in L$.
	Since $s$ has only two timestamps (at the beginning and at the end),
	pumping cannot involve them.
	Thus, $v$ and $x$ are substrings of the palindrome $w$,
	and pumping necessarily changes its length, which contradicts $s_i \in L$.
\end{example}

As our last main result, we derive a tight upper complexity bound for \trPDA with timeless stack.
\begin{theorem} \label{thm:exptime}
	Non-emptiness for \trPDA with timeless stack is \exptime-complete.
\end{theorem}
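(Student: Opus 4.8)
The plan is to establish the two bounds separately. For the upper bound, the key observation is that \trPDA with timeless stack form a subclass of orbit-finite \trPDA (Proposition~\ref{prop:product orbit finite}), so Theorem~\ref{thm:nexptime} already yields a \nexptime procedure; the task is to shave the nondeterminism and bring it down to \exptime. The source of the \nexptime bound in the general case is the need to guess exponentially large objects (orbits of a timed stack alphabet, or timing data that must be verified globally); with a \emph{timeless} stack this disappears. Concretely, I would put the automaton in short form~\eqref{eq:short} via Lemma~\ref{lem:oftrPDA}, then apply the Normal Form Lemma (Lemma~\ref{lem:nf}) and Corollary~\ref{cor:nf} to compute, in \exptime, the decomposition of the orbit-finite state space $Q$ into its (exponentially many) orbits. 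One then builds a classical \emph{untimed} PDA whose state set is $\text{orbits}(Q)$ and whose (finite) stack alphabet is $S$, with a transition between orbits whenever the definable relation $\rho$ admits representatives in those orbits — this is exactly the region-style construction underlying Lemma~\ref{lem:untiming}, and by that lemma it preserves reachability and hence non-emptiness. Since non-emptiness of a classical PDA is decidable in polynomial time in the size of its description, and here that size is exponential in the size of the input \trPDA, the whole procedure runs in \exptime. The one subtlety is verifying that the span bound on $Q$ (Lemma~\ref{lem:bounded-of}) together with the fact that $\rho$ need not be orbit-finite does not cause a problem: a single transition may move time values arbitrarily far, but because the stack carries no time, the only timing constraints are between the (bounded-span) source and target states, so each $\rho_{(l,k)}$ is nonempty-checkable in polynomial time, and there are only exponentially many index pairs $(l,k)$ to consider.

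For the lower bound I would reduce from a known \exptime-complete problem. The natural candidate is non-emptiness of the intersection of a context-free language with an alternating-type reachability problem, or more directly, countdown-style games / the non-emptiness of PDA with a linear number of integer-valued counters that are tested only via bounded-difference constraints. In fact the cleanest route is to reuse the hardness already present for timed automata augmented with a stack: the \exptime-hardness of reachability for classical \emph{timed automata} is standard, and since \trPDA with timeless stack subsume timed automata (indeed they subsume \dtPDA, whose non-emptiness is \exptime-hard per~\cite{AbdullaAtigStenman:DensePDA:12}), the lower bound is inherited directly. I would phrase it as: \dtPDA, under the non-emptiness-preserving technical assumption mentioned after Theorem~\ref{thm:Abdulla}, transform into \trPDA with timeless stack (via the collapse result Theorem~\ref{thm:Abdulla} composed with a translation of timed-control PDTA into timed registers), so the \exptime lower bound for \dtPDA transfers.

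The step I expect to be the main obstacle is making the upper-bound construction genuinely \emph{deterministic} (i.e., \exptime rather than \nexptime) while being careful about the exponential blow-up bookkeeping: the normal-form transformation of the constraint representation is itself exponential, and one must check that the subsequent PDA-reachability computation is only polynomial in that already-exponential object — otherwise one would get a double-exponential bound. This is exactly the "exponential blowup combines well with polynomial complexity w.r.t.\ normal form" phenomenon flagged in the discussion after Lemma~\ref{lem:proj}, and the proof must invoke it precisely: compute the normal form once (single exponential), then run a purely polynomial-time PDA emptiness check on the resulting finite structure. A secondary point requiring care is confirming that acceptance by final state (which \trPDA use) translates correctly through the region abstraction to classical PDA acceptance, but this is routine given the standard equivalence of acceptance modes for PDA noted earlier.
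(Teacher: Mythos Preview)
Your proposal is correct but takes a genuinely different route from the paper on both halves.

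For the upper bound, the paper does \emph{not} build an untimed PDA via the orbit abstraction of Lemma~\ref{lem:untiming}. Instead it develops, in Section~\ref{sec:upper-bounds}, a single reduction of non-emptiness of orbit-finite \trPDA to non-emptiness of a system of equations over sets of integers (using the discretization of $Q^2$ via reference points and the shift mapping $\pi$). The key observation for Theorem~\ref{thm:exptime} is that when the stack alphabet is timeless, only the $\pushpopA$ case of the (orbit push-pop) rule arises, so the resulting system contains no intersections; by Lemma~\ref{lem:setsofeq:PTIME} such intersection-free systems are solvable in polynomial time, and since the system has exponential size the overall bound is \exptime. Your approach --- orbit-abstract $Q$, keep $S$ as the finite stack alphabet, and solve classical PDA emptiness --- is simpler and perfectly sound here: the forward-simulation argument you sketch goes through precisely because automorphisms fix the timeless stack symbols, so equivariance lifts abstract runs to concrete ones. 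What the paper's route buys is uniformity: the same construction proves Theorem~\ref{thm:nexptime} for the full orbit-finite class, and the \exptime bound for the timeless subclass drops out as the intersection-free special case rather than requiring a separate argument.

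For the lower bound, the paper does not inherit hardness from \dtPDA; it gives a direct reduction (in the appendix) from non-emptiness of the intersection of $n$ NFAs with a PDA, encoding the $n$-tuple of NFA states in timed registers exactly as in the proof of Theorem~\ref{thm:trCFG}, and using the timeless stack to simulate the PDA's stack. Your route via Corollary~\ref{cor:Abdulla} is also valid (the non-emptiness-preserving transformation to uninitialized clocks followed by Lemma~\ref{lem:from:DT-PDA:timeless-stack:to:trPDA} does transfer the \exptime-hardness of~\cite{AbdullaAtigStenman:DensePDA:12}), but it makes the result depend on an external hardness proof, whereas the paper's reduction is self-contained.
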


\begin{remark}
	It follows from the proof that non-emptiness of automata in normal form
	is decidable in time polynomial in its size	and exponential in its dimension.
\end{remark}



\subsection{\dtPDA as \trPDA with timeless stack}

Our definition of \trPDA differs from \dtPDA~\cite{AbdullaAtigStenman:DensePDA:12}
in the same way as timed register automata of~\cite{BL12icalp} differ from classical timed automata~\cite{AD94}.
The first difference is semantic: \dtPDA (like timed automata) recognize timed languages where each input symbol carries only a single time-stamp. In this sense, they correspond to \trPDA with a 1-dimensional input alphabet.

Moreover, languages of \trPDA are closed under translations $x \mapsto x+t$, for $t \in \Q$, 
while languages of \dtPDA are not.
In order to fairly compare the two models, we assume (along the lines of~\cite{BL12icalp}) 
that a \dtPDA starts its computation with \emph{uninitialized clocks}, instead of all clocks initialized with $0$. 
This is not a restriction since a \dtPDA $\mathcal T$ can be faithfully simulated by a \dtPDA with uninitialized clocks 
$\mathcal T'$.
For instance, as the first step, $\mathcal T'$ may initialize all its clocks with the timestamp 
of the first input letter $(a, t)$ and then proceed as $\mathcal T$, and thus
$L(\mathcal T') \ = \ \bigcup_{t\in \Q,\, a \in \Sigma} (a, t)\, (L(\mathcal T) + t)$.
This transformation clearly preserves non-emptiness.

\begin{lemma}
	\label{lem:from:DT-PDA:timeless-stack:to:trPDA}
	A \dtPDA with uninitialized clocks and timeless stack $\mathcal T$
	can be effectively transformed into a language-equivalent
	normal form \trPDA $\aaut$ with timeless stack.
	If $\mathcal T$ has $n$ clocks then
	the dimension of $\aaut$ is $n+1$ and its size is exponential in $n$. 
\end{lemma}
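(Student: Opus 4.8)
The plan is to describe $\aaut$ explicitly and then verify that it simulates $\mathcal T$ step-for-step, with the register valuation of $\aaut$ encoding the clock valuation of $\mathcal T$ in the standard way one relates timed registers to clocks (cf.~\cite{BL12icalp}). First I would introduce one distinguished register, call it $r_0$, whose intended invariant is that it always holds the \emph{current time}; the remaining $n$ registers $r_1, \ldots, r_n$ correspond to the $n$ clocks $x_1, \ldots, x_n$ of $\mathcal T$, with the invariant that the value of clock $x_i$ equals $r_0 - r_i$ (so that a clock reset of $x_i$ becomes the register assignment $r_i := r_0$, an elapse of time $\delta$ becomes $r_0 := r_0 + \delta$ applied uniformly, and a clock constraint $x_i \sim k$ translates to the diagonal constraint $r_0 - r_i \sim k$, while $x_i - x_j \sim k$ translates to $r_j - r_i \sim k$). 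Since $\mathcal T$ starts with uninitialized clocks, the initial state of $\aaut$ simply allows any valuation of $r_0, \ldots, r_n$ in $\Q^{n+1}$, which matches the ``uninitialized'' semantics; this is why the dimension is $n+1$ rather than $n$.

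Next I would spell out the transition relation of $\aaut$. The control locations of $\aaut$ are those of $\mathcal T$ (finitely many), and the stack alphabet is the timeless stack alphabet $\Gamma$ of $\mathcal T$, unchanged — this is legitimate precisely because $\mathcal T$ has timeless stack, so no clock or age needs to be stored on the stack and the special clock $z$ plays no role. A time-elapse step of $\mathcal T$ is simulated by a (definable, non-orbit-finite) family of $\varepsilon$-transitions that replace the valuation $v = (r_0, \ldots, r_n)$ by $v' = (r_0 + \delta, \ldots, r_n + \delta)$ for an arbitrary $\delta \geq 0$, i.e.\ by the definable relation $\{(v, v') : r_0' - r_0 = r_i' - r_i \text{ for all } i,\ r_0' \geq r_0\}$; crucially this relation is \emph{not} of bounded span, which is fine since $\aaut$ need only have timeless (hence orbit-finite) stack, and the proposition earlier in the paper guarantees lhs and rhs are orbit-finite. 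A discrete rule $l \goesto{a, \phi, Y, op} l'$ of $\mathcal T$ becomes a transition of $\aaut$ reading $(a, r_0)$ — i.e.\ the input letter carries the current time in its single coordinate, which is where the ``$1$-dimensional input alphabet'' and the extra $+1$ in the dimension count come from — that checks the translated constraint $\phi$ against $v$, performs the register updates $r_i := r_0$ for $i \in Y$, and performs the corresponding $\push{\alpha}$ or $\pop{\alpha}$ on the timeless stack (with $\nop$ simulated as in the short form). Each such family of rules is clearly definable by a constraint with integer constants of the same magnitude as those in $\mathcal T$.

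Then I would establish the correctness of the simulation by exhibiting a bijection between runs of $\mathcal T$ (on uninitialized clocks) and runs of $\aaut$ that preserves the input timed word, via the invariant $x_i = r_0 - r_i$ maintained at every configuration; this is a routine induction on run length, checking that each of the four rule shapes (time elapse, $\nop$, $\push$, $\pop$) preserves the invariant and that acceptance conditions correspond (accepting with arbitrary stack in both models, final control location unchanged). Putting $\varepsilon$-elimination aside (the paper already allows $\varepsilon$-transitions in \trPDA), this yields $L(\aaut) = L(\mathcal T)$. Finally I would convert $\aaut$ to normal form using the Normal Form Lemma (Lemma~\ref{lem:nf}); since normalization is the only source of blowup and it is exponential in the dimension $n+1$ but polynomial in the size of the constraint representation, the resulting automaton has dimension $n+1$ and size exponential in $n$, as claimed. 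The main obstacle is not any single step but getting the bookkeeping of the invariant exactly right through the $\push$/$\pop$ operations and the uninitialized-clock start — in particular making sure that ``uninitialized'' on the $\mathcal T$ side corresponds to the unconstrained initial valuation on the $\aaut$ side, and that the diagonal clock constraints of $\mathcal T$ (including the diagonal \emph{pop} constraints, which however are absent here since the stack is timeless) translate faithfully; the rest is definitional.
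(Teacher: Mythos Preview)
Your basic encoding --- one register per clock plus one reference register $r_0$ for the current time, with clock value $x_i$ recovered as $r_0 - r_i$ --- is the right starting point and matches the paper. There are, however, two genuine gaps.

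First, your time-elapse transition is wrong as written: you send $(r_0, \ldots, r_n)$ to $(r_0+\delta, \ldots, r_n+\delta)$, but then $r_0' - r_i' = r_0 - r_i$ and the encoded clock values do not change, whereas in $\mathcal T$ every clock increases by $\delta$. A uniform shift of all coordinates is a time automorphism and hence vacuous. The correct elapse increments only $r_0$; the paper in fact avoids a separate elapse transition altogether by folding it into input reading (set $r_0' := t$ when the input letter has timestamp $t$, and check $r_0 \leq t$ for monotonicity).

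Second, and this is the real gap, you never establish that the state space $Q$ of $\aaut$ is orbit-finite --- a requirement in the very definition of \trPDA, separate from the lhs/rhs condition you appeal to via Proposition~\ref{prop:product orbit finite}. With your construction each control location carries an unconstrained copy of $\Q^{n+1}$, and once only $r_0$ is allowed to grow there is no bound on $r_0 - r_i$, so $Q$ is orbit-infinite by Lemma~\ref{lem:bounded-of}. The paper's remedy is the classical region construction: take regions as control locations, impose in each the invariant $\bigwedge_i 0 \leq r_0 - r_i \leq m$ (with $m$ the maximal constant of $\mathcal T$), and project away the register $r_i$ in any region where clock $x_i$ is unbounded (the truth of all constraints on $x_i$ is then fixed by the region). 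This region step, not the Normal Form Lemma, is where the exponential blowup in $n$ actually originates; applying Lemma~\ref{lem:nf} to an orbit-infinite set does not make it orbit-finite, so your size accounting is also off.
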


We sketch the construction.
By definition, \dtPDA accept monotonic words, while 
languages recognized by 1-dimensional \trPDA are non-monotonic in general.
Notice that monotonicity of input can be enforced by a \trPDA
by adding an additional special register $x_0$ in every control state, to store the timestamp of the last input,
and by intersecting the transition rules with the additional constraint
$x_0 \leq x'_0$
relating the values of the special register before and after a transition.
%

The most substantial difference is that \dtPDA use \emph{clocks}, while \trPDA use \emph{registers}.
A \dtPDA has clocks which can be reset and can be compared to an integer constant $x \sim k$,
or, in the case of diagonal constraints, a difference of clocks is compared to an integer constant $x - y \sim k$.
%
A \trPDA can simulate a \dtPDA by having one register $\hat x$ for each clock $x$.
A reset of $x$ is simulated by assigning the current input timestamp $t$ to $\hat x$;
a constraint $x \sim k$ is simulated by $x_0 - \hat x \sim k$
(where $x_0$ is the special register discussed above),
and a diagonal constraint $x - y \sim k$ is simulated by $\hat y - \hat x \sim k$.
(The ages for timed stack symbols could be treated similarly. 
This step is unnecessary for \dtPDA with timeless stack.)

To obtain a \trPDA we need to ensure that the set of states is orbit-finite.
This is done as follows. Let $m$ be the maximal absolute value of a constant in any constraint of a \dtPDA.
We perform the classical region construction of the \dtPDA, and take regions as control locations of the \trPDA.
In every control location, the defining constraint is the intersection of the region with the constraint
$
\bigwedge_{x \in X} 0 \leq x_0 - x \leq m
$,
which makes the set of states orbit-finite.
Additionally in every region, those registers that correspond to unbounded clocks are projected away. 
This is correct as the truth value of transitions constraints involving unbounded clocks does not depend on 
further elapse of time.
This completes the sketch of the construction claimed in Lemma~\ref{lem:from:DT-PDA:timeless-stack:to:trPDA}.

\ignore{
\begin{example}
	\label{ex:pda}
	Let's extend an automaton from Example~\ref{ex:nfa} with ...

	an orbit-finite \trPDA in short form ... (todo)
\end{example}
}

%

By Theorem~\ref{thm:Abdulla}, we can remove time from the stack of a \dtPDA
with a single exponential blowup in the number of control locations (w.r.t. the size of pop constraints),
and a linear increase in the number of clocks.
By Lemma~\ref{lem:from:DT-PDA:timeless-stack:to:trPDA}, we obtain a \trPDA with a further exponential blowup in the number of control locations (w.r.t. number of clocks).
Notice that the two blowups compose to a single exponential blowup, as summed up in the following corollary:

\begin{corollary} \label{cor:Abdulla}
	A \dtPDA with uninitialized clocks can be effectively transformed into a language-equivalent
	normal form \trPDA with timeless stack of exponential size (w.r.t. pop constraints and clocks) and 
	linear dimension.
\end{corollary}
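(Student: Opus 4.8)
The plan is to compose the two reductions already established, the key point being that the two exponential blowups multiply into a \emph{single} exponential. Start from a \dtPDA $\taut$ with uninitialized clocks, say with $n$ clocks, pop constraints of total size $p$, and $\ell$ control locations. First apply Theorem~\ref{thm:Abdulla} to obtain a language-equivalent \dtPDA $\uaut$ with timeless stack. One checks that this construction preserves the uninitialized-clocks assumption: the auxiliary clocks $\hat x_{\lesssim k}$ and $\hat x_{\gtrsim k}$ introduced there are reset at push time and thus never depend on an initial value, while the original clocks keep their semantics unchanged. By the theorem, $\uaut$ has $O(n)$ clocks and $2^{\poly(n,p)} \cdot \ell$ control locations, the exponential factor arising from guessing sets of reset restrictions and obligations, i.e., tuples $(x, \lesssim, k)$ or $(x, \gtrsim, k)$.

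Next apply Lemma~\ref{lem:from:DT-PDA:timeless-stack:to:trPDA} to $\uaut$, which indeed has uninitialized clocks and a timeless stack. It yields a language-equivalent normal form \trPDA $\aaut$ with timeless stack whose dimension is (number of clocks of $\uaut$)${} + 1 = O(n)$, hence linear, and whose size is exponential in the number of clocks of $\uaut$, i.e., $2^{O(n)}$, the extra exponential factor coming from the region construction performed on $\uaut$.

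Finally, combine the two bounds. The control locations of $\aaut$ are essentially the regions of $\uaut$ paired with the control locations of $\uaut$, so their number --- and hence the size of $\aaut$ --- is $2^{O(n)} \cdot 2^{\poly(n,p)} \cdot \ell = 2^{\poly(n,p)} \cdot \ell$, a single exponential in the number of clocks and the size of pop constraints and polynomial in the remaining parameters, while the dimension stays $O(n)$; this is exactly what the corollary claims. The only point to watch is precisely that the composition does not produce a double exponential: this is guaranteed by Theorem~\ref{thm:Abdulla} incurring only a \emph{linear} clock blowup, so that the subsequent region construction --- which is exponential in the number of clocks --- remains singly exponential in $n$. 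Had Theorem~\ref{thm:Abdulla} blown up the number of clocks exponentially, the two steps would compose to a double exponential. No further difficulty arises, since both underlying transformations are language-preserving by construction, so the composed transformation preserves the recognized timed language as well.
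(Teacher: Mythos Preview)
Your proposal is correct and follows essentially the same approach as the paper: compose Theorem~\ref{thm:Abdulla} with Lemma~\ref{lem:from:DT-PDA:timeless-stack:to:trPDA} and observe that the two exponential blowups compose to a single exponential precisely because the first step increases the number of clocks only linearly. Your explicit check that the uninitialized-clocks assumption survives Theorem~\ref{thm:Abdulla} is a detail the paper leaves implicit, and it is good that you spelled it out.
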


%

%
In turn, the blowups in the last corollary and in Theorem~\ref{thm:exptime} compose again  
to a single exponential blowup. Therefore Theorem~\ref{thm:exptime} yields 
the \exptime upper-bound for \dtPDA and thus
strengthens the \exptime upper bound of~\cite{AbdullaAtigStenman:DensePDA:12}.
%
%
%
%

 
\section{Upper bounds}  \label{sec:proof}

\label{sec:upper-bounds}

\noindent
We prove the upper bounds of Theorems~\ref{thm:nexptime} and~\ref{thm:exptime}.


\subsection{Equations over sets of integers}

We consider systems of equations, interpreted over sets of integers, of the following form
\begin{align*}
X_1 \ & = \ t_1 \\
& \ldots \\
X_n \ & = \ t_n,
\end{align*}
one for each variable $X_i$,
where right-hand side expressions $t_1, \ldots, t_n$ use variables $X_1 \ldots X_n$ appearing in
left hand sides, constants
$\{-1\}, \{0\},  \{1\}$,
union $\cup$, intersection $t \cap \{0\}$ with the constant $\{0\}$, 
and element-wise addition of sets of integers,
$X + Y \ = \ \{ x+y \ : \ x \in X \text{ and } y \in Y \}$.
Note that the use of intersection is assumed to be very limited; for systems of equations with
unrestricted intersection (e.g., $X \cap Y$), the non-emptiness problems is undecidable~\cite{JezO10}.

A solution $\nu$ of a system of equations assigns to every variable $X$ a set $\nu(X) \subseteq \Z$
of integers.
We are only interested in the least solution.
Note that intersection and addition distribute over union,
in the sense that
$(t_0 \cup t_1) \cap t_2 = (t_0 \cap t_2) \cup (t_1 \cap t_2)$, and
$(t_0 \cup t_1) + t_2 = (t_0 + t_2) \cup (t_1 + t_2)$. Thus, as long as the least
solution is considered, a system of equations may be equivalently presented by a set of inclusions
$X \supseteq t$, where $t$ does not use union, with the proviso that many inclusions may
apply to the same left-hand side variable $X$.

\begin{example}
	\label{ex:equations}
	For instance, the set of all integers is the least solution for $Z$ below;
	we can also succinctly represent large constants $m \in \Z$ as the least solution $\{m\}$ for $Z_{=m}$:
	\begin{align*}
		\begin{aligned}
	 	Z\ &\supseteq \ \{0\} \\
	 	Z\ &\supseteq \ \{1, -1\} + Z
		\end{aligned} 
		\qquad \qquad
	%
		\begin{aligned}
	 	Z_{=0}		\ &\supseteq \ \{0\} \\
		Z_{=2m}		\ &\supseteq \ Z_m + Z_m \\
		Z_{=2m+1}	\ &\supseteq \ Z_m + Z_m + \{1\}.
		\end{aligned}
	\end{align*}
	Infinite intervals of the form $\Z_{<m} = (-\infty, m)$ and $\Z_{>m} = (m, \infty)$
	are easily expressible as the least solutions of $Z_{<m}$ and $Z_{>m}$ in
	\begin{align*}
		\begin{aligned}
			Z_{<m} \ &\supseteq\ Z_{= (m-1)} \\
			Z_{<m} \ &\supseteq\ Z_{<m} + \{-1\}
		\end{aligned}
		\quad \textrm{ and } \quad
		\begin{aligned}
			Z_{>m} \ &\supseteq\ Z_{= (m+1)} \\
			Z_{>m} \ &\supseteq\ Z_{>m} + \{1\}
		\end{aligned}		
	\end{align*}
	We will use these definitions later in this section.
\end{example}

By introducing additional auxiliary variables, one
may easily transform the inclusions into the following binary form:
\begin{align}  \label{eq:int constants}
X \  \supseteq \ \{k\} \qquad
X \  \supseteq \ Y \cap \{0\} \qquad
X \  \supseteq \ Y + Z,
\end{align}
where $k$ is $-1, 0$ or $1$.
For future reference we distinguish a subclass of \emph{intersection-free} systems of equations which
use no intersection. All equations in the previous Example~\ref{ex:equations} are of this form.

The \emph{non-emptiness problem} asks, for a given system $\Delta$ of equations and a variable $X$ therein,
whether the least solution $\nu$ of $\Delta$ assigns to $X$ a non-empty set of integers.
The \emph{membership problem} asks, given an additional integer $k \in \Z$ (coded in binary),
whether $k \in \nu(X)$.

\begin{lemma} \label{lem:setsofeq:PTIME}
	The non-emptiness problem for intersection-free systems of equations is in P.
\end{lemma}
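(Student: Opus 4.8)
The plan is to compute the least solution by a monotone fixpoint iteration and show that it stabilizes after polynomially many rounds, measured not by the number of iterations of the naive Kleene procedure (which would be infinite, since the sets can be infinite intervals) but by tracking a suitable finite summary of each $\nu(X_i)$. First I would observe that, in an intersection-free system in binary form~\eqref{eq:int constants}, the only operations are $\{k\}$, union, and $+$; since $+$ distributes over $\cup$, the least solution of each variable is the set of all integers expressible as a sum $k_1 + \dots + k_m$ along some "derivation tree" built from the inclusions, where each leaf contributes a constant in $\{-1,0,1\}$. Consequently each $\nu(X_i)$ is a union of at most two arithmetic-progression-like pieces; more precisely I claim every nonempty $\nu(X_i)$ has the shape $[a_i, \infty)$, $(-\infty, b_i]$, $(-\infty,\infty)$, or a finite set, and in fact is an interval of $\Z$ (possibly a single point, possibly one- or two-sidedly infinite). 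The reason is that from any achievable value one can add $-1$ or $+1$ only if such a constant is reachable, but reachability of any nonzero constant together with a cycle forces a half-line; the bounded case forces a finite arithmetic structure that, because all steps are $0$ or $\pm 1$ and contiguous once a gap is crossed, is an interval.

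The key steps, in order: (1) Formalize the "derivation tree" characterization of the least solution and deduce that each $\nu(X_i)$ is either empty, a single point, or a $\Z$-interval (finite or infinite on one or both sides). (2) Represent each such set by a constant-size record: a flag for emptiness, a flag for each of the two possible infinite ends, and, if finite-ended on a side, the relevant endpoint — but note endpoints can be exponentially large in the number of variables (cf.\ the $Z_{=2m}$ gadget of Example~\ref{ex:equations}), so store them in binary. (3) Run the fixpoint on these records: each rule $X \supseteq \{k\}$, $X \supseteq Y + Z$, or $X \supseteq Y$ (intersection being absent) induces an obvious monotone update on records — emptiness propagates, endpoints add ($[a,\infty)+[c,\infty) = [a+c,\infty)$, $[a,b]+[c,d]=[a+c,b+d]$, etc.), and unions take the "outer hull" which for intervals that the system produces is again an interval once reachability is established (one must argue the pieces are never disjoint in the least solution — this follows because a disjoint union could only arise from $+$ of two finite sets of different residues, which cannot happen when all increments are $0,\pm1$). (4) Bound the number of rounds: the lattice of these records, ordered by inclusion, has height polynomial in the number of variables once we observe that an endpoint can only move monotonically and, crucially, its binary length is bounded — each round at most doubles a magnitude via a single $+$, and since the final magnitude is at most exponential, after polynomially many rounds either an endpoint escapes to $\pm\infty$ (detected by a simple acceleration: if after one full round an endpoint strictly decreased/increased, declare it infinite, since then the relevant cycle can be iterated forever) or it stabilizes. (5) Conclude that each round is polynomial-time (arithmetic on binary numbers of polynomial length, with the acceleration check) and there are polynomially many rounds, so the whole computation, and in particular deciding whether $\nu(X)$ is nonempty, is in $\mathrm P$.

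The main obstacle I expect is step (4), the termination/acceleration argument: the naive fixpoint does not terminate because of the unbounded intervals, so one must detect "this endpoint will run to infinity" without iterating forever. The clean way is to run one or two extra rounds and check whether any endpoint moved in the unbounded direction along a rule that also feeds (transitively) back into the same variable — equivalently, saturate with the constants first, then detect whether $+1$ (resp.\ $-1$) is reachable into $X$ from a value already in $\nu(X)$ via a cycle; if so, $\nu(X)$ is a half-line in that direction. Making this detection itself polynomial-time, and proving it is both sound and complete for the least solution, is the delicate part; it amounts to a careful analysis of which cyclic derivations contribute genuinely unbounded value, and I would handle it by a graph-reachability argument on the rule (hyper)graph combined with the observation that a single productive $\pm1$-cycle suffices. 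A secondary subtlety is keeping the binary endpoint representations from blowing up super-exponentially: one shows that before any endpoint exceeds, say, $2^{\mathrm{poly}}$ the acceleration must already have fired, capping all stored numbers at polynomial bit-length throughout.
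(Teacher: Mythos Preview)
Your approach is both incorrect and far more complicated than necessary.

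The central claim in step (1), that every nonempty $\nu(X_i)$ is an interval of $\Z$, is false. Take the two-inclusion system $X \supseteq \{-1\}$, $X \supseteq \{1\}$: the least solution is $\nu(X) = \{-1, 1\}$, which has a gap. More generally, using the gadget $Z_{=m}$ from Example~\ref{ex:equations} one can force $\nu(X)$ to be an arbitrary finite set of integers, so no bounded-size record of the kind you propose in step (2) can represent least solutions faithfully. Your justification that ``a disjoint union could only arise from $+$ of two finite sets of different residues'' overlooks that multiple inclusions for the same left-hand side already implement union directly, with no $+$ involved. Consequently steps (2)--(5), which all rest on the interval representation, collapse.

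More importantly, you are solving the wrong problem. The lemma asks only for \emph{non-emptiness}, not for (a summary of) the least solution. Without intersection the available operations are constants, union (via multiple inclusions), and $+$; none of these produces an empty set from nonempty arguments, and $Y + Z$ is nonempty iff both $Y$ and $Z$ are. Hence $\nu(X) \neq \emptyset$ iff there exists \emph{some} finite derivation tree for $X$ from the inclusions, regardless of the integer values carried. This is exactly non-emptiness of a context-free grammar whose nonterminals are the variables, whose terminals are $-1, 0, 1$, and whose productions are the inclusions with $+$ read as concatenation --- decidable in polynomial (indeed linear) time by the standard marking algorithm. That is the paper's proof, in two sentences; the machinery of endpoints, acceleration, and cycle detection is unnecessary here and, as noted, unsound as stated.
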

\begin{proof}
	If $\Delta$ is intersection-free, its non-emptiness reduces to non-emptiness
	of a context-free grammar over three letters $\set{-1, 0, 1}$.
	Variables of $\Delta$ are non-terminal symbols, and every inclusion gives raise to
	one production. Addition is replaced by concatenation.
\end{proof}

\begin{restatable}{lemma}{LemsetsofeqNP} \label{lem:setsofeq:NP}
	The non-emptiness and membership problems of systems of equations are both NP-complete.
	The membership problem is NP-hard already for intersection-free systems.
\end{restatable}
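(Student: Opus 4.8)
The plan is to prove the three claims of Lemma~\ref{lem:setsofeq:NP} separately: (i) NP-hardness of membership for intersection-free systems, (ii) NP membership for the membership problem (arbitrary systems), and (iii) NP membership for the non-emptiness problem. NP-hardness of non-emptiness then follows from (i), since membership reduces to non-emptiness by a standard trick: to test $k \in \nu(X)$, add a fresh variable $Y$ with the single inclusion $Y \supseteq X \cap \{0\}$ after first shifting, i.e.\ introduce $X' \supseteq X + Z_{=-k}$ using the succinct constant encoding from Example~\ref{ex:equations}, and $Y \supseteq X' \cap \{0\}$; then $\nu(Y) \neq \emptyset$ iff $0 \in \nu(X')$ iff $k \in \nu(X)$. (Note this last reduction does use intersection, which is why the hardness is only claimed for intersection-free systems in the membership case.)

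For (i), I would reduce from \textsc{Subset-Sum}: given positive integers $a_1, \ldots, a_n$ and target $t$, build for each $i$ a variable $C_i$ whose least solution is $\{0, a_i\}$ via $C_i \supseteq \{0\}$ and $C_i \supseteq C_i' $ where $C_i'$ is the succinct constant $\{a_i\}$ built with $O(\log a_i)$ auxiliary variables as in Example~\ref{ex:equations} (doubling and adding $1$). Then set $S \supseteq C_1 + C_2 + \cdots + C_n$ (chained via binary $+$). The least solution $\nu(S)$ is exactly the set of all subset sums, so $t \in \nu(S)$ iff the \textsc{Subset-Sum} instance is positive. Everything here is intersection-free and of size polynomial in the binary encoding.

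For (ii) and (iii), the key structural fact is a small-witness / polynomial-bound property for least solutions. I would argue that if $k \in \nu(X)$ then there is a derivation tree (in the grammar-like sense of Lemma~\ref{lem:setsofeq:PTIME}) witnessing this, and I would bound the relevant numeric data along a guessed path. Concretely: the least solution is obtained as the limit of the Kleene iteration, and membership of $k$ in $\nu(X)$ is witnessed by a finite derivation built from productions $X \supseteq \{k'\}$, $X \supseteq Y + Z$, and $X \supseteq Y \cap \{0\}$. The nondeterministic algorithm guesses such a derivation in a compressed form — a DAG rather than a tree, reusing subderivations — together with the integer value produced at each node, and verifies in polynomial time that each node's value is consistent with its production (for a $+$ node, the sum of children's values; for an $\cap\{0\}$ node, the value is $0$ and the child's value is $0$; for a constant node, the value matches). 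The main obstacle, and the heart of the argument, is proving that such a witness DAG can be taken of polynomial size and with all intermediate integer values of polynomially many bits. For this I would show: (a) one never needs to pass through the same variable twice on a root-to-leaf path with the same ``purpose'', so the DAG has at most a polynomial number of distinct nodes up to the value carried; and (b) the values stay bounded — intuitively, a value at a node is a $\pm1/\text{constant}$-combination of at most polynomially many building blocks, each itself of polynomial bit-length, so a careful accounting (exploiting that $\cap\{0\}$ nodes pin values to $0$ and thereby ``reset'' magnitude, and that the only large constants enter via the binary-encoded $\{k\}$ nodes) gives a polynomial bound on $\lvert k \rvert$ and on all intermediate magnitudes. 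Granting (a) and (b), the nondeterministic algorithm guesses the DAG and its labels, checks consistency in polynomial time, giving NP membership for both problems; combined with (i) and the membership-to-non-emptiness reduction above, this yields NP-completeness of non-emptiness and membership, with NP-hardness of membership already for intersection-free systems.
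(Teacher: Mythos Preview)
Your hardness argument in (i) via \textsc{Subset-Sum} is correct and is a perfectly good alternative to the paper's route, which simply cites Stockmeyer--Meyer~\cite{StockmeyerMeyer:1973}. The reduction from membership to non-emptiness is also essentially the paper's.

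The genuine gap is in your NP upper bound for (ii) and (iii). Your claims (a) and (b) are exactly the hard part, and neither holds in the form you state. For (a), the assertion that ``one never needs to pass through the same variable twice on a root-to-leaf path'' is false already for the trivial intersection-free system $X \supseteq X + \{1\}$, $X \supseteq \{0\}$: deriving $n \in X$ forces a path through $X$ of length $n$. For (b), a derivation tree can have exponentially many leaves, so ``at most polynomially many building blocks'' presupposes the DAG compression you have not established; and even if you label DAG nodes by (variable, value) pairs, the number of distinct values needed along a single branch can be exponential in magnitude (hence the node set is not obviously polynomial). In short, you have relocated the entire difficulty into ``a careful accounting'' that you do not carry out, and carrying it out amounts to reproving polynomial bounds for Parikh images of context-free grammars.

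The paper avoids this by a two-level guess. First, it guesses an ordered subset of the intersection inclusions $X_j \supseteq Y_j \cap \{0\}$ that will ``fire''; each guessed inclusion is then replaced by the axiom $X_j \supseteq \{0\}$, yielding an intersection-free system $\Delta_i$. Second, for each $i$ it checks $0 \in \nu_{\Delta_i}(Y_{i+1})$, which is membership in an intersection-free system and reduces (via the grammar view of Lemma~\ref{lem:setsofeq:PTIME}) to testing whether the Parikh image of a context-free language over $\{-1,0,1\}$ meets the linear set $\{(a,b,c) : c - a = 0\}$. This last test is in NP by Kopczy\'nski--To~\cite{KopczynskiTo:LICS2010}. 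That external result is what supplies the polynomial witness you are trying to build by hand; without it (or an equivalent bound on semilinear representations of Parikh images), your argument does not close.
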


\subsection{From \trPDA to systems of equations}

\label{sec:fromtrPDA:to:equations}

%

\noindent
We show an \exptime reduction of non-emptiness of orbit-finite \trPDA
to non-emptiness of systems of equations.
Additionally, if the stack is timeless, then the system of equations is intersection-free.
Fix an orbit-finite \trPDA $\aaut$, with states $Q$, 
stack alphabet $S$, and transition rules $\rhopush$ and $\rhopop$.

As a preprocessing we apply few simplifying transformations.
First, we rebuild $\aaut$ so that it has exactly one (therefore timeless) initial state, and exactly one final state.
Therefore there are unique initial and final control locations, corresponding to the unique timeless initial and final state.
Moreover, in the final state we let $\aaut$ unconditionally pop all symbols from the stack,
and assume w.l.o.g.~that $\aaut$ accepts when not only it is in the final state, but additionally the stack is empty.
%
%
As the next step of preprocessing, we make all states of $\aaut$ timed, by adding to every timeless state
(including the initial and final one) one dummy timed register.
In order to assure orbit-finiteness of $\aaut$, appropriate additional constraints on
the dummy registers are added to $\rhopush$ and $\rhopop$.
Thus the transformations described by now preserve orbit-finiteness of $\aaut$
can be done using its constraint representation.
As the last step of preprocessing, we transform $\aaut$ into normal form.
According to Lemma~\ref{lem:nf} this is doable in \exptime.

\mysubsubsection{Reachability relation}
As we focus on reachability, we ignore the input alphabet and assume the transition rules of $\aaut$ 
to be unlabeled, i.e., of the form
\[
\rhopush(q, q', s') \quad \textrm{ and } \quad
\rhopop(q, s, q'),
\]
where $q, q' \in Q$ and $s' \in S$.
Consequently, we assume also unlabeled transitions $\trans{c}{}{c'}$ between configurations.
Using the Projection Lemma~\ref{lem:proj},
the unlabeled transition rules are easily computed by projecting away the input alphabet.

We define the following binary reachability relation between states of $\aaut$.
Two states are related, written $\horiztrans{q}{}{q'}$,
if there is a computation of $\aaut$ from state $q$ to $q'$ which starts and ends with empty stack.
Formally, $\horiztrans{q}{}{q'}$ if for some configurations $(q_1, v_1), \ldots, (q_n, v_n)$, $\aaut$
admits the transitions: 
%
\begin{align} \label{eq:trans above}
\trans{(q, \ew)}{}{\trans{(q_1, v_1)}{}{\trans{\ldots}{}{\trans{(q_n, v_n)}{}{(q', \ew)}}}}.
\end{align}
It might be the case that $v_i = \varepsilon$ for some $1 \leq i \leq n$.
\begin{proposition} \label{claim:p1}
$L(\aaut)$ is non-empty iff $\horiztrans{}{}{}$ relates an initial state with a final one.
\end{proposition}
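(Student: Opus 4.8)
The plan is to argue both implications by unfolding the definitions of acceptance and of the reachability relation $\horiztrans{}{}{}$, leaning on the preprocessing assumptions that there is a unique (timeless) initial state, a unique final state, and that acceptance requires the stack to be empty in the final state.

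For the forward direction, suppose $L(\aaut) \neq \emptyset$. Then there is an accepting run, i.e., a sequence of transitions $\transtrans{(q_I, \ew)}{}{(q_F, \ew)}$ starting in the initial state $q_I$ with empty stack and ending in the final state $q_F$ with empty stack (the latter by our preprocessing assumption). This is precisely a witness of the form~\eqref{eq:trans above}, so $\horiztrans{q_I}{}{q_F}$, and $q_I$ is initial while $q_F$ is final. Conversely, if $\horiztrans{q}{}{q'}$ holds for some initial $q$ and final $q'$, then by definition there is a computation~\eqref{eq:trans above} from $(q, \ew)$ to $(q', \ew)$. Since $q$ is the initial state and the stack is empty at the start, this computation is a valid run of $\aaut$ from its initial configuration; since $q'$ is final and the stack is empty at the end, the run is accepting (again invoking the preprocessing assumption that acceptance coincides with reaching the final state with empty stack). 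Hence $L(\aaut) \neq \emptyset$. Here we are silently ignoring the input labels, which is justified because we have already passed to the unlabeled transition relation $\trans{c}{}{c'}$; emptiness of the labeled language is equivalent to reachability in the unlabeled system, since every labeled transition has an unlabeled shadow and reading any word (possibly empty) does not constrain reachability.

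Since the statement is essentially a reformulation of ``non-emptiness = initial-to-final reachability with empty stack at both ends'' in terms of the named relation $\horiztrans{}{}{}$, there is no real obstacle; the only point requiring a little care is to make sure the preprocessing normalisations (unique initial/final state, forced popping in the final state, acceptance with empty stack) are in force, so that ``accepting run'' and ``$\horiztrans{}{}{}$ relates an initial with a final state'' mean the same thing. This is exactly what the ``preprocessing'' paragraph above set up, so the proposition is immediate from those conventions together with the definition~\eqref{eq:trans above} of $\horiztrans{}{}{}$.
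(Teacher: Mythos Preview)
Your proposal is correct and matches the paper's treatment: the paper states Proposition~\ref{claim:p1} without proof, treating it as immediate from the preprocessing conventions (unique initial and final state, acceptance with empty stack) together with the definition~\eqref{eq:trans above} of $\horiztrans{}{}{}$, which is exactly what you have unfolded.
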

%
\begin{lemma} \label{lem:horiz least}
	The relation $\horiztrans{}{}{}$ is the least relation satisfying the following rules:
	\begin{align*}
		&\textrm{\em (base)} &&
			\frac{}{\horiztrans{q}{}{q}}
				&& \forall\ (q, q) \in Q^2 \\[2ex]
		&\textrm{\em (transitivity)} &&
			\frac{\horiztrans{q}{}{q'} \quad \horiztrans{q'}{}{q''}}{\horiztrans{q}{}{q''}}
				&& \forall\ (q, q', q'') \in Q^3 \\[2ex]
		&\textrm{\em (push-pop)} &&
			\frac{\horiztrans{\bar q}{}{\bar q'}}{\horiztrans{q}{}{q'}}
				&& \forall\ (q, \bar q, \bar q', q') \in \pushpop
	\end{align*}
	where $\pushpop$ is the subset of $Q^4$ defined as:
	\begin{align} \label{eq:rule}
		\pushpop = \setof
			{(q, \bar q, \bar q', q')}
			{\exists \bar s \in S. \begin{array}{c} \rhopush(q, \bar q, \bar s), \\  
			\rhopop(\bar q', \bar s, q')\end{array}}
	\end{align}
\end{lemma}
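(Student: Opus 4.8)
The plan is to prove that $\horiztrans{}{}{}$ coincides with the least relation $\leadsto_0$ closed under the three rules (base), (transitivity), and (push-pop), by establishing the two inclusions separately.

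For the inclusion $\leadsto_0 \subseteq \horiztrans{}{}{}$, I would show that $\horiztrans{}{}{}$ itself satisfies the three closure rules; since $\leadsto_0$ is by definition the least such relation, the inclusion follows. The (base) rule is immediate, taking the empty computation $n = 0$ in~\eqref{eq:trans above}. For (transitivity), given computations witnessing $\horiztrans{q}{}{q'}$ and $\horiztrans{q'}{}{q''}$, both starting and ending with empty stack, I concatenate them: the concatenation is a valid computation (the intermediate configuration $(q', \ew)$ matches), it starts and ends with empty stack, hence $\horiztrans{q}{}{q''}$. For (push-pop), suppose $(q, \bar q, \bar q', q') \in \pushpop$, witnessed by some $\bar s \in S$ with $\rhopush(q, \bar q, \bar s)$ and $\rhopop(\bar q', \bar s, q')$, and suppose $\horiztrans{\bar q}{}{\bar q'}$ via some computation from $(\bar q, \ew)$ to $(\bar q', \ew)$ as in~\eqref{eq:trans above}. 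I prepend the push transition $\trans{(q, \ew)}{}{(\bar q, \bar s)}$, then replay the witnessing computation for $\horiztrans{\bar q}{}{\bar q'}$ with the extra symbol $\bar s$ sitting untouched at the bottom of the stack (legitimate because PDA transitions only touch the top of the stack, so appending a fixed suffix $w = \bar s$ to every configuration preserves validity), reaching $(\bar q', \bar s)$, and finally append the pop transition $\trans{(\bar q', \bar s)}{}{(q', \ew)}$. This is a computation from $(q, \ew)$ to $(q', \ew)$ with empty stack at both ends, so $\horiztrans{q}{}{q'}$.

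For the converse inclusion $\horiztrans{}{}{} \subseteq \leadsto_0$, I would argue by induction on the length $n$ of the witnessing computation~\eqref{eq:trans above}. If $n = 0$ then $q = q'$ and $\horiztrans{q}{}{q}$ holds by (base). For the inductive step, consider the stack height along the computation; it is $0$ at both ends. Let $i$ be the \emph{first} index at which the stack returns to height $0$ after the first step, i.e., $v_i = \ew$ and the stack is non-empty at all of $v_1, \ldots, v_{i-1}$ (such an $i$ exists, with $i \le n$, since the stack is empty at the end). The very first transition must be a push, say via $\rhopush(q, q_1, \bar s)$, writing the single symbol $\bar s$; since between steps $1$ and $i$ the stack never drops below height $1$, this $\bar s$ stays at the bottom throughout, and the $i$-th transition — the one that first empties the stack — must be the pop that removes exactly this $\bar s$, i.e., it uses $\rhopop(q_{i-1}, \bar s, q_i)$. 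Stripping off this bottom symbol, the sub-computation from $(q_1, \ew)$ to $(q_{i-1}, \ew)$ is a valid computation starting and ending with empty stack, so by induction $q_1 \leadsto_0 q_{i-1}$, and hence by (push-pop), since $(q, q_1, q_{i-1}, q_i) \in \pushpop$, we get $q \leadsto_0 q_i$. The remaining computation from $(q_i, \ew)$ to $(q', \ew)$ has length $n - i < n$ and starts and ends with empty stack, so by induction $q_i \leadsto_0 q'$; by (transitivity), $q \leadsto_0 q'$.

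The main obstacle is the careful bookkeeping in the second inclusion: correctly identifying the matching push/pop pair by a first-return-to-zero argument on the stack-height profile, and verifying that the sub-computations obtained by deleting the bottom symbol are genuinely valid PDA computations (again using that transitions are local to the stack top). Everything else is routine concatenation and the standard appeal to the least-fixed-point characterization.
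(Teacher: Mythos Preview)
Your proof is correct and follows the standard first-return-to-zero decomposition argument for balanced PDA computations; the paper itself omits the proof of this lemma as folklore, so your approach is essentially what the authors would have in mind. One minor quibble on indexing: in the paper's display~\eqref{eq:trans above} the parameter $n$ counts intermediate configurations rather than transitions, so your base case ``$n=0$ implies $q=q'$'' and your bound ``$i\le n$'' tacitly use a different convention (number of transitions); once you say so explicitly, and allow the first return to zero to occur at the final configuration itself, the argument goes through without change.
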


\mysubsubsection{Orbitization}
Recall that the transition rules $\rhopush$ and $\rhopop$ are equivariant, i.e., are unions of orbits, possibly infinitely many.
It follows that the relation $\horiztrans{}{}{} \ \subseteq Q^2$ is also equivariant, i.e., a union of orbits of $Q^2$.
Call an orbit $O \subseteq Q^2$ \emph{inhabited} if $\horiztrans{q}{}{q'}$ for some $(q, q')\in O$. 
If this is the case, since $\horiztrans{}{}{}$ is equivariant, and thus a union of orbits,
then \emph{every} pair $(q, q') \in O$ satisfies
$\horiztrans{q}{}{q'}$.
It thus makes sense to think of $\horiztrans{}{}{}$ as containing whole orbits rather than individual
elements.
Let \emph{initial-final} orbits in $Q^2$ be the ones containing pairs $(i, f)$ for $i$ initial and $f$ final state;
these orbits are determined by the unique initial and final control locations.
%

\begin{proposition} \label{claim:p2}
$L(\aaut)$ is non-empty iff an initial-final orbit in $Q^2$ is inhabited.
\end{proposition}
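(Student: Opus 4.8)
The plan is to derive Proposition~\ref{claim:p2} directly from Proposition~\ref{claim:p1} together with the equivariance observations just recorded. Proposition~\ref{claim:p1} already tells us that $L(\aaut) \neq \emptyset$ iff $\horiztrans{q}{}{q'}$ holds for some initial state $q$ and final state $q'$; it remains only to reformulate the right-hand side in terms of \emph{orbits} rather than individual pairs, which is exactly what the paragraph on orbitization sets up.

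First I would make explicit that $\horiztrans{}{}{}$ is equivariant. This follows because $\rhopush$ and $\rhopop$ are definable, hence equivariant, and by Lemma~\ref{lem:horiz least} the relation $\horiztrans{}{}{}$ is the least relation closed under the three rules \emph{(base)}, \emph{(transitivity)}, \emph{(push-pop)}; each of these rules is itself invariant under time automorphisms (the rule sets $Q^2$, $Q^3$, $\pushpop$ are equivariant since $\rhopush, \rhopop$ are), so applying any time automorphism $\pi$ to a derivation yields another valid derivation. Consequently $\pi \cdot \horiztrans{}{}{}$ is also a relation closed under the rules, and by minimality $\horiztrans{}{}{} \subseteq \pi \cdot \horiztrans{}{}{}$; applying the same argument to $\pi^{-1}$ gives equality. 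Hence $\horiztrans{}{}{}$ is a union of orbits of $Q^2$, which justifies calling an orbit $O$ \emph{inhabited} when $\horiztrans{q}{}{q'}$ for \emph{some} — equivalently \emph{every} — $(q,q') \in O$.

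Next I would connect this to initiality and finality. The preprocessing step guarantees a unique timeless initial state $i$ and a unique timeless final state $f$ (after the dummy-register step these sit in uniquely determined initial and final control locations). A pair $(q,q')$ with $q$ initial and $q'$ final lies in the orbit $\orbit{(i,f)}$, and conversely every element of $\orbit{(i,f)}$ is of this form: a time automorphism preserves control locations, so it maps the initial control location to itself and likewise for the final one, hence maps initial states to initial states and final states to final states. Therefore ``$\horiztrans{q}{}{q'}$ for some initial $q$, final $q'$'' is equivalent to ``the initial-final orbit $\orbit{(i,f)}$ is inhabited''. Combining this equivalence with Proposition~\ref{claim:p1} yields Proposition~\ref{claim:p2}.

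The argument is essentially bookkeeping, so I do not expect a serious obstacle; the only point requiring a little care is the equivariance of $\horiztrans{}{}{}$, and in particular the fact that the \emph{(push-pop)} rule with its side condition \eqref{eq:rule} (existential quantification over $\bar s \in S$) is preserved by time automorphisms — this rests on $S$ being equivariant and on $\pi$ acting simultaneously and consistently on states and stack symbols. Once that is granted, the minimality argument of Lemma~\ref{lem:horiz least} transfers verbatim to show $\horiztrans{}{}{}$ is equivariant, and the rest is immediate.
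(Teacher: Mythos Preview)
Your approach is correct and matches the paper's (implicit) reasoning: the proposition is stated without proof precisely because it is meant to follow immediately from Proposition~\ref{claim:p1} together with the equivariance of $\horiztrans{}{}{}$, which the paper records in the orbitization paragraph just before.

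One small correction: you speak of \emph{the} initial-final orbit $\orbit{(i,f)}$ as if it were unique, but after the preprocessing step that adds a dummy register to the (formerly timeless) initial and final states, the set of initial-final pairs is essentially $\Q^2$ and hence orbit-infinite --- this is why the paper says ``initial-final \emph{orbits}'' in the plural. Your argument is unaffected: if $\horiztrans{q}{}{q'}$ with $q$ initial and $q'$ final, then $\orbit{(q,q')}$ is \emph{an} inhabited initial-final orbit; conversely, an inhabited initial-final orbit contains by definition some initial-final pair $(i,f)$, and equivariance gives $\horiztrans{i}{}{f}$. Just drop the uniqueness claim.
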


Likewise, the relation $\pushpop \subseteq Q^4$ is equivariant, i.e., a union of possibly infinitely many orbits in $Q^4$. 
Our aim now is to `orbitize' the rules of Lemma~\ref{lem:horiz least} so that they speak of \emph{orbits} of
pairs of states, instead of individual pairs of states, without losing any precision.

The (base) rules orbitizes easily; it speaks of \emph{diagonal} orbits, i.e., orbits of diagonal pairs $(q, q) \in Q^2$.
For treating the other rules, we need to speak of projections of $n$-tuples $w$ onto two coordinates.
We use the notation $w_{i j}$ to denote the projection of $w$ onto coordinates $i, j$, for $1\leq i < j \leq n$;
the same notation will be used for the projection of a set of tuples. 
For $O$ an orbit in $Q^n$, $O_{i j}$ is necessarily an orbit in $Q^2$.


\ignore{
We orbitize the rules of Lemma~\ref{lem:horiz least} as follows. 
\begin{itemize}
\item \emph{(orbit base)} 
every diagonal orbit in $Q^2$ is inhabited. 
\item \emph{(orbit transitivity)}
for every orbit $O_{123}$ in $Q^3$, if $O_{1 2}$ and $O_{2 3}$ are inhabited, then $O_{1 3}$ is inhabited too.
\item \emph{(orbit push-pop)}
for every orbit $O_{1234}$ in $Q^4$, 
if $O_{2 3}$ is inhabited
and $\pushpop(O_{1234})$ holds, then $O_{1 4}$ is inhabited. 
\end{itemize}
}

\begin{lemma} \label{lem:orbit least}
	An orbit $O$ of $Q^2$ is inhabited if, and only if, $\inhabited O$ is derivable according to the rules below:
	\begin{align*}
		&\textrm{\em (orbit base)} &&
			\frac{}{\inhabited O}
				&& \forall \textrm{ diag. orbit } O \textrm{ in } Q^2 \\[2ex]
		&\textrm{\em (orbit transitivity)} &&
			\frac{\inhabited {O_{12}} \quad \inhabited {O_{23}}}{\inhabited {O_{13}}}
				&& \forall \textrm{ orbit } O \textrm{ in } Q^3 \\[2ex]
		&\textrm{\em (orbit push-pop)} &&
			\frac{\inhabited {O_{23}}}{\inhabited {O_{14}}}
				&& \forall \textrm{ orbit } O \textrm{ in } \pushpop
	\end{align*}
\end{lemma}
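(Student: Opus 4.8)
The plan is to show both directions by connecting the orbit-level rules with the concrete rules of Lemma~\ref{lem:horiz least}, exploiting equivariance of $\horiztrans{}{}{}$ and of $\pushpop$. For the ``if'' direction (soundness), I would argue by induction on the derivation of $\inhabited O$ that every orbit $O$ derivable by the orbit rules is indeed inhabited. The (orbit base) case is immediate since $\frac{}{\horiztrans{q}{}{q}}$ gives $\horiztrans{q}{}{q}$ for every $q$, so every diagonal orbit is inhabited. For (orbit transitivity), given an orbit $O$ in $Q^3$ with $O_{12}$ and $O_{23}$ inhabited, pick any $(q,q',q'') \in O$; then $(q,q') \in O_{12}$ and $(q',q'') \in O_{23}$, so by inhabitation $\horiztrans{q}{}{q'}$ and $\horiztrans{q'}{}{q''}$ hold, hence $\horiztrans{q}{}{q''}$ by (transitivity), so $O_{13}$ is inhabited. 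The key subtlety here: the \emph{same} witness $q'$ must serve both, which is exactly why we quantify over orbits in $Q^3$ rather than independently over orbits in $Q^2$ — the orbit $O$ records the joint configuration of all three states. The (orbit push-pop) case is analogous, using that $(q,\bar q,\bar q',q')\in O \subseteq \pushpop$ together with $\horiztrans{\bar q}{}{\bar q'}$ yields $\horiztrans{q}{}{q'}$ by the (push-pop) rule.

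For the ``only if'' direction (completeness), I would show that if $O$ is inhabited then $\inhabited O$ is derivable, by induction on the length of a derivation of $\horiztrans{q}{}{q'}$ in the proof system of Lemma~\ref{lem:horiz least} for some $(q,q') \in O$. If $\horiztrans{q}{}{q'}$ was obtained by (base), then $q = q'$, so $O$ is a diagonal orbit and (orbit base) applies. If it was obtained by (transitivity) from $\horiztrans{q}{}{q'}$ and $\horiztrans{q'}{}{q''}$, then the tuple $(q,q',q'')$ lies in some orbit $O' $ in $Q^3$ with $O'_{13} = \orbit{(q,q'')} = O$; by the induction hypothesis applied to the shorter derivations of $\horiztrans{q}{}{q'}$ and $\horiztrans{q'}{}{q''}$, the orbits $O'_{12}$ and $O'_{23}$ are derivable, so (orbit transitivity) yields $\inhabited{O'_{13}} = \inhabited O$. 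The (push-pop) case is handled the same way, taking $O'$ to be the orbit of $(q,\bar q,\bar q',q')$ in $\pushpop$ and invoking the induction hypothesis on the derivation of $\horiztrans{\bar q}{}{\bar q'}$.

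I expect the main obstacle to be purely bookkeeping rather than conceptual: one must be careful that in the completeness direction the orbit $O'$ in $Q^3$ (resp.\ in $\pushpop \subseteq Q^4$) chosen as a witness genuinely contains the concrete tuple arising in the Lemma~\ref{lem:horiz least} derivation, and that its relevant projections are precisely the orbits whose inhabitation the induction hypothesis delivers — this uses that projection commutes with the action of time automorphisms, so that $\orbit{w_{ij}} = (\orbit{w})_{ij}$, which was already noted in the text. A secondary point worth stating explicitly is the remark preceding the lemma: inhabitation is an orbit-invariant property (if one pair in $O$ is related then all are, since $\horiztrans{}{}{}$ is equivariant), so it is well defined to speak of derivations over orbits, and the two induction arguments are really manipulating the same underlying relation presented at two levels of granularity.
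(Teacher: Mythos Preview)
Your proposal is correct and matches the paper's approach exactly: both directions by induction on the size of derivations, with the ``if'' direction relying on equivariance of $\leadsto$ to pass from inhabitation of an orbit to the concrete relation holding at any chosen representative. The paper's printed proof is a two-sentence summary of precisely the argument you spell out, and your handling of the key subtlety in (orbit transitivity)---that the shared middle state $q'$ is supplied by working with an orbit in $Q^3$ rather than two independent orbits in $Q^2$---is the crux the paper leaves implicit.
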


\begin{proof}
	Both directions are proved by induction on the size of derivations.
	The ``if'' direction uses equivariance of $\horiztrans{}{}{}$.
	\ignore{
	Let $O$ be the orbit of a pair $(q, q') \in Q^2$.
	If $O$ is inhabited, by definition $\horiztrans{q}{}{q'}$ is derivable by the rules of Lemma~\ref{lem:horiz least}.
	By induction on the size of a derivation of $\horiztrans{q}{}{q'}$
	we can prove that $\inhabited O$ is derivable by the rules above.
	We prove the opposite implication by induction on the size of the derivation.
	The interesting case is transitivity.
	Fix an orbit $O_{123}$ of $Q^3$ and suppose $O_{1 3}$ is inhabited, by application of the 
	(orbit transitivity) rule to $O_{1 2}$ and $O_{2 3}$.  
	We need to show that $\horiztrans{q_1}{}{q_3}$ for all $(q_1, q_3) \in O_{1 3}$.
	%
	Choose some $q_2$ with $(q_1, q_2, q_3) \in O_{123}$.
	By induction assumption $\horiztrans{q_1}{}{q_2}$ 
	and $\horiztrans{q_2}{}{q_3}$, 
	and hence  $\horiztrans{q_1}{}{q_3}$ is derivable as required.
	}
\end{proof}

\mysubsubsection{Discretization}
The set $Q^2$ is orbit-infinite.
We encode it as a Cartesian product of an orbit-finite set and the integers $\Z$.
This will allow us to reduce non-emptiness of $\lang \Aa$ to non-emptiness of a system of equations.

Consider two states $q = \elt{l, v}, q' = \elt{l', v'} \in Q$,
where $v \in \Q^{n_l}$ and $v' \in \Q^{n_{l'}}$.
Since $Q$ is orbit finite, by Lemma~\ref{lem:bounded-of} we know that both $v$ and $v'$ have uniformly bounded span,
say $u$.
However, the joint vector $(v, v') \in \Q^{n_l + n_{l'}}$ needs not have uniformly bounded span 
(and indeed $Q^2$ is orbit-infinite),
since rationals in $v$ might be arbitrarily far from rationals in $v'$.
The idea is to ``factorize'' out the orbit infiniteness of $Q^2$
by shifting the second vector $v'$ closer to $v$ (in order to have span at most $u+1$),
and by keeping track separately of the shift as the only unbounded component.

The first technical step is to extend the tuple $v$ in every state $q = \elt{l, v} \in Q$
with one rational number $t$, written $q \concat t = \elt{l, (v, t)}$,
called the \emph{reference point} of $q \concat t$.
Reference points allow to precisely shift vectors so they become closer.
%
Let $\min v$ be the component of $v$ with minimal value.
We define
\begin{align*} 
	\dot Q = &\setof{q \concat t}{q = \elt{l, v} \in Q, t \in \Q, \min{v} \leq t < \min{v} + 1}. 
\end{align*}
The set of extended tuples $\dot Q$ is definable and orbit-finite (of uniform span at most $u+1$),
and contains exponentially many orbits.
%
%
While ${\dot Q}^2$ is not orbit-finite itself,
we can now define its subset $\psQ$ of pairs with equal reference points:
\begin{align*}
	\psQ = \setof{ (q\concat t, q'\concat t) \in {\dot Q}^2}{ t \in \Q }.
\end{align*}
%
%
Thus, $\psQ$ contains only those pairs of vectors which are close in a precise sense.
Applying Corollary~\ref{cor:nf} to $\psQ$ we obtain:
\begin{proposition} \label{prop:of}
	$\psQ$ is orbit-finite with uniform span $u+1$ and its decomposition into
	orbits is computable in \exptime.
\end{proposition}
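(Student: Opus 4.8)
The plan is to apply the Normal Form Lemma (Lemma~\ref{lem:nf}), or rather its effective \exptime\ guarantee, directly to the set $\psQ$, after checking that $\psQ$ is indeed a definable set given by a constraint representation that can be computed in \exptime\ from the data of $\aaut$. First I would observe that $\dot Q$ is definable: it is obtained from $Q$, which is already in normal form after the preprocessing, by adding one coordinate $t$ per control location together with the constraint $\min v \le t < \min v + 1$; since $\min v$ is a quantifier-free definable function of $v$ (one disjunct per choice of which coordinate is minimal), this is again a constraint, so $\dot Q$ is definable, and the computation is polynomial. Similarly $\psQ \subseteq \dot Q^2$ is cut out of $\dot Q \times \dot Q$ by the constraint equating the two reference coordinates, so $\psQ$ is definable and its constraint representation is computable in polynomial time from that of $\dot Q$.

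Next I would bound the span of $\psQ$ to show it is orbit-finite. By Lemma~\ref{lem:bounded-of} it suffices to exhibit a uniform bound on $\max w - \min w$ for $w = (v,t,v',t)$ ranging over $\psQ$. By orbit-finiteness of $Q$ and Lemma~\ref{lem:bounded-of} again, $\max v - \min v \le u$ and $\max v' - \min v' \le u$ for the uniform bound $u$ associated with $\aaut$'s states. The defining constraint forces $\min v \le t < \min v + 1$ and $\min v' \le t < \min v' + 1$, hence $|\min v - \min v'| < 1$, and therefore every rational occurring in $w$ lies within the interval $[\min v - 1, \min v + u + 1)$ (or symmetrically), giving span at most $u+1$. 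Thus $\psQ$ is an equivariant subset of a fixed $\Q^N$ with uniformly bounded span, so by Lemma~\ref{lem:bounded-of} it is orbit-finite.

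Finally, for effectivity I would invoke Corollary~\ref{cor:nf}: an orbit-finite definable set, presented by a constraint representation, has its decomposition into orbits computable in \exptime. Since we have produced such a constraint representation of $\psQ$ in polynomial time, composing with Corollary~\ref{cor:nf} yields the decomposition of $\psQ$ into orbits in \exptime\ overall, which is exactly the statement of Proposition~\ref{prop:of}. I do not foresee a genuine obstacle here — the proposition is essentially a bookkeeping corollary of the Normal Form Lemma — but the one point requiring care is verifying that $\min v \le t < \min v + 1$ really can be expressed as a bona fide constraint (a positive boolean combination of diagonal inequalities): one writes it as the disjunction over coordinates $j$ of the conjunction $\bigl(\bigwedge_i v_j \le v_i\bigr) \wedge (v_j \le t) \wedge (t - v_j < 1)$, which is manifestly of the allowed form, and symmetrically the span bound used in the second paragraph must be tracked through this added coordinate so that the constant $u+1$ claimed in the statement comes out correctly.
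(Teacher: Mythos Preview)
Your proposal is correct and follows exactly the route the paper takes: the paper simply states that Proposition~\ref{prop:of} is obtained by ``applying Corollary~\ref{cor:nf} to $\psQ$'', and you have spelled out the two prerequisites (definability of $\psQ$ via an explicit constraint representation, and the uniform span bound via Lemma~\ref{lem:bounded-of}) that make Corollary~\ref{cor:nf} applicable. The only cosmetic slip is the interval $[\min v - 1, \min v + u + 1)$, which has length $u+2$; once you fix a case (say $\min v \le \min v'$) the correct interval is $[\min v, \min v + u + 1)$, yielding the claimed span bound $u+1$.
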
 

\newcommand{\proj}{\pi}

The idea now is to represent an arbitrary pair in $Q^2$ as an element from $\psQ$ plus an integer representing ``shift'' of the second vector.
Formally, we define the following \emph{shift mapping} $\proj : \psQ \times \Z \to Q^2$:
\[
\proj \ : \ (q \concat t, q' \concat t), z \ \mapsto \ q, q' + z,
\]
where $q'+z$ is the state obtained from $q' = \elt{l', v'}$ by adding $z$ to all time values in $v'$.
Thus the shift mapping forgets about the equal reference points of $q$ and $q'$, and shifts $q'$ by $z$.
Note that every pair of states in $Q^2$ is of the form $(q, q'+z)$, for some $z \in \Z$ and $(q \concat t, q' \concat t) \in \psQ$,
i.e., the shift mapping is surjective.
To distinguish between orbits of $Q^2$ and $\psQ$, we use lowercase $o$ for the latter.
Every orbit $O$ of $Q^2$ is the image, under the shift mapping $\proj$, of $o \times \{ z\}$, for some $z \in \Z$
and some orbit $o$ of $\psQ$.
We will call $O$ the \emph{image orbit} of $(o, z)$.
By the \emph{inverse image} of an equivariant set $X \subseteq Q^2$ we mean the set of all 
pairs $(o, z)$ whose image orbit $O$ is included in $X$.
We will call $(o, z)$ \emph{inhabited} if its image orbit is so.

The inverse image of an orbit $O$ may contain many pairs $(o, z)$, as shown in the example below,
but finitely many due to the simplifying assumption that all states are timed.
\begin{example}
	Consider the orbit $O \subseteq Q^2$ defined by
	$\phi(x, y, x') \ \equiv \ x < y < x+1 \ \land \ y+6 < x' < x+7$,
	with $x, y$ timed registers of one state and $x'$ timed register of the other.
	The inverse image of $O$ contains the pair $(o, 6)$, where $o \subseteq \psQ$ is defined by
	$x < y < t = t' = x' < x+1$,
	but also the pair $(o', 7)$, where $o' \subseteq \psQ$ is defined by
	$y-1 < x' < x = t = t' < y$.
\end{example}
%
%
The inverse image of a definable set admits a decomposition into finitely many sets of a particularly simple form:
\begin{restatable}[Decomposition Lemma]{lemma}{DecompositionLemma}
	\label{lem:decomp}
	For a definable subset $X \subseteq Q^2$, its inverse image decomposes into a finite union
	of sets of the form
	\[
		\{o\} \times I,
	\]
	where $o$ is an orbit in $\psQ$, and $I \subseteq \Z$
	is one of 
\begin{align*}
	\Z_{<m} = \{ z  : z < m \}, \quad  \{ m \}, \quad 
	\Z_{>m} = \{ z  : z > m \},   
\end{align*}
for $m \in \Z$.
	A decomposition of $X$ is computable in \exptime.  
\end{restatable}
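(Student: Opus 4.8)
The plan is to work with the normal-form constraint representation of $X \subseteq Q^2$ guaranteed by Lemma~\ref{lem:nf}, which presents $X$ as a finite disjunction of minimal constraints and extensions of minimal constraints. Since taking inverse images commutes with disjoint unions (a pair $(o,z)$ is in the inverse image of $\bigcup_j X_j$ iff its image orbit lies in some $X_j$, and each image orbit, being a single orbit, lies entirely in one $X_j$ or is split — here I should note that image orbits are single orbits of $Q^2$, so each lies in exactly the $X_j$'s it is contained in), it suffices to compute the inverse image of a single orbit $O$ of $Q^2$ and of a single extension of an orbit. First I would handle the case where $X = O$ is a single orbit. Recall $O$ is the image orbit of $(o,z)$ where $o$ is an orbit of $\psQ$ and $z \in \Z$; I claim that for each orbit $o$ of $\psQ$ there is at most one integer $z$ with $\proj(o \times \{z\}) = O$, because the span of pairs in $Q^2$ is bounded componentwise but once one fixes the internal configuration $o$ (which includes the common reference point) the shift $z$ is pinned down by $O$. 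Since all states are timed, the reference point of $q$ lies within distance $1$ of $\min v$ and similarly for $q'$, so only finitely many orbits $o$ of $\psQ$ map into $O$ at all, and for each such $o$ exactly one $z$ works; hence the inverse image of a single orbit is a finite union of singletons $\{o\} \times \{m\}$. Iterating over the finitely many orbits $o$ of $\psQ$ (enumerable in \exptime\ by Corollary~\ref{cor:nf} applied to $\psQ$, cf.~Proposition~\ref{prop:of}) and checking the defining constraint of $O$ against the shifted constraint of $o$ is effective.

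Next I would handle the case where $X$ is an extension $\widehat O$ of an orbit $O$, say the $k$-extension. Here the key observation is the ``weakness of constraints'' already flagged before Lemma~\ref{lem:nf}: the minimal constraint $\phi$ of $O$ becomes, in the extension, a constraint $\widehat\phi$ obtained by relaxing certain equalities $=$ across a gap into inequalities $\leq$. Fix an orbit $o$ of $\psQ$. The pairs $(o,z)$ in the inverse image of $\widehat O$ are exactly those $z$ for which the image orbit $\proj(o\times\{z\})$ — an orbit whose minimal constraint is obtained from $o$'s constraint by adding $z$ to the coordinates of the second component — is contained in $\widehat O$. As $z$ grows, the second component $v' + z$ moves monotonically away from $v$; containment in $\widehat O$ is governed by whether the gap(s) relaxed in $\widehat\phi$ are at least $k$ wide, which is an inequality of the form ``(gap as an affine function of $z$) $\sim k$''. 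Solving such an inequality over $\Z$ yields precisely a set of one of the three forms $\Z_{<m}$, $\{m\}$, $\Z_{>m}$ (or $\emptyset$, which we drop; or all of $\Z$, which is $\Z_{<\infty}$ absorbed by other pieces — handled by splitting). A small amount of care is needed when $o$'s configuration already places $v$ and $v'$ in a single ``block'' (no integer gap between them), in which case the image orbit never has the relevant gap and the contribution is either empty or all of $\Z_{>m}$ for the appropriate threshold. Collecting over all finitely many orbits $o$ of $\psQ$ gives the claimed finite union of sets $\{o\}\times I$, and the whole computation stays in \exptime\ since enumerating orbits of $\psQ$ costs \exptime, there are exponentially many minimal constraints in the normal form of $X$, and for each pair the threshold $m$ is computed in polynomial time (binary arithmetic on the integer constants, using the constructions of Example~\ref{ex:equations} only implicitly).

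The main obstacle I expect is the bookkeeping in the extension case: correctly identifying, for a given orbit $o$ of $\psQ$ and a given $k$-extension $\widehat O$, which relaxed equality in $\widehat\phi$ corresponds to the ``seam'' between the first and second components after shifting by $z$, and verifying that containment $\proj(o\times\{z\})\subseteq\widehat O$ is indeed equivalent to a single threshold inequality on $z$ rather than a more complicated Boolean combination. The point is that the internal structure of $v$ and the internal structure of $v'$ are already fixed by $o$ and do not change with $z$; only the relative position across the seam changes, and it changes monotonically, so exactly one inequality per relaxed cross-seam equality is produced, and their conjunction over the (at most one, after fixing the seam) relevant relaxed equalities again yields an interval of the three permitted shapes. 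Once this monotonicity-plus-single-seam phenomenon is pinned down, the rest is a routine but careful case analysis, and the \exptime\ bound follows by combining the \exptime\ enumeration of orbits of $\psQ$ with the polynomial per-orbit work.
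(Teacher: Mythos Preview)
Your approach is correct in spirit but takes a different route from the paper. You apply the Normal Form Lemma to $X\subseteq Q^2$ itself and then, for each piece (orbit or $k$-extension), compute its inverse image under the shift mapping $\proj$ by analyzing how the cross-differences between the two state components depend affinely on $z$. The paper instead first \emph{lifts} $X$ to the definable set $\ddot X=\{(q\concat t,\,q'\concat t')\in\dot Q^2:(q,q')\in X\}$ with two \emph{independent} reference points, and applies the Normal Form Lemma there, choosing the gap threshold $K$ strictly larger than the span bound of $Q$. This choice forces every gap of size $K$ in an orbit of $\ddot X$ to lie \emph{between} the two reference points (never inside a single state), so the picture becomes one-dimensional: an integer-difference orbit of $\ddot X$ is literally a pair $(o,z_0)$ with $o\in\psorb$ and $z_0\in\Z$, and its $K$-extension is read off directly as $\{o\}\times\Z_{\geq z_0}$ or $\{o\}\times\Z_{\leq z_0}$.

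What the paper's detour buys is exactly the point you flag as the main obstacle: by adding the reference points first and choosing $K$ large, there is automatically a single ``seam'' and no interleaving bookkeeping is needed. Your direct approach has to argue separately that (i) the $k$ produced by the Normal Form Lemma on $X$ can be taken larger than the state span, so that any $k$-gap in an orbit of $Q^2$ separates the two states cleanly, and (ii) for a fixed $o$, all relaxed cross-constraints collapse to a single threshold on $z$ because the within-block structure is already pinned down by $o$. Both are true, but the paper's lifting makes them automatic. Conversely, your route is more concrete and makes the per-orbit threshold computation explicit, which is closer to how one would actually implement the \exptime\ procedure. Either way the complexity analysis is the same: exponentially many orbits $o$ of $\psQ$ enumerated via Proposition~\ref{prop:of}, polynomial work per orbit.
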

\noindent
The following corollary will be useful later:
\begin{proposition} \label{prop:invim}
The inverse image of an orbit $O \subseteq Q^2$ is finite and computable in \exptime.
\end{proposition}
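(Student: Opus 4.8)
The plan is to derive this directly from the Decomposition Lemma (Lemma~\ref{lem:decomp}). First I would apply Lemma~\ref{lem:decomp} to the definable set $X = O$, where $O \subseteq Q^2$ is the given orbit. This yields, in \exptime, a finite decomposition of the inverse image of $O$ into sets of the form $\{o\} \times I$, where $o$ ranges over orbits in $\psQ$ and each $I \subseteq \Z$ is either a lower-infinite ray $\Z_{<m}$, an upper-infinite ray $\Z_{>m}$, or a singleton $\{m\}$.

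The key observation is that, because $O$ is a single orbit and every state of $\aaut$ is timed (by the simplifying assumption made during preprocessing), no $k$-extension phenomenon occurs here: each pair $(q, q')$ in $O$ has both $v$ and $v'$ of span at most $u$, and after normalizing by a shared reference point $t$ the shift $z$ that carries $q'$ back to its position in the orbit is \emph{uniquely determined}, since all time values of $q'$ are genuine rationals (none projected away) whose differences to $t$ are fixed by the constraint defining $O$. Hence for each orbit $o$ of $\psQ$ that participates, only a single integer $z$ can map $(o, z)$ to $O$; in other words each component set $\{o\} \times I$ in the decomposition must have $I$ a singleton. An infinite ray $\Z_{<m}$ or $\Z_{>m}$ would force infinitely many shifts $z$ of the same $o$ to land inside the single orbit $O$, which is impossible since shifting $q'$ by distinct integers produces states in distinct orbits of $Q$ (recall $Q^2$ is orbit-infinite precisely because such shifts are all different). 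Therefore the decomposition of the inverse image of $O$ consists of finitely many sets $\{o\} \times \{m\}$, i.e. it is a finite set of pairs, and it is computable in \exptime by Lemma~\ref{lem:decomp}.

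The main obstacle is making rigorous the claim that the ``all states timed'' assumption genuinely rules out the infinite rays: one must argue that within a single orbit $O$, fixing the $\psQ$-orbit $o$ pins down the integer shift uniquely. This follows because the minimal constraint defining $O$ determines all pairwise integer-part differences among the time values of $q'$ and the reference point $t$, so $z$ is read off as $\lfloor$(value of a distinguished register of $q'$)$- t\rfloor$ and cannot vary. With that in hand, the finiteness and the \exptime bound are immediate consequences of Lemma~\ref{lem:decomp}.
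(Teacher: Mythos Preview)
Your approach is exactly the paper's: the proposition is stated there as an immediate corollary of the Decomposition Lemma, together with the earlier remark that the inverse image of a single orbit is finite ``due to the simplifying assumption that all states are timed.'' One small slip to fix: where you write ``distinct orbits of $Q$'' you mean distinct orbits of $Q^2$ (shifting $q'$ by an integer keeps it in the same $Q$-orbit; it is the \emph{pair} $(q,q'+z)$ that changes $Q^2$-orbit as $z$ varies, precisely because both $q$ and $q'$ carry at least one register). Also, in your final paragraph note that the constraint defining $O\subseteq Q^2$ does not mention the reference point $t$; the clean argument is that $o$ pins down the integer part of $v'_1-v_1$ for distinguished registers of $q,q'$, while $O$ pins down the integer part of $(v'_1+z)-v_1$, and subtracting recovers $z$ uniquely.
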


We are going to define a system of equations $\Delta$,
with variables $X_o$ corresponding to orbits $o$ in $\psQ$. 
The construction will conform to the following correctness condition:
\begin{lemma} \label{lem:correctness}
	The least solution $\nu$ of the system $\Delta$ assigns to a variable $X_o$ the set
	\[
	\nu(X_o) \ = \ \{ z \in \Z \ : \ (o, z) \text{ is inhabited} \}.
	\]
\end{lemma}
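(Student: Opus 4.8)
The plan is to construct the system $\Delta$ explicitly by ``discretizing'' each of the three inference rules of Lemma~\ref{lem:orbit least}, so that the fixed-point characterization of the inhabitation predicate on orbits of $Q^2$ turns into the least-fixed-point semantics of $\Delta$ via the shift mapping $\proj$. For every orbit $o$ of $\psQ$ we introduce a variable $X_o$; the intended meaning, as stated in the lemma, is $\nu(X_o) = \setof{z \in \Z}{(o,z) \text{ inhabited}}$. The direction of the proof then splits as usual into soundness (every tuple in $\nu(X_o)$ corresponds to an inhabited $(o,z)$) and completeness (every inhabited $(o,z)$ has $z \in \nu(X_o)$), both by induction on derivations: the height of a derivation in the proof system of Lemma~\ref{lem:orbit least} on one side, and the approximant index in the Kleene iteration computing $\nu$ on the other.

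First I would handle the (orbit base) rule: for each diagonal orbit $o$ of $\psQ$ (i.e.\ an orbit of some $(q\concat t, q\concat t)$) the image orbit of $(o,0)$ is a diagonal orbit of $Q^2$ and is inhabited, so I add the inclusion $X_o \supseteq \{0\}$. Second, for (orbit transitivity): fix an orbit $O'$ of $\psQ^{\,?}$ — more precisely, I need the three-coordinate analogue. I would work with triples $(q\concat t, q'\concat t, q''\concat t)$ having a common reference point, enumerate the finitely many orbits of this set, and for each such orbit $o_{123}$ read off $o_{12}, o_{23}, o_{13}$ as orbits of $\psQ$; the shift bookkeeping is that if $(o_{12}, z_1)$ and $(o_{23}, z_2)$ are inhabited with the second vector of the first matching the first vector of the second after shifting, then $(o_{13}, z_1 + z_2)$ is inhabited. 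This yields inclusions of the shape $X_{o_{13}} \supseteq X_{o_{12}} + X_{o_{23}}$, using element-wise addition. Third, for (orbit push-pop): using Proposition~\ref{prop:invim} / the Decomposition Lemma applied to the equivariant set $\pushpop \subseteq Q^4$, I decompose its inverse image into finitely many pieces $\{o\} \times I$ with $I$ of the form $\Z_{<m}$, $\{m\}$ or $\Z_{>m}$; each orbit $O$ of $\pushpop$ gives, via $O_{23}$ and $O_{14}$ and the corresponding shifts, an inclusion relating $X_{o_{14}}$ to $X_{o_{23}}$ translated by the appropriate integer offset, and the infinite intervals $I$ are encoded by the auxiliary variables $Z_{<m}, Z_{>m}, Z_{=m}$ of Example~\ref{ex:equations} (so an inclusion like $X_{o_{14}} \supseteq X_{o_{23}} + Z_{=k}$ or $X_{o_{14}} \supseteq (X_{o_{23}} + \cdots) \cap \{0\}$ when a coordinate must land exactly on the reference point). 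Intersection with $\{0\}$ is exactly what is needed to express the constraint that a shift hits a specific value dictated by matching reference points; when the stack is timeless these forced-coincidence constraints disappear and the system is intersection-free, which is the remark that gives the better bound for Theorem~\ref{thm:exptime}.

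With $\Delta$ so defined, the correctness proof proceeds by mutual induction. For the ``$\subseteq$'' inclusion $\nu(X_o) \subseteq \setof{z}{(o,z)\text{ inhabited}}$, I argue that the right-hand side, viewed as an assignment, is a (pre-)fixed point of $\Delta$: each inclusion of $\Delta$ was obtained from an instance of a rule of Lemma~\ref{lem:orbit least}, so it is satisfied by the inhabitation assignment; since $\nu$ is the \emph{least} solution, $\nu$ is pointwise below it. For ``$\supseteq$'', I show by induction on the height of an (orbit base)/(orbit transitivity)/(orbit push-pop) derivation of $\inhabited O$ that, writing $O$ as the image orbit of $(o,z)$, one has $z \in \nu(X_o)$: a base derivation uses the $X_o \supseteq \{0\}$ inclusion; a transitivity step composes two shorter derivations whose shifts add, matched by the $X + Y$ inclusion; a push-pop step is absorbed by the interval/offset inclusion, using that the inverse-image decomposition of $\pushpop$ is exhaustive. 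The main obstacle, and the place requiring genuine care rather than routine bookkeeping, is the \emph{reference-point matching} in the transitivity and push-pop steps: the three orbits $o_{12}, o_{23}, o_{13}$ (resp.\ $o_{23}, o_{14}$) are extracted from different ambient sets with different reference points, and I must verify that the shift-by-$z$ semantics is compatible — i.e.\ that composing $(o_{12},z_1)$ with $(o_{23},z_2)$ really yields an inhabited $(o_{13}, z_1+z_2)$ and that \emph{every} way of realizing $O_{13}$ arises this way — so that no precision is lost and the finitely many inclusions of $\Delta$ capture exactly the (infinitely many) orbit-level rule instances. This is where Proposition~\ref{prop:of} (common reference point forces uniformly bounded span, hence orbit-finitely many cases) and the Decomposition Lemma do the real work.
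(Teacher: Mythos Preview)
Your overall plan---translate each rule of Lemma~\ref{lem:orbit least} into a finite family of inclusions over the variables $X_o$, then prove the two inclusions of the lemma by the standard fixed-point/derivation inductions---is exactly the paper's, and your treatment of (orbit base) and (orbit transitivity) matches it. The gap is in (orbit push-pop). You propose to apply the Decomposition Lemma to $\pushpop \subseteq Q^4$ and obtain pieces $\{o\}\times I$, and then to range over ``each orbit $O$ of $\pushpop$''. Neither step works as written: the Decomposition Lemma is stated for definable subsets of $Q^2$, whereas the shift mapping on quadruples is $\pspspsQ \times \Z^3 \to Q^4$, so the inverse image of $\pushpop$ carries \emph{three} integer parameters, not one interval; and $\pushpop$ is merely definable, hence in general orbit-\emph{infinite}, so the enumeration of its orbits is not finite and you would produce infinitely many inclusions.

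The paper circumvents this by splitting on whether the witnessing stack symbol $\bar s$ is timeless or timed. In the timeless case $\pushpopA$ factors as the finite union $\bigcup_{\bar s\in S_0}\rhopush_{\bar s}\times\rhopop_{\bar s}$ of Cartesian products of definable subsets of $Q^2$; the Decomposition Lemma applies to each factor separately, giving intervals $I,I'$ for the outer shifts $z,z'$, while the middle shift $\bar z$ stays free and is absorbed by $X_{o_{23}}$, yielding the intersection-free inclusion $X_{o_{14}}\supseteq X_{o_{23}}+Z_{I+I'}$. In the timed case there is no product structure; instead the paper uses the orbit-finiteness hypothesis on the \trPDA---this is the \emph{only} place it is used---to conclude that the projection $\pushpopC_{235}$ onto $(\bar q,\bar q',\bar s)$ is orbit-finite. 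One enumerates its finitely many orbits $O$; for each, $O_{12}\subseteq Q^2$ is a single orbit whose finite inverse image (Proposition~\ref{prop:invim}) lists the admissible pairs $(\bar o,\bar z)$, and the set $X_O\subseteq Q^2$ of compatible $(q,q')$ is definable, so the Decomposition Lemma applies to it. The resulting inclusion $X_o\supseteq (X_{\bar o}\cap\{\bar z\})+Z_{I-\bar z}$ needs intersection precisely because $\bar z$ is now a fixed integer pinned down by $O$, not a free parameter. Your last sentence gestures at the timeless/timed distinction but supplies neither of these structural arguments; without them $\Delta$ is not finite and the reduction fails.
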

\noindent
Orbits $o \subseteq \psQ$ that appear in the inverse image of an initial-final orbit $O \subseteq Q^2$
we call initial-final too; again, they are determined by the unique initial and final control locations.
Based on the last lemma, we reformulate Proposition~\ref{claim:p2} as:
%
\begin{proposition} \label{claim:p3}
	$L(\aaut)$ is non-empty iff $\nu(X_o)$ is non-empty, for an initial-final orbit $o$.
\end{proposition}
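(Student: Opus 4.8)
The plan is to chain together Proposition~\ref{claim:p2}, the surjectivity of the shift mapping $\proj$, and Lemma~\ref{lem:correctness}. First I would recall that, by Proposition~\ref{claim:p2}, $L(\aaut)$ is non-empty if and only if some initial-final orbit $O \subseteq Q^2$ is inhabited. Hence it suffices to show that the existence of an inhabited initial-final orbit in $Q^2$ is equivalent to the existence of an initial-final orbit $o$ in $\psQ$ with $\nu(X_o) \neq \emptyset$.

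For the left-to-right direction, suppose $O \subseteq Q^2$ is an inhabited initial-final orbit. Since $\proj$ is surjective, $O$ is the image orbit of some pair $(o, z)$ with $o$ an orbit of $\psQ$ and $z \in \Z$, and by definition $(o,z)$ is inhabited. Because $O$ consists of pairs $(i, f)$ with $i$ initial and $f$ final, and $\proj$ only forgets the common reference point and shifts the time values of the second component (so it preserves control locations), $o$ is an initial-final orbit. By Lemma~\ref{lem:correctness}, $z \in \nu(X_o)$, hence $\nu(X_o) \neq \emptyset$. For the converse, suppose $o$ is an initial-final orbit of $\psQ$ and $z \in \nu(X_o)$. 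By Lemma~\ref{lem:correctness} the pair $(o,z)$ is inhabited, i.e.\ its image orbit $O \subseteq Q^2$ is inhabited; since $o$ is built from the unique initial and final control locations and shifting by $z$ does not change control locations, $O$ is an initial-final orbit. Thus an initial-final orbit of $Q^2$ is inhabited, and by Proposition~\ref{claim:p2} the language is non-empty.

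The only point requiring a little care is the bookkeeping that an initial-final orbit of $Q^2$ always has an initial-final orbit of $\psQ$ in its inverse image, and symmetrically; this is immediate because both notions are pinned down by the same unique pair of initial and final control locations, and neither the extension with a reference point nor the shift by an integer alters control locations. I do not expect any real obstacle here: the statement is a direct reformulation of Proposition~\ref{claim:p2} through the dictionary provided by Lemma~\ref{lem:correctness} and the surjectivity of $\proj$, so the substance of the argument lies entirely in those earlier results.
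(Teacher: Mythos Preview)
Your proposal is correct and matches the paper's intent: the paper does not give a separate proof but simply presents the proposition as the reformulation of Proposition~\ref{claim:p2} through Lemma~\ref{lem:correctness}, which is precisely the chain you spell out. The one detail you handle carefully---that the image orbit of $(o,z)$ is initial-final for \emph{every} $z$ once $o$ is initial-final, because shifting preserves control locations---is indeed the only point worth mentioning, and your justification is sound.
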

Thus non-emptiness of $L(\aaut)$ reduces in \exptime to non-emptiness of some of the variables $X_o$ in $\Delta$.
%

To complete the proofs of upper bounds of Theorems~\ref{thm:nexptime} and~\ref{thm:exptime},
we need to describe the construction of $\Delta$ and prove that it verifies the condition in Lemma~\ref{lem:correctness}.

\mysubsubsection{System of equations}  \label{subsec:system:equations}
When defining $\Delta$ we prefer to use inclusions.
Roughly speaking, the system $\Delta$ corresponds to the inverse image of the rules in Lemma~\ref{lem:orbit least}.

Consider the (orbit base) rule first. We observe that all orbits $o$ appearing in the inverse image of a diagonal orbit $O$ 
are diagonal as well.
Thus for every diagonal orbit $o$ in $\psQ$ we add to $\Delta$ the inclusion
\begin{align}
	X_{o} \ \supseteq \ \{0\}.
\end{align}
%

For treating the (orbit transitivity) rule we need to extend the shift mapping $\proj$ from pairs to triples.
Define the set of triples of states with equal reference points
\[
\pspsQ \ = \ \{ (q\concat t, q'\concat t, q'' \concat t) \in {\dot Q}^3 \ : \ t \in \Q \},
\]
and consider the shift mapping $\proj : \pspsQ \times \Z^2 \to Q^3$:
\[
(q \concat t, q' \concat t, q'' \concat t), z, z' \ \mapsto \ q, q' + z, q'' + z + z'.
\]
As before, $\pi$  transforms a triple $(o, z, z')$, where $o$ is an orbit in $\pspsQ$, into an orbit $O$ in $Q^3$.
For an orbit $O$ in $Q^3$, consider any element $(o, z, z')$ of its inverse image, i.e., $O$ is the image of $(o, z, z')$.
The image commutes with projections, i.e., $O_{1 2}$ is necessarily the image of $(o_{1 2}, z)$, and likewise
$O_{2 3}$ and $O_{1 3}$ are images of $(o_{2 3}, z')$ and $(o_{1 3}, z+z')$, respectively.
Therefore the (orbit transitivity) rule says that if $(o_{1 2}, z)$ and $(o_{2 3}, z')$ are inhabited,
then $(o_{1 3}, z + z')$ is inhabited too.
Thus, for every orbit $o$ in $\pspsQ$ we add the following inclusion to $\Delta$:
\begin{align}
	X_{o_{1 3}} \ \supseteq \ X_{o_{1 2}} + X_{o_{2 3}} .
\end{align}

Finally, we address the (orbit push-pop) rule. 
We consider separately two cases, depending on whether the stack symbol pushed/popped is timeless or timed.
Each of the two cases will induce separate inclusions in $\Delta$.
Let $S$ be partitioned into timeless stack symbols $S_0$ and timed stack symbols $S_1$.
$S_0$ is a finite set.
We partition $\pushpop$ into $\pushpopA$ and $\pushpopB$, where
\begin{align*}
	\pushpopA &= \setof
		{(q, \bar q, \bar q', q')}
		{\exists \bar s \in S_0. \begin{array}{c} \rhopush(q, \bar q, \bar s), \\ \rhopop(\bar q', \bar s, q')\end{array}} \\
	\pushpopB &= \setof
		{(q, \bar q, \bar q', q')}
		{\exists \bar s \in S_1. \begin{array}{c} \rhopush(q, \bar q, \bar s), \\ \rhopop(\bar q', \bar s, q')\end{array}}
\end{align*}
\ignore{
Since $\pushpop$ speaks about quadruples,
we extend the shift mapping $\proj$ to quadruples, in the natural way.
Define the set of quadruples of states with equal reference points
\[
\pspspsQ \ = \ \{ (q\concat t, \bar q\concat t, \bar q'\concat t, q' \concat t) \in {\dot Q}^4 \ : \ t \in \Q \},
\]
and consider the following shift mapping $\proj$ from $\pspspsQ \times \Z^3$ to $Q^4$:
\[
(q \concat t, \bar q \concat t, \bar q' \concat t, q' \concat t), z, \bar z, z' \ \mapsto \ 
q, \bar q + z, \bar q' + z + \bar z, q' + z + \bar z + z'.
\]
As before, $\pi$ transforms an orbit $o$ in $\pspspsQ$ into an orbit $O$ in $Q^4$.
For an orbit $O$ in $Q^4$ satisfying $\pushpop(O)$  (cf.~\eqref{eq:rule}),
consider any element $(o, z, \bar z, z')$ of its inverse image, i.e., $O$ is the image of $(o, z, \bar z, z')$.
As before, the image commutes with shift mapping,
i.e., $O_{1 2}$, $O_{2 3}$, $O_{3 4}$ and $O_{1 4}$ are necessarily images of 
 $(o_{1 2}, z)$, $(o_{2 3}, \bar z)$ and $(o_{3 4}, z')$ and $(o_{1 4}, z + \bar z + z')$, respectively.
Therefore the (orbit push-pop) rule says that if $(o_{2 3}, \bar z)$ is inhabited,
then $(o_{1 4}, z + \bar z + z')$ is inhabited too.
}

First, we consider the (orbit push-pop) rule restricted to only timeless stack symbols. 
We can write $\pushpopA$ as a finite sum of products
\begin{align*}
	\pushpopA & = \bigcup_{\bar s \in S_0} \rhopush_{\bar s} \times \rhopop_{\bar s},
\end{align*}
where $\rhopush_{\bar s}(q, \bar q) \equiv \rhopush(q, \bar q, \bar s)$
and $\rhopop_{\bar s}(\bar q', q') \equiv \rhopop(\bar q', \bar s, q')$.
%
For a fixed $\bar s \in S_0$, $\rhopush_{\bar s}$ and $\rhopop_{\bar s}$ are definable subsets of $Q^2$,
and thus Lemma~\ref{lem:decomp} applies.
%
%
%
%

We need to extend once more the shift mapping $\proj$, this time to quadruples.
Define the set of quadruples of states with equal reference points
\[
\pspspsQ \ = \ \{ (q\concat t, \bar q \concat t, \bar q' \concat t, q' \concat t) \in {\dot Q}^4 \ : \ t \in \Q \},
\]
and consider the shift mapping $\proj$ from $\pspspsQ \times \Z^3$ to $Q^4$:
\[
(q \concat t, \bar q \concat t, \bar q' \concat t, q' \concat t), z, \bar z, z' \ \mapsto \ q, \bar q + z, \bar q' + z + \bar z, 
q' + z + \bar z + z'.
\]
Similarly as before, $\pi$  transforms a quadruple $(o, z, \bar z, z')$, where $o$ is an orbit
in $\pspspsQ$, into an orbit $O$ in $Q^4$.
Similarly as before we define the inverse image of $O \subseteq Q^4$.

The (orbit push-pop) rule says that if $(o_{2 3}, \bar z)$ is inhabited,
$(o_{1 2}, z)$ belongs to the inverse image of $\rhopush_{\bar s}$ and
$(o_{3 4}, z')$ belongs to the inverse image of $\rhopop_{\bar s}$, then
$(o_{1 4}, z + \bar z + z')$ is inhabited.
Therefore for every orbit $o \subseteq \pspspsQ$ appearing in the inverse image of $\pushpop$,
for every $\bar s \in S_0$,
for every pair of intervals $I, I'$ such that $(o_{1 2}, I)$ appears in the decomposition of $\rhopush_{\bar s}$
and $(o_{3 4}, I')$ appears in the decomposition of $\rhopop_{\bar s}$ (by Lemma~\ref{lem:decomp}), 
we add to $\Delta$ the inclusion
\begin{align}
	X_{o_{1 4}} \ \supseteq \ X_{o_{2 3}} + \ Z_{I + I'},
\end{align}
where $Z_{I + I'}$ is a variable that, in the least solution, is assigned the set of integers $I + I'$ 
(cf. Example~\ref{ex:equations}).
This completes the proof of the upper bound of Theorem~\ref{thm:exptime}.

In order to complete the proof of Theorem~\ref{thm:nexptime}, we consider now 
the (orbit push-pop) rule restricted to only \timed stack symbols.
For convenience, we extend $\pushpopB$ with the stack symbol and consider
%
\begin{align*}
	\pushpopC &= \setof
		{(q, \bar q, \bar q', q', \bar s) \in Q^4 \times S_1}
		{\begin{array}{c} \rhopush(q, \bar q, \bar s), \\ \rhopop(\bar q', \bar s, q')\
		\end{array}}
\end{align*}
Since we are considering orbit-finite \trPDA, $\rhopush_{23}$ and $\rhopop_{12}$ are orbit-finite.
Thus, $\pushpopC_{235}$ is orbit-finite as well (in passing we extend the notation for projection 
from pairs to to triples of coordinates), due to the restriction to timed stack symbols only.
Indeed, the uniform bound on the span of $\pushpopC_{235}$ is at most twice as large as the universal bound on span of sets
$\rhopush_{23}$ and $\rhopop_{12}$.
By Corollary~\ref{cor:nf} we may enumerate all orbits $O \subseteq \pushpopC_{235}$ in \exptime.

Consider every orbit $O \subseteq \pushpopC_{235}$ separately.
We transform the set $\pushpopC$
into normal form (using Lemma~\ref{lem:nf}) and apply Projection Lemma~\ref{lem:proj} to deduce that the set
\begin{align*}
	X_O &=	\setof{(q, q') \in Q^2}
			{\exists(\bar q, \bar q', \bar s) \in O \cdot
				\begin{array}{c} \rhopush(q, \bar q, \bar s), \\ \rhopop(\bar q', \bar s, q')\end{array}}
\end{align*}
is definable and computable in \exptime.
For every $(\bar o, \bar z)$ in the inverse image of $O_{12} \subseteq Q^2$
(we use Proposition~\ref{prop:invim} here),
and for every $(o, I)$ in the decomposition of $X_O$ (by Decomposition Lemma~\ref{lem:decomp}),
we add to $\Delta$ the inclusion
\begin{align}
	X_{o} \ \supseteq \ \ (X_{\bar o} \cap \{ \bar z\}) \ + \ Z_{I -\bar z},
\end{align}
where $I - \bar z = \{ z - \bar z \ : \ z \in I\}$.
This completes the construction of $\Delta$.
Since $\Delta$ is of exponential size,
we can solve it NEXPTIME according to Lemma~\ref{lem:setsofeq:NP}.
This concludes the proof of Theorem~\ref{thm:nexptime}.



\section{Conclusions and future work}

We have investigated the reinterpretation of the classical definition of pushdown automata 
in the setting of sets with timed atoms, called \trPDA.
In order to relate to the previous research we identified the subclass of \trPDA with \emph{timeless stack},
and shown that dense-timed PDA of~\cite{AbdullaAtigStenman:DensePDA:12} 
can be effectively transformed into this subclass.

The rest of the paper focused on the non-emptiness analysis of \trPDA.
We showed that the non-emptiness problem for unrestricted \trPDA is undecidable, 
but decidable in \nexptime for orbit-finite \trPDA. 
Furthermore, non-emptiness for an even smaller subclass of  \trPDA with timeless stack has been shown \exptime-complete. 
The last result subsumes the \exptime-completness of \dtPDA~\cite{AbdullaAtigStenman:DensePDA:12},
by our language-preserving transformation of \dtPDA to \trPDA with timeless stack.

As future research, it remains to be closed the complexity gap for orbit-finite \trPDA,
as well as the detailed study of expressive power of different subclasses of \trPDA.
Moreover, in this paper we did not consider all reasonable subclasses of \trPDA.
For instance, we do not know the decidability status of non-emptiness of \emph{lhs orbit-finite} \trPDA, defined like
orbit-finite \trPDA but with the orbit-finiteness restriction imposed on the left-hand sides of transition rules only.
With respect to non-emptiness, the class is equivalent to the superclass 
of short form \trPDA (cf.~Sec.~\ref{sec:trPDA}), obtained by dropping the orbit-finiteness restriction on the rhs of
$\rhopush$ and on the lhs of $\rhopop$.
Our reduction, when extended to this model, yields systems of equations over sets of integers
that use intersections with arbitrary intervals.
Decidability of such extended systems of equations is, up to our knowledge, an open problem,
interesting on its own.
 
Finally, first-order definable sets may be considered for other atoms.
We have recently studied the reachability analysis for PDA for the important class of \emph{oligomorphic} atoms
(i.e., $\atoms^n$ is orbit-finite for every $n$) in \cite{ClementeLasota:FO-PDS:arxiv:2015},
where most of the subclasses of PDA defined in this paper become expressively equivalent.
This covers many examples, such as total order atoms $(\Q, \leq)$,
partial order atoms, tree order atoms, and many more \cite{survey}.


\bibliographystyle{IEEEtran}
\bibliography{../bib}

\newpage
\appendices


\section{Proof of Theorem~\ref{thm:Abdulla}}


\subsection{Preliminaries}

A \emph{configuration} $c$ of a \dtPDA as above is a tuple $\tuple {p, \mu, r}$,
where $p \in L$ is a control location,
$\mu : X \to \Q^{\geq 0}$ is a \emph{clock valuation} for the clocks in $X$,
and $r \in (\Gamma\times\Q^{\geq 0})^*$ is the \emph{stack valuation},
recording stack symbols and their current age.
For a clock valuation $\mu : X \to \Q^{\geq 0}$ and a number $k \in \Q^{\geq 0}$,
we denote by $\mu + k$ the valuation which adds $k$ to every clock,
i.e., for every $x \in X$, $(\mu + k)(x) = \mu(x) + k$.
Similarly, for a stack valuation $r = (\gamma_1, h_1)\dots(\gamma_n, h_n)$ and $k \in \Q^{\geq 0}$,
we write $r + k$ for the new valuation $(\gamma_1, h_1 + k)\dots(\gamma_n, h_n + k)$.
Given a clock valuation $\mu : X \to \Q^{\geq 0}$, a set of clocks $Y \subseteq X$,
and a new value $k \in \Q^{\geq 0}$,
we denote with $\mu[Y \leftarrow k]$ the new clock valuation which is the same as $\mu$ except that assigns value $k$ to all clocks in $Y$,
i.e., $\mu[Y \leftarrow k](y) = k$ if $y \in Y$ and $\mu[Y \leftarrow k](y) = \mu(y)$ if $y \not\in Y$.
Similarly, we denote with $\mu[z \leftarrow k] : (X \cup \set z) \to \Q^{\geq 0}$
the extended clock valuation where the value of the special clock $z$ is $k$. 
Given an extended clock valuation $\mu' : (X\cup\{z\}) \to \Q^{\geq 0}$ and a formula $\varphi$,
we write $\mu' \models \varphi$ if $\varphi$ holds when clock variables are replaced by values given by $\mu'$.
As usual, we distinguish transitions representing elapse of time, which are labelled by some $t \in \Q^{\geq 0}$,
and discrete transitions, which for convenience are labelled by tuples of the form $(a, \phi, Y, op)$.
Formally, for every $t \in \Q^{\geq 0}$ we have a timed transition $\tuple {p, \mu, r} \goesto t \tuple {p, \mu + t, r + t}$,
and, if $p \goesto {a, \phi, Y, op} q$,
then we have a discrete transition $\tuple {p, \mu, r} \goesto {a, \phi, Y, op} \tuple {q, \mu', r'}$,
whenever $\mu \models \varphi$ holds,
$\mu' = \mu[Y \leftarrow 0]$, and, depending on the kind of operation $op$,
\begin{itemize}
	\item Case $op = \nop$: $r' = r$.
	\item Case $op = \push{\alpha \models \psi}$: $r' = r(\alpha, k)$ if $\mu[z \leftarrow k] \models \psi$.
	\item Case $op = \pop{\alpha \models \psi}$: $r = r'(\alpha, k)$ if $\mu[z \leftarrow k] \models \psi$.
\end{itemize}
A run $\pi$ is an alternating sequence $c_0 tr_0 \dots c_k$ of configurations $c_i$'s and transitions $tr_i$'s
s.t., for every $0 \leq i < k$, $c_i \goesto {tr_i} c_{i+1}$.

\subsection{Simplifications}

The following simplifying assumptions can be shown not to reduce the recognizing power of the model.
They are either standard, or very easy to show.

\paragraph{Only pop constraints of the form $z - x \sim k$}

Constraints of the form $x - z \sim k$ can be converted to the form $z - x \sim k$
by negating both sides and by flipping the inequality.

\paragraph{No pop constraints of the form $z \sim k$}

We introduce a new clock $x_0$ which is ensured to be $0$ when a pop transition is taken.
A constraint $z \sim k$ can thus be replaced by $z - x_0 \sim k$.
Formally, a pop transition $p \goesto {a, \phi, T, \pop{\alpha \models \psi}} q$ is simulated in two steps:
\begin{align*}
	p &\goesto {a, \phi, \{x_0\}, \nop}  (p, \alpha, \psi', q) &(1) \\
	(p, \alpha, \psi', q) &\goesto {\varepsilon, \phi \wedge x_0 = 0, T, \pop{\alpha \models \psi'}} q &(2)
\end{align*}
where $\psi'$ is the same as $\psi$ where all constraints of the form $z \sim k$ are replaced by $z - x_0 \sim k$.
Transition (1) mimics exactly the original transition, except that the pop operation is not performed.
Transition (2) is ensured to be taken with delay $0$ after (1),
and the pop operation is thus performed under the condition that $x_0$ is $0$.



\ignore{
\paragraph{Normalized pop constraints}

A conjunctive constraint $\psi$ is \emph{normalized} if,
for every clock $x$, it contains at most one expression of the form $z - x \lesssim k$,
and similarly for $z - x \gtrsim k$.
%
Every conjunctive constraint $\psi$ can be normalized by removing redundant constraints with no increase in complexity.
We assume that pop constraints are conjunctive and normalized.
}

\paragraph{No resets on push and pop transitions}

We wish that clocks are reset only on nop transitions.
To achieve this, we introduce an extra clock $x^*$ and some auxiliary intermediate control states.
A push/pop transition of the form
$p \goesto {a, \phi, T, op}  q$
is split into three consecutive transitions
\begin{align*}
	p &\goesto {a, \phi, \{ x^* \}, \nop} (p, a, \phi, T, op, 0) &(1) \\
	(p, a, \phi, T, op, 0) &\goesto {\varepsilon, x^* = 0, \emptyset, op} (p, a, \phi, T, op, 1) &(2)\\
	(p, a, \phi, T, op, 1) &\goesto {\varepsilon, x^* = 0, T, \nop} q &(3)
\end{align*}
Transition (1) reads the input, checks the clock constraint $\phi$, but it does not reset clocks $T$, neither perform the stack operation.
Transition (2) performs the actual stack operation (without resetting any clock),
and the constraint on $x^*$ ensures that no time elapsed since (1).
Finally, transition (3) resets the clocks in $T$, and again no time is allowed to elapse.
Clocks in $T$ cannot be reset directly by transition (1)
since in the semantics of push/pop operations we must compare the age of the stack symbol to the value of clocks prior to their reset.

\paragraph{The initial age is $0$}

A push operation $\push{\alpha \models \psi_1}$ can be restricted to have the push constraint $\psi_1$ of the trivial form $z = 0$,
i.e., the initial age is always $0$.
We omit a trivial push constraint by just writing $\push{\alpha}$.

A conjunctive constraint $\psi_1$ involving only clock $z$
is equivalent to either a punctual constraint $z = k$ for an integer $k$,
or to an interval constraint $z \in (a, b)$
for a lower bound $a \in \Z \cup \{-\infty\}$
and an upper bound $b \in \Z \cup \{+\infty\}$.
The idea is to push this information on the stack,
which is used at the time of pop to update the pop constraint.

The function $update(\psi, I)$ for $I$ equal to either $k$ or $(a, b)$ as above,
is defined by structural induction on $\psi$ and it works by shifting all constraints by the amount specified by $I$:
\begin{align*}
	update(z - x \sim h, k) &= z - x \sim h - k \textrm{ for } \sim \in \{ \leq, <, >, \geq \} \\
	update(z - x \lesssim h, (a, b)) &= z - x < h - a \textrm{ for } \lesssim \in \{ \leq, < \} \\
	update(z - x \gtrsim h, (a, b)) &= z - x > h - b \textrm{ for } \gtrsim \in \{ \geq, > \} \\
	update(\true, I) &= \true \\
	update(\psi_0 \wedge \psi_1, I) &= update(\psi_0) \wedge update(\psi_1)
\end{align*}
A push transition $p \goesto {a, \phi, Y, \push{\alpha \models \psi}} q$
becomes $p \goesto {a, \phi, Y, \push{\tuple{\alpha, I}}} q$
where $I$ is either $k$ or $(a, b)$ as implied by the constraint $\psi$,
and a pop transition $p \goesto {a, \phi, Y, \pop{\alpha \models \psi}} q$
is replaced by several transitions of the form $p \goesto {a, \phi, Y, \pop{\tuple{\alpha, I} \models \psi'}} q$,
for every shift $I$,
and where $\psi' = update(\psi, I)$ is the pop constraint updated by $I$.
Correctness for a punctual constraint is immediate.
For $z \in (a, b)$ and a pop constraint $z - x \lesssim h$,
the semantics asks whether there exists an initial age $z_0 \in (a, b)$ s.t. $z - x \lesssim h$ holds at the time of pop,
which is the same as requiring $(z + z_0) - x \lesssim h$ when the initial age is $0$ instead,
that is, $z - x \lesssim h - z_0$.
This constraint is easier to satisfy for smaller values of $z_0 > a$,
and thus we obtain $z - x < h - a$.
The reasoning for $z - x \gtrsim h$ is analogous.

\subsection{Untiming the stack}

We are now ready present the formal construction for untiming the stack of a \dtPDA.
Let $\taut = \tuple {Q, q_0, \Sigma, \Gamma, X, z, \Delta}$ be a \dtPDA.
We construct a \dtPDA with timeless stack $\uaut = \tuple {Q', q_0', \Sigma, \Gamma', X', \Delta'}$.
Let
\begin{align*}
	\mathcal C_\sim = \setof {(x, \sim, k)} {z - x \sim k \in \psi \textrm{ for a pop constraint } \psi}
\end{align*}
The new set of clocks $X'$ is obtained by adding to $X$ a clock $\hat x_{\sim k}$
for every $(x, \sim, k) \in \mathcal C$,
where $\mathcal C = \bigcup_\sim\mathcal C_\sim$.
A control state in $\uaut$ is of the form $(p, R, O)$,
where $p$ is a control state in $\taut$.
The set $R \subseteq \mathcal C_< \cup \mathcal C_\leq$ represents active reset restrictions,
and the set $O \subseteq \mathcal C_> \cup \mathcal C_\geq$ represents active reset obligations.
The initial control state of $\uaut$ is $q_0' = (q_0, \emptyset, \emptyset)$.
The stack alphabet of $\uaut$ consists of tuples $\tuple{\alpha, \psi, R, O}$,
where $\alpha \in \Gamma$ is a stack symbol of $\taut$,
$\psi$ is a clock constraint, and $R, O$ are as above.

Transitions in $\uaut$ are defined as follows.
If we have a push transition $p \goesto {a, \phi, \emptyset, \push{\alpha}} q$ in $\taut$,
then we have several push transitions in $\uaut$ of the form
\begin{align*}
	(p, R, O) \goesto {a, \phi', T', \push{\alpha'}} (q, R', O_1') & \\
	\textrm{ with } \alpha' = \tuple{\alpha, \psi, R, O_0'}
\end{align*}
for every $R, R' \subseteq \mathcal C_< \cup \mathcal C_\leq$,
$O, O_0', O_1' \subseteq \mathcal C_> \cup \mathcal C_\geq$,
and constraints $\psi, \phi'$ satisfying the conditions below.
Constraint $\psi$ is guessed to be the constraint that will hold at the time of the corresponding pop.
%
%
The other components are determined as follows.
We first consider reset restrictions.
Let $M = \setof{ (x, \lesssim, k) } {z - x \lesssim k \in \psi}$
be the new reset restrictions as implied by the guessed pop constraint $\psi$.
Since restrictions in $R$ subsume those in $M$,
we reset $\hat x_{\lesssim k}$ only for restrictions in $M \setminus R$.
\begin{align*}
	R'		&= R \cup M \\
	T_0 	&= \setof {\hat x_{\lesssim k}}{ (x, \lesssim, k) \in M \setminus R } \\
	\phi_0	&= \phi \wedge \bigwedge_{ (x, \lesssim, k) \in M \setminus R } - x \lesssim k
\end{align*}
We now address reset obligations.
Let $N = \setof{ (x, \gtrsim, k) } {z - x \gtrsim k \in \psi}$
be the new reset obligations.
New reset obligations in $N$ always subsume previous ones in $O$.
%
Let $O'_0 \subseteq O \setminus N$ be any set of previous reset obligations not subsumed by new ones.
Intuitively, we guess that obligations in $O'_0$ will be satisfied after the matching pop,
thus we push them on the stack.
Let $N' \subseteq N$ be those new reset obligations which are already satisfied by a previous reset,
and thus need not be added.
\begin{align*}
	O_1'		&= (O \setminus O'_0) \cup (N \setminus N') \\
	T'			&= T_0 \cup \setof { \hat x_{\gtrsim k} } { (x, \gtrsim, k) \in N \setminus N' } \\
	\phi'		&= \phi_0 \wedge \bigwedge_{ (x, \gtrsim, k) \in N'} - x \gtrsim k 
\end{align*}

For every pop transition $p \goesto {a, \phi, \emptyset, \pop{\alpha \models \psi}} q$ in $\taut$,
we have an ``untimed'' pop transition in $\uaut$ of the form
\begin{align*}
	(p, R, \emptyset) \goesto {a, \phi, \emptyset, \pop{\alpha'}} (q, R', O') & \\
	\textrm{ with } \alpha' = \tuple{\alpha, \psi, R', O'}
\end{align*}
for every $R, R' \subseteq \mathcal C_< \cup \mathcal C_\leq$ and $O' \subseteq \mathcal C_> \cup \mathcal C_\geq$.
We require the set of reset obligations to be empty in order to ensure
that all clocks that were under a reset obligation have been indeed reset.
%
%

For every nop transition $p \goesto {a, \phi, T, \nop} q$ in $\taut$,
we have a nop transition in $\uaut$ of the form
\begin{align*}
	(p, R, O) \goesto {a, \phi', T, \nop} (q, R, O \setminus O')
\end{align*}
for every $R \subseteq \mathcal C_< \cup \mathcal C_\leq$,
 $O, O' \subseteq \mathcal C_> \cup \mathcal C_\geq$,
and for every set of reset obligations $O' \subseteq \setof {(x, \gtrsim, k) \in O} {x \in T}$
which are satisfied by a reset in this transition:
\begin{align*}
	\phi' = \phi \wedge
	\!\!\!\! \bigwedge_{x \in T, (x, \lesssim, k) \in R} \!\!\!\! \left( \hat x_{\lesssim k} \lesssim k \right)
	\wedge
	\!\!\!\! \bigwedge_{ (x, \gtrsim, k) \in O' } \!\!\!\! \left( \hat x_{\gtrsim k} \gtrsim k \right)
\end{align*}

\thmDTPDAuntimedstack*

\begin{proof}
	
	Let $\pi$ be an accepting run in $\taut$,
	\begin{align*}
		\pi = (p_0, \nu_0, v_0) tr_0 (p_1, \nu_1, v_1) \cdots (p_{k+1}, \nu_{k+1}, v_{k+1}) & \\
		\textrm{ with } tr_i = (a_i, \phi_i, T_i, op_i)
	\end{align*}
	We construct an accepting run $\pi'$ in $\uaut$,
	\begin{align*}
		\pi' = (r_0, \mu_0, u_0) tr_0' (r_1, \mu_1, u_1) \cdots (r_{k+1}, \mu_{k+1}, u_{k+1}),& \\
		\textrm{ with } r_i = (p_i, R_i, O_i) & \\
		\textrm{ and } tr_i' = (a_i, \phi_i', T_i', op_i')
	\end{align*}
	where $tr_i' = tr_i$ if $tr_i \in \R$, and otherwise it is determined as follows.
	For every $i \leq j$, let $t_{i, j}$ be the total time elapsed from transition $i$ to transition $j$, i.e.,
	\begin{align*}
		t_{i, j} = \sum_{h = i}^j \left\{ \begin{array}{ll} 
			tr_h	& \textrm{ if } tr_h \in \R \\
			0		& \textrm{ otherwise }
		\end{array}\right.
	\end{align*}
	If $i > j$, we define $t_{i, j} = -t_{j, i}$.
	\ignore{
	%
	Let $h_i$ be the height of the stack, and let $O^1_i, \dots, O^{n_i}_i$ be the contents of the stack projected to the fourth component.
	Conventionally, let $O^0_i = O_i$.
	We assume w.l.o.g. that the first transition is a push and that the last one is the matching pop,
	thus the stack is empty only at the beginning of the run and at the end.
	\lorenzo{assume that all clocks are reset on the first transition}
	For every $1 \leq j \leq h_i$, let $m_i^j$ be the index of the transition when the $j$-th topmost symbol of the stack will be popped,
	and let $l_i^j$ be the corresponding push.
	We have
	\begin{align*}
		0 = l_i^{h_i} < \cdots < l_i^1 \leq l_i^0 = i = m_i^0 \leq m_i^1 < \cdots < m_i^{h_i} = k
	\end{align*}
	where we take $l_i^0 = i = m_i^0$.
	We maintain the following invariant: $(x, \gtrsim, k) \in O_i^j$ if, and only if,
	\begin{align}
		\label{eq:invariant}
		z - x \gtrsim k \in \psi_{m_i^{j+1}} \wedge
		\exists (m_i^j < h < m_i^{j+1}) \cdot
		\left\{\begin{array}{c}
			x \in T_h \\ \wedge \\
			t_{l_i^j, h} \gtrsim k
		\end{array}\right.
	\end{align}
	That is, reset obligations in $O_i^j$ will be satisfied by a reset on $x$
	more than $k$ time units since the corresponding push.
	In particular, $O_{j_i} = \empty$.
	}
	The construction of $\pi'$ is based on the following two observations.
	\begin{enumerate}
		
		\item For any reset restriction $z - x \lesssim k \in \psi_{j_i}$,
		whenever $x \in T_h$ is reset at transition $tr_h$ with $h < j_i$,
		the time elapsed between $tr_i$ and $tr_h$ is $t_{i, h} \lesssim k$.
		
		\item For any reset obligation $z - x \gtrsim k \in \psi_{j_i}$,
		there exists a \emph{minimal} index $h < j_i$ s.t. 
		$x \in T_{h}$ is reset at transition $tr_{h}$
		and $t_{i, h} \gtrsim k$.
		(Minimality is important to construct a run in $\uaut$,
		in order to mimic the fact that new reset obligations truly subsume old ones.)
		
	\end{enumerate}
	We proceed by a case analysis on $op_i$.
	Let $op_i = \push{\alpha_i}$.
	The corresponding pop operation has $op_{j_i} = \pop{\alpha_{j_i} \models \psi_{j_i}}$,
	with $\alpha_{j_i} = \alpha_i$.
	By assumption, $T_i = \emptyset$.
	Take $tr_i' = (a_i, \phi_i', T_i', op_i')$
	with $op_i' = \tuple{\alpha_i, \psi_{j_i}, R_i, O^0}$,
	where $\phi_i'$, $T_i'$, and $O^0$ are defined as follows.
	We first analyse reset restrictions.
	Let $M = \setof{ (x, \lesssim, k) } {z - x \lesssim k \in \psi_{j_i}}$
	be the set of reset restrictions due to $\psi_{j_i}$, and let
	\begin{align*}
		T^0		&= \setof {\hat x_{\lesssim k}}{ (x, \lesssim, k) \in M \setminus R_i } \\
		\phi^0	&= \phi_i \wedge \bigwedge_{ (x, \lesssim, k) \in M \setminus R_i } - x \lesssim k
	\end{align*}
	We show $\mu_i \models \phi^0$.
	First, $\mu_i \models \phi_i$ holds because $\pi$ is a valid run in $\taut$.
	Let $(x, \lesssim, k) \in M \setminus R_i$.
	Then, $\mu_{j_i} \models z - x \lesssim k$.
	If $k \geq 0$, then $\mu_i \models -x \lesssim k$ immediately holds.
	If $k < 0$, let $h$ be the last transition before $tr_i$ when $x$ is reset.
	By Point 1) above, $t_{i, h} \lesssim k$,
	i.e., the last reset of $x$ is more than $-k$ time units before transition $i$.
	Thus, $\mu_i \models x \gtrsim -k$.
	
	We now analyse reset obligations.
	Let $N = \setof{ (x, \gtrsim, k) } {z - x \gtrsim k \in \psi_{j_i}}$
	be the reset obligations due to $\psi_{j_i}$,
	and let $N' = \setof { (x, \gtrsim, k) \in N } {\mu_i \models - x \gtrsim k}$
	be those obligations in $N$ which are already satisfied by a past reset.
	(Necessarily $k \leq 0$ for $(x, \gtrsim, k) \in N'$.)
	For $(x, \gtrsim, k) \in O_i$, let $l$ be the largest index $< i$
	s.t. $(x, \gtrsim, k) \in O_{l+1} \setminus O_l$.
	Then, $tr_l$ is a push transition with matching pop transition $tr_{j_l}$
	with $(x, \gtrsim, k) \in \psi_{j_l}$.
	By Point 2) above, there exists $h < j_l$ s.t. $x \in T_h$ and $t_{l, h} \gtrsim k$.
	Let $O^0 = \setof{ (x, \gtrsim, k) \in O } { h > j_i }$
	be those obligations which will be satisfied after the matching pop transition $tr_{j_i}$.
	Obligations in $O^0$ are pushed on the stack.
	Then, let
	\begin{align*}
		O_{i+1} &= (O_i \setminus O^0) \cup (N \setminus N') \\
		T_i'	&= T^0 \cup \setof { \hat x_{\gtrsim k} } { (x, \gtrsim, k) \in N \setminus N' } \\
		\phi_i'	&= \phi_0 \wedge \bigwedge_{ (x, \gtrsim, k) \in N'} - x \gtrsim k
	\end{align*}
 	Clearly, $\mu_i \models \phi_i'$ holds,
	since we proved above $\mu_i \models \phi_0$, and by the definition of $N'$.
	
	Let's now analyse the corresponding pop operation $op_{j_i} = \pop{\alpha_{j_i} \models \psi_{j_i}}$.
	Once again, $T_{j_i} = \emptyset$ by assumption.
	By construction of $\pi'$ (cf. the push transition above),
	$tr_i'$ pushed a symbol of the form $\tuple{\alpha_i, \psi_i, R_i, O^0}$,
	with $\alpha_i = \alpha_{j_i}$ and $\psi_i = \psi_{j_i}$.
	Therefore, take $tr_{j_i}' = (a_{j_i}, \phi_{j_i}, \emptyset, op_{j_i}')$
	with $op_{j_i}' = \pop{\tuple{\alpha_i, \psi_i, R_i, O^0}}$
	and define $R_{{j_i}+1} := R_i$ and $O_{{j_i}+1} := O^0$.
	By construction, reset obligations added to $O_{j_i}$ are removed as soon as they can be satisfied
	(cf. the definition of $O'$ in the nop rule below).
	All reset obligations can be satisfied by Point 2) above.
	Thus, $O_{j_i} = \emptyset$, and $tr_{j_i}'$ is a valid transition.
	
	Finally, Let $op_i = \nop$.
	Let $O'$ be defined as
	\begin{align*}
	 	O' = \setof {(x, \gtrsim, k) \in O_i} { x \in T_i \textrm{ and } \mu_i \models \hat x_{\gtrsim k} \gtrsim k}
	\end{align*}
	Take $tr_i' = (a_i, \phi'_i, T_i', \nop)$, with $T_i' = T_i$ and
	\begin{align*}
		\phi'_i &= \phi_i \wedge \phi^0 \wedge \phi^1, \textrm{ where } \\
		\phi^0	&= \bigwedge_{x \in T_i, (x, \lesssim, k) \in R_i} \hat x_{\lesssim k} \lesssim k \\
		\phi^1	&= \bigwedge_{ (x, \gtrsim, k) \in O' } \hat x_{\gtrsim k} \gtrsim k
	\end{align*}
	and let $R_{i+1} = R_i$ and $O_{i+1} = O_i \setminus O'$.
	We show that $\mu_i \models \phi'_i$.
	\begin{itemize}
		
		\item $\mu_i \models \phi_i$ since $\pi$ is a valid run in $\taut$.
		
		\item $\mu_i \models \phi^0$:
		Let $x \in T_i$ and $(x, \lesssim, k) \in R_i$.
		We show	$\mu_i \models \hat x_{\lesssim k} \lesssim k$.
		Let $h^*$ be the largest index $h < i$ s.t. $(x, \lesssim, k) \in R_{h+1} \setminus R_h$.
		Then, $tr_{h^*}$ is a push transition, and $\hat x_{\lesssim k} \in T_{h^*}$ is reset at transition $tr_{h^*}$.
		Moreover, since $h^*$ is maximal and $(x, \lesssim, k) \in R_i$,
		by construction $(x, \lesssim, k) \in R_h$ for every $h^* \leq h \leq i$.
		Thus, $\hat x_{\lesssim k} \not\in T_h$ for every $h^* < h \leq i$.
		Therefore, $\mu_i(\hat x_{\lesssim k}) = t_{h^*, i}$.
		Since at the matching pop transition $tr_{j_{h^*}}$
		we have $z - x \lesssim k \in \psi_{j_{h^*}}$,
		and $x \in T_i$ is reset now,
		by Point 1) above we have $t_{h^*, i} \lesssim k$.
		Consequently, $\mu_i \models \hat x_{\lesssim k} \lesssim k$.
		
		\item $\mu_i \models \phi^1$:
		Immediately by the choice of $O'$.

	\end{itemize}
	Thus, $tr_i'$ is a valid transition.
	
	For the other inclusion, let $w = (a_0, t_0) \cdots (a_k, t_k)$ be a timed word accepted by $\uaut$,
	and let $\pi'$ be an accepting run:
	\begin{align*}
		\pi' = (r_0, \mu_0, u_0) tr_0' (r_1, \mu_1, u_1) \cdots (r_{k+1}, \mu_{k+1}, u_{k+1}),& \\
		\textrm{ with } r_i = (p_i, R_i, O_i)&
	\end{align*}
	We obtain an accepting run $\pi$ in $\taut$ by removing the extra components in the control state and stack alphabet,
	and by adding back pop constraints (as given by the symbol popped).
	\ignore{
	To show that $w$ is accepted by $\taut$,
	we introduce an intermediate model $\uaut'$,
	first prove that $w$ is accepted by $\uaut'$,
	and then, by a simple operation, show that $w$ is even accepted by $\taut$.
	The intermediate model $\uaut'$ is defined exactly as $\uaut$,
	but timing constraints on pop transitions are not forgotten.
	I.e., $\uaut'$ has pop transitions of the form
	\begin{align*}
		(p, R, O) \goesto {a, \phi', T, \pop{\alpha' \models \psi}} (q, R', O') & \\
		\textrm{ with } \alpha' = \tuple{\alpha, \psi, R', O'} &
	\end{align*}
	Since $w$ is accepted by $\uaut$,
	there exists an accepting run $\pi'$ over $w$ of the form
	\begin{align*}
		\pi' = (r_0, \mu_0, u_0) tr_0' (r_1, \mu_1, u_1) \cdots (r_{k+1}, \mu_{k+1}, u_{k+1}),& \\
		\textrm{ with } r_i = (p_i, R_i, O_i)&
	\end{align*}
	where $\mu_i$ is a clock valuation over $X'$ and $s_i$ is a timed word over $\Gamma'$,
	and $tr_i'$ is either a timed transition $tr_i' \in \R$,
	or a discrete transition of the form $tr_i' = (a_i, \phi_i, T_i, op_i)$.
	We show that $\pi'$ can be step-wise transformed into a run $\pi''$ in $\uaut'$.
	The only difficulty is to show that pop operations can be performed in $\uaut'$ by respecting their timing constraint.
	Formally,
	\begin{align*}
		\pi'' = (r_0, \mu_0, u_0) tr_0'' (r_1, \mu_1, u_1) \cdots (r_{k+1}, \mu_{k+1}, u_{k+1}),
	\end{align*}
	where
	$tr_i'' = tr_i'$ if $tr_i' \in \R$ or $tr_i' = (a_i, \phi_i, T_i, op_i)$ with $op_i$ either a nop or a push,
	and
	\begin{align}
		\label{eq:pop}
		tr_i'' = (a_i, \phi_i, T_i, \pop{\alpha_i' \models \psi_i}) & \\
		\nonumber
		\textrm{ with } \alpha_i' = \tuple{\alpha_i, \psi_i, R'_i, O'_i} &
	\end{align}
	if $tr_i' = (a_i, \phi_i, T_i, \pop{\alpha_i'})$.
	Notice that the only difference between $\pi'$ and $\pi''$
	is that pop transitions in $\pi''$ have the additional timed constraint $\psi_i$ (as recorded in the stack symbol).
	}
	To show that $\pi$ is an accepting run,
	we argue that $\mu_i \models \psi_i$ holds for a pop transition
	\begin{align*}
		tr_i' = (a_i, \phi_i, T_i, \pop{\alpha_i'}) & \\
		\nonumber
		\textrm{ with } \alpha_i' = \tuple{\alpha_i, \psi_i, R'_i, O'_i} &
	\end{align*}
	Let $tr_j'$ with $j < i$ be the corresponding push transition, i.e.,
	\begin{align}
		tr_j' = (a_j, \phi_j, T_j, \push{\alpha'_j}) & \\
		\nonumber
		\textrm{ with } \alpha_j' = \alpha'_i
	\end{align}
	Notice that the symbol popped at time $i$ matches the one pushed at time $j$.
	%
	We begin with reset restrictions.
	Let $z - x \lesssim k \in \psi_i$ any reset restriction on clock $x$ with $\lesssim \in \{ \leq, < \}$.
	We argue that $\mu_i \models z - x \lesssim k$ holds.
	The claim follows from the following observations. Let $j < h \leq i$ in the following:
	\begin{enumerate}

		\item Except possibly for the first push transition $j$,
		clock $\hat x_{\lesssim k}$ is never reset before, and including, transition $i$.
		This is follows from the fact that, once a reset obligation is added,
		it always subsumes new ones.
		Thus, $\mu_h(\hat x_{\lesssim k})$ is at least the age of $\alpha'_i$ at index $h$.
		
		\item $(x, \lesssim, k) \in R_h$.
		Indeed, $(x, \lesssim, k) \in R_{j+1}$ by construction.
		Moreover, nop operations do not change $R_h = R_{h+1}$,
		push operations put $R_h$ on the stack and $R_h \subseteq R_{h+1}$,
		and pop operations restore the $R_h$ of the corresponding push.
		
		\item By the previous point, if $x$ is reset (necessarily at a nop operation by assumption),
		then $\mu_h \models \hat x_{\lesssim k} \lesssim k$ holds by the definition of nop operation.
		
		\item Finally, $\mu_h \models -x \lesssim k$, for $h = j$.
		If $k \geq 0$, this is trivial.
		Otherwise, let $k < 0$,
		and let $j^*$ be the largest index $j^* \leq j$
		s.t. $(x, \lesssim, k) \in R_{j^*} \setminus R_{j^* - 1}$ is last added to reset obligations.
		Then, $\mu_{j^*} \models -x \lesssim k$ holds by definition of push operation.
		Since $k < 0$, $x$ is not reset ever since (cf. Point 3),
		and thus $\mu_j \models -x \lesssim k$.		
		
	\end{enumerate}
	There are two cases. If $x$ is reset between transition $j$ and $i$,
	then by 1) and 3) the age of $\alpha'_i$ is $\lesssim k$ the last time $x$ was reset.
	Consequently, at transition $i$, $\mu_i \models z - x \lesssim k$.
	If $x$ is not reset at all,
	then $\mu_j \models -x \lesssim k$ (by Point 4)
	immediately implies $\mu_i \models z - x \lesssim k$.
	
	We now consider reset obligations.
	Let $z - x \gtrsim k \in \psi_i$ with $\gtrsim \in \{ \geq, > \}$.
	We argue that $\mu_i \models z - x \gtrsim k$ holds.
	There are two cases. If $(x, \gtrsim, k) \not\in O_{j+1}$,
	then $\mu_j \models -x \gtrsim k$ holds by construction,
	i.e., the constraint must have been satisfied by a previous reset of $x$,
	which directly implies $\mu_i \models z - x \gtrsim k$.
	Now let $(x, \gtrsim, k) \in O_{j+1}$.
	We make the following observation.
	\begin{enumerate}
		
		
		\item[5)] $\hat x_{\gtrsim k}$ is at most the age of $\alpha_i'$.
		This is obvious, since $\hat x_{\gtrsim k}$ is reset at transition $j$ by construction.

	\end{enumerate}
	Since the pop at transition $i$ satisfies $O_i = \emptyset$,
	constraint $(x, \gtrsim, k)$ must be eventually removed.
	The only way to remove $(x, \gtrsim, k)$ from $O_h$
	is to either push it on the stack (cf. $O_0'$),
	or to reset $x$ when $\mu_h \models \hat x_{\gtrsim k} \gtrsim k$ holds (by definition of nop operation).
	In the former case, $(x, \gtrsim, k)$ will reappear in $O_h$ at the matching pop operation and still be pending.
	In the latter case, the age of $\alpha_i'$ was at least $k$ when $x$ was reset by Point 5) above,
	which directly implies $\mu_i \models z - x \gtrsim k$.
	\ignore{
	We conclude by showing that the run $\pi''$ in $\uaut'$ can be step-wise transformed into a run $\pi$ in $\taut$.
	Intuitively, we just remove the extra information added by the encoding,
	and we use the fact that 1) $\uaut'$ only adds new constraints and resets to $\taut$,
	2) the latter only involve the new clocks $\hat x$'s,
	and 3) $\pi''$ already contains the right pop timing constraints.
	Formally, we define $\pi$ as follows:
	\begin{align*}
		\pi = (p_0, \nu_0, v_0) tr_0 (p_1, \nu_1, v_1) \cdots (p_{k+1}, \nu_{k+1}, v_{k+1})
	\end{align*}
	where, for every $0 \leq i \leq k+1$,
	\begin{itemize}
		
		\item Clock valuation $\nu_i$ equals $\mu_i$ restricted to $X \subseteq X'$.
		\item Stack content $v_i$ is obtained from $u_i$ by projecting the stack alphabet $\Gamma'$ to $\Gamma$.
		Formally, if
		$u_i = (\tuple{\alpha_0, \psi_0, f_0, \tuple{Y_0, g_0}}, x_0) \cdots (\tuple{\alpha_n, \psi_n, f_n, \tuple{Y_n, g_n}}, x_n)$,
		then $v_i = (\alpha_0, x_0) \cdots (\alpha_n, x_n)$.
		
		\item Timed transitions do not change,
		i.e., $tr_i = tr_i''$ if $tr_i'' \in \R$.
		
		\item A discrete transition $tr_i$ is obtained from $tr_i''$ by projecting away the extra information encoded by the construction.
		Formally, if $tr_i'' = (a, \phi', T', op')$,
		then $tr_i = (a, \phi, T, op)$,
		where $\phi$ is obtained from $\phi'$ by removing the extra constraints involving clocks $\hat x$'s,
		$T = T' \cap X$,
		and, $op = \push\alpha$ if $op' = \push{\alpha, \psi, f, \tuple{Y, g}}$,
		$op = \pop{\alpha \models \psi}$ if $op' = \pop{\alpha, \psi, f, \tuple{Y, g} \models \psi}$,
		and $op = \nop$ if $op' = \nop$.
		
		Notice that in $\pi''$, a pop operation already has a timing constraint $\psi$.

	\end{itemize}
	It is immediate to see that $\pi$ thus constructed is a run in $\taut$.
	}
\end{proof}

\ignore{ 

\subsection{Previous simpler construction for non-diagonal constraints}

For a \dtPDA $\taut$ as above, we construct another \dtPDA
$\uaut = \tuple {Q', q_0', \Sigma, \Gamma', X', \Delta'}$ with timeless stack,
that is, with trivial age constraints $\alpha \in (-\infty, +\infty)$.
The idea is that $\uaut$ guesses in advance what will be the interval
in which the age of a newly pushed symbol will fall at the time of pop.
By using extra clocks for each guessed interval, we can make sure that the guess is indeed correct.
Moreover, since stack symbols age monotonically,
subsequently guessed intervals have endpoints that become smaller and smaller,
which implies that a finite number of extra clocks suffice.

For each interval $I$ equal to either $[n, n]$ or $(n-1, n)$
we have two clocks, $x_n$ and $y_n$.
The intuition is that $x_n$ measures the age of the \emph{first} symbol guessed to be in $I$,
and $y_n$ measures the age of the \emph{last} symbol guessed to be in $I$.
Thus, $x_n$ is reset to $0$ when the first symbol with age $I$ is pushed on the stack,
and $y_n$ is reset to $0$ at the last such time.
Then, at the time of the corresponding pop, 
if $I = [n, n]$, then we check that $n = x_n \wedge y_n = n$,
and if $I = (n-1, n)$, then we check that $n-1 < y_n$ and $x_n < n$
(notice that $y_n \leq x_n$ by construction at the time of pop).
Note that $x_n$ and $y_n$ are reset to $0$ at the same time if there is only one symbol guessed to have age in $I$.
Intervals $(-\infty, n)$ and $(n, +\infty)$ can be treated similarly.

Formally, control states of $\uaut$ are of the form $(p, I, f)$,
with $p \in Q$ a control state in $\taut$,
$I \in \mathcal I$ an interval for the guessed age of the next symbol that will be pushed on the stack,
and $f \in \{0, 1\}$ a flag that is used to keep track of the first and last occurrence of new intervals.
More precisely, $f = 1$ means that this is the first occurrence of an interval,
and $f = 0$ means that this is not the first occurrence of an interval.

Transitions in $\uaut$ are as follows.
If $p \goesto {a, \phi, Y, \pop{\alpha \in J}} q$ is a pop transition in $\taut$,
then we have the following pop transition in $\uaut$
\[ (p, I, f) \goesto {a, \phi', Y, \pop{(\alpha, J) \in (-\infty, +\infty)}} (q, I, f) \]
s.t., if $J = [n, n]$, then $\phi' = \phi \wedge x_n = n \wedge y_n = n$,
and, if $J = (n-1, n)$, then $\phi' = \phi \wedge x_n < n \wedge n - 1 < y_n $.

Moreover, if $p \goesto {a, \phi, Y, \push\alpha} q$ is a push transition in $\taut$,
then we have the following push transition in $\uaut$
\[ (p, I, f) \goesto {a, \phi, Y', \push{(\alpha, I)}} (q, J, g) \]
s.t. $I$ is either $[n, n]$ or $(n-1, n)$,
and $J \in \mathcal J$ is any guessed interval s.t. all the following conditions hold:
\begin{itemize}
	
	\item If $I = J$, then $g = 0$.
	\item If $I \neq J$, then $g = 1$.
	
	\item $Y'$ is the unique $Y \subseteq Y' \subseteq Y \cup \{x_n, y_n\}$ s.t.
	\begin{itemize}
		\item $x_n \in Y'$ if, and only if, $f = 1$.
		\item $y_n \in Y'$ if, and only if, $g = 1$.
	\end{itemize}

\end{itemize}

We now prove that $\lang {\taut} = \lang {\uaut}$.
First note that $\uaut$ only adds constraints and resets on the new clocks $x_n, y_n$.
For the left-to-right inclusion, let $w \in \lang {\taut}$ be a timed word accepted by $\taut$.
By appropriately guessing (at the time of push) the correct intervals in which the ages of stack symbols will fall at the time of pop,
and resetting the $x_n$'s and $y_n$'s accordingly,
it is easy to show that there exists a run for $w$ in $\uaut$.
For the other inclusion, let $w = (a_0, t_0) \dots (a_k, t_k) \in \lang {\uaut}$ be a timed word accepted by $\uaut$,
and let $\pi = (p_0, \mu_0, r_0) tr_0 (p_1, \mu_1, r_1) \cdots (p_{k+1}, \mu_{k+1}, r_{k+1})$
be a corresponding accepting run.
Consider the following forgetful projection function $pr$ from $\uaut$ to $\taut$,
defined on control states, clock valuations, stack contents,
and then lifted to configurations and finally rules:
\begin{align*}
	pr(p, I, f) &= p				\\
	pr((\alpha, I)) &= \alpha 	\\
	pr(\beta_0 \cdots \beta_h) &= pr(\beta_0) \cdots pr(\beta_h) \\
	pr(\mu) &= \lambda x \in X. \mu(x) \quad \textrm {(notice that $\mu$ is defined on $X' \supseteq X$)} \\
	pr(q, \mu, r) &= (pr(q), pr(\mu), pr(r)) \\
	pr(\phi) &= \textrm{ $\phi$ with occurrences of $x_n$ and $y_n$ removed } \\
	pr(Y) &= Y \cap X \\
	pr(\push\beta) &= \push{pr(\beta)} \\
	pr(\pop{(\alpha, I) \in J} &= \pop{\alpha \in I} \\
	pr(q \goesto {a, \phi, Y, op} q') &= pr(q) \goesto {a, pr(\phi), pr(Y), pr(op)} pr(q')
\end{align*}

}

\ignore{
\paragraph{Periodicity constraints make \dtPDA more powerful}
Consider the language $L$ of monotone timed palindromes s.t. matching symbols occur at integer distance,
i.e.,
\begin{align*}
	L = \bigcup_{n \geq 0} \{ &(a_0, t_0), \dots, (a_{2n-1}, t_{2n-1}) \ |\ \\
	&\forall (0 \leq i < n) \cdot a_{2n-i} = a_i \textrm{ and } t_{2n-i} - t_i \in \N \}
\end{align*}
Intuitively, $L$ cannot be recognized with a timeless stack (even if epsilon transitions are allowed),
since it contains precise dependencies between symbols that are arbitrarily far from each other (epsilon transitions could deal with these),
and, moreover, unboundedly many such dependencies.

However $L$ can be recognized with a timed stack and periodicity constraints.
First push $a_0, \dots, a_{n-1}$ (and initial age 0 for each symbol),
and then upon pop check that the clock on top of the stack is an integer, $z \equiv 0 \mod 1$.
}

 
\section{Proofs missing in Sec.~\ref{sec:defin}}

\BoundedSpanLemma*

\begin{proof}
	Every orbit, being defined by a minimal constraint, has uniformly bounded span. Therefore every orbit-finite set also does.
	
	For the opposite direction, if the span of the elements of an equivariant set $X\subseteq \Q^n$ is bounded
	by $k$, then $X$ is a subset of $$\{(x_1,\ldots, x_n) \ : \ \bigwedge_{i\neq j} x_i - x_j \leq k\}.$$
	The latter set can be equivalently defined by a finite disjunction of minimal constraints, and hence it is orbit-finite, 
	which implies orbit-finiteness of $X$.
\end{proof}

\NormalFormLemma*

\begin{proof}
	Fix a definable set $X$. Let $L$ be its indexing set. For each index $l \in L$ separately, 
	we compute a decomposition of $X_l$.

	Fix the index $l$. Let $K$ be larger than the largest absolute value of any integer constant used in the defining
	constraint of $X_l$, and let $d$ be the dimension of $X_l$.
	Enumerate all minimal constraints $\phi$ that define an orbit of span bounded by $(d-1)\cdot K$.
	Note that such constraints do not need to use integer constants of absolute value greater than $(d-1) \cdot K$.

	\begin{claim}
	It is decidable in polynomial time whether $\defin{\phi} \subseteq X_l$.
	\end{claim}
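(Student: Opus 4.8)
The plan is to exploit the fact that every constraint --- that is, every positive boolean combination of atoms $x_i - x_j \sim k$ with $k \in \Z$ --- has a truth value that is \emph{constant on every orbit} of $\Q^d$, so that the inclusion $\defin{\phi} \subseteq X_l$ can be decided by evaluating the defining constraint of $X_l$ at a single, cheaply computable representative of the orbit $\defin{\phi}$.

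First I would record the orbit-constancy property. Fix an orbit $O \subseteq \Q^d$ and an atom $\alpha \equiv (x_i - x_j \sim k)$. On $O$ the quantity $x_i - x_j$ is either equal to a fixed integer $m$, or confined to a fixed open interval $(m, m+1)$ with $m \in \Z$; in either case the truth value of $\alpha$ at a point of $O$ depends only on $\sim$ and on the comparison of $m$ with $k$, hence is the same at all points of $O$. Since a constraint is built from such atoms using $\wedge$ and $\vee$, its truth value is likewise the same at all points of $O$. Applying this to the constraint $\psi_l$ defining $X_l$ (whose dimension must equal that of $\phi$, as otherwise the inclusion is trivially settled), we obtain: $\defin{\phi} \subseteq X_l$ holds if, and only if, $\psi_l$ is satisfied at some --- equivalently, every --- point of the orbit $\defin{\phi}$. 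Note that this is exactly what makes the disjunctions in $\psi_l$ harmless: a general entailment $\phi \to \psi_l$ with $\psi_l$ in disjunctive form need not be polynomial-time decidable, but entailment of a formula \emph{throughout a single orbit} reduces to one point evaluation.

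It then remains to produce such a point in polynomial time. The minimal constraint $\phi$ specifies, for every pair $i < j$, an integer $m_{ij}$ (encoded in binary, of absolute value at most the span bound $(d-1)\cdot K$) together with the information whether $x_i - x_j = m_{ij}$ or $m_{ij} < x_i - x_j < m_{ij}+1$. Grouping the coordinates into the classes on which the prescribed pairwise differences are integers, the integer data fixes inside each class the value of every coordinate up to one common additive rational; the strict-betweenness data induces a linear preorder on these classes, which is consistent since $\phi$ describes a nonempty orbit. A topological sort then lets one assign the classes pairwise distinct fractional offsets $0, \tfrac1d, \tfrac2d, \ldots$ realizing this preorder, and the integer shifts are forced accordingly. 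This yields a rational vector $v \in \defin{\phi}$ whose entries have binary size polynomial in $|\phi|$ and denominator at most $d$. Finally one evaluates $\psi_l$ at $v$, which --- the integer constants of $\psi_l$ being in binary --- takes time polynomial in $|\phi| + |\psi_l|$.

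The only delicate point is the construction of the representative $v$: one must check that the data carried by a minimal constraint is globally consistent (which holds because it describes a nonempty orbit) and that the chosen fractional offsets can always be arranged to satisfy all the strict inequalities simultaneously. Both are routine consequences of the structure of minimal constraints, and the overall procedure is polynomial.
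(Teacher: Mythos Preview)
Your proof is correct and rests on the same key observation as the paper's: every constraint is constant on an orbit, so checking $\defin{\phi} \subseteq X_l$ reduces to a single evaluation of the defining constraint $\psi_l$ of $X_l$. The execution differs slightly. You construct a concrete rational representative $v \in \defin{\phi}$ (via a topological sort on the fractional-part classes, assigning offsets $0, \tfrac1d, \tfrac2d, \ldots$) and then evaluate $\psi_l$ numerically at $v$. The paper instead evaluates $\psi_l$ \emph{symbolically}: from the minimal constraint $\phi$ one reads off, for each pair $x,y$, the interval $I_{x,y}$ (either $\{m\}$ or $(m,m+1)$) of possible values of $x-y$, and an atom such as $x-y < k$ is declared true iff $I_{x,y} \subseteq \Z_{<k}$; boolean connectives are then propagated as usual. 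The paper's route avoids building any witness and the attendant bookkeeping about consistency and fractional offsets, while yours has the virtue of making orbit-constancy fully explicit and reducing everything to a plain formula evaluation. Both are polynomial and equally valid.
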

	\begin{proof}
	Indeed, for every pair of variables $x, y$ the minimal constraint $\phi$ determines an interval
	$I_{x, y}$ of the form
	\[
	\{ z \} \qquad \text{ or } \qquad (z, z+1),
	\]
	for $z \in \Z$, of possible values of $x - y$.
	In order to determine whether $\defin{\phi} \subseteq X_l$, we
	evaluate the constraint $\psi$ defining $X_l$ over the minimal constraint $\phi$, very much like a boolean formula
	is evaluated over a valuation of its variables.
	Atomic sub-formulae of $\psi$ are evaluated on the basis of the intervals $I_{x, y}$; for instance
	\[
	x - y < z
	\]
	evaluates to true if and only if $I_{x y} \subseteq \Z_{< z}$.
	$\defin{\phi} \subseteq X_l$ iff the constraint $\psi$ evaluates to true over $\phi$.
	\end{proof}
	Thanks to the claim, we compute  all minimal constraints satisfying $\defin{\phi} \subseteq X$
	and add them to the decomposition of $X$.

	The next claim formulates the weakness of constraints that we build upon:
	\begin{claim}
		let $O \subseteq \Q^d$ be an orbit.
		If $O \subseteq X$ and $O$ admits a gap $K$ then the $K$-extension of $O$ is included in $X$.
	\end{claim}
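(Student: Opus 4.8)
The plan is to show that a single $K$-extension step cannot change the truth value of the constraint $\psi$ defining $X_l$ (recall that we work with a fixed index $l$, that $d = \dim X_l$, $O \subseteq \Q^d$, $O \subseteq X_l = \defin{\psi}$, and that $K$ strictly exceeds the absolute value of every integer constant occurring in $\psi$), and then to propagate this fact along the whole closure by induction. So first I would fix $v \in O$ with $v \models \psi$, a split of the rationals appearing in $v$ into nonempty sets $L, R$ with $\max(L) + K = \min(R)$, and a $K$-extension $v'$ of $v$, obtained by adding a fixed rational $h > 0$ to every coordinate of $v$ whose value lies in $R$ (and leaving the other coordinates unchanged).

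Next I would check, atom by atom, that $v$ and $v'$ satisfy exactly the same atomic subformulae $x_i - x_j \sim k$ of $\psi$ (here $|k| < K$). There are four cases, according to whether $v_i$ and $v_j$ lie in $L$ or in $R$. If $v_i, v_j$ both lie in $L$, or both lie in $R$, then $v'_i - v'_j = v_i - v_j$, so nothing changes. If $v_i \in R$ and $v_j \in L$, then $v_i - v_j \ge \min(R) - \max(L) = K > k$, and $v'_i - v'_j = (v_i - v_j) + h$ is even larger, so both differences exceed $k$; hence both satisfy $> k$ and $\ge k$ and fail $< k$ and $\le k$. The case $v_i \in L$, $v_j \in R$ is symmetric: $v_i - v_j \le \max(L) - \min(R) = -K < k$ and $v'_i - v'_j$ is even smaller, so both differences are $< k$. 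In every case the atom has the same truth value at $v$ and at $v'$, hence so does the Boolean combination $\psi$; therefore $v' \models \psi$, i.e.\ $v' \in X_l$.

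Finally I would conclude for the whole closure of $O$ under $K$-extensions by induction on the number of $K$-extension steps used to produce a given element of it: the base case is $v \in O \subseteq X_l$, and the inductive step is exactly the argument of the previous paragraph, which applies precisely because a $K$-extension of a tuple is defined only when that tuple admits the gap $K$ — a tuple without such a gap has no $K$-extensions and so contributes nothing further to the closure. I do not expect a genuine obstacle here; the only point that needs care is the observation that across the cut the difference $v_i - v_j$ is already ``saturated'' with respect to every constant of $\psi$ (being at least $K$ in absolute value), so that neither it nor its perturbation by $\pm h$ can be distinguished by $\psi$, together with the remark about which tuples actually admit $K$-extensions.
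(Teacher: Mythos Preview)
Your proof is correct. The paper does not actually prove this claim---it is stated as the ``weakness of constraints that we build upon'' and left without argument---so there is nothing to compare against. Your atom-by-atom check (a difference crossing the gap already has absolute value at least $K$ and is therefore saturated with respect to every constant occurring in $\psi$) together with the induction on the number of extension steps is precisely the routine verification one would expect, including the observation that a tuple without a gap $K$ simply has no $K$-extensions and hence contributes nothing further to the closure.
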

	Therefore, for every minimal constraint $\phi$ satisfying the above claim, we add to the decomposition of
	$X$ its $K$-extension.
	The decomposition is computed in \exptime, and its correctness follows by following last claim:
	\begin{claim}
		Every orbit $O \subseteq X$ of span larger than $(d-1)\cdot K$ is included in the $K$-extension of some orbit 
		$O' \subseteq X$ of span at most $(d-1)\cdot K$.
	\end{claim}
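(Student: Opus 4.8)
The plan is to produce the smaller orbit $O'$ by \emph{contracting} every oversized gap of $O$ down to exactly $K$, and then to verify the three things the claim asks for: $O' \subseteq X$, the span of $O'$ is at most $(d-1)K$, and $O$ lies inside the $K$-extension of $O'$.

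First I would fix an element $v \in O$ whose span exceeds $(d-1)K$ and sort the at most $d$ distinct rationals occurring in tuples of $O$. The minimal constraint defining $O$ pins down, for each pair of coordinates, whether their difference is an integer $z$ or lies in $(z,z+1)$, hence it pins down the shape of each of the at most $d-1$ gaps between consecutive values. Call a gap \emph{wide} when its shape is $\{z\}$ with $z > K$ or $(z,z+1)$ with $z \geq K$; equivalently, the difference across it exceeds $K$ for \emph{every} element of $O$. Since $v$'s span is the sum of its at most $d-1$ gap-differences and exceeds $(d-1)K$, at least one gap of $O$ is wide. Now define $O'$ by the minimal constraint with the same block structure and the same shapes on non-wide gaps, but with each wide gap replaced by the exact value $K$; this is again a single orbit.

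Then I would check the three properties. (i) \emph{Span:} in $O'$ every gap has difference at most $K$ and there are at most $d-1$ of them, so every element of $O'$ has span at most $(d-1)K$; moreover $O'$ admits the integer gap $K$, since at least one gap was contracted. (ii) \emph{$O' \subseteq X$:} this is the crux, and it is exactly where the hypothesis that $K$ exceeds the largest absolute constant of the defining constraint $\psi$ of $X_l$ is used. For any pair of coordinates the difference is a sum of consecutive gaps; if none of those is wide the difference is literally unchanged between $O$ and $O'$, and if some is wide the difference is $> K$ in $O$ and $\geq K$ in $O'$, hence lies strictly above every constant $k$ with $|k| < K$ in both cases (and its negation strictly below every such $k$ in both). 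So $O$ and $O'$ satisfy the same atomic formulas with constants bounded by $K$, hence the same constraints with such constants; since $\psi$ holds on $O \subseteq X_l$, it holds on $O'$, giving $O' \subseteq X_l \subseteq X$. (iii) \emph{$O$ inside the $K$-extension of $O'$:} given any $v^* \in O$, the tuple $v \in O'$ obtained by shrinking each wide gap of $v^*$ down to $K$ can be turned back into $v^*$ by applying one $K$-extension per wide gap — each time splitting at the gap that currently equals $K$ and adding the needed positive amount to everything above it — without disturbing the other gaps.

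Putting it together: $O'$ is an orbit inside $X$ of span at most $(d-1)K$ that admits gap $K$, so by the preceding claim its $K$-extension lies in $X$ and hence is among the pieces put into the decomposition; and $O$, being contained in that $K$-extension, is covered. Together with the earlier two claims this shows the computed decomposition is correct. The main obstacle is part (ii): the bookkeeping showing that contracting a gap to exactly $K$ does not leave $X$, which rests entirely on the weakness of constraints — a constraint whose constants are all smaller than $K$ cannot tell a gap of size $K$ from a gap of any larger size.
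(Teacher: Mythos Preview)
Your argument has a genuine gap in step~(iii), and it originates in the assertion that the contracted object $O'$ ``is again a single orbit''. Specifying only the \emph{consecutive}-gap structure (the shape of each non-wide gap, together with ``wide gap $=K$'') does not pin down a single orbit of $\Q^d$: an orbit is determined by \emph{all} pairwise difference-intervals, and the sum of two non-wide gaps lying on opposite sides of the contracted gap can land in different unit intervals for different representatives. Concretely, take $d=4$, $K=4$, and $O=\orbit{(0,\,0.5,\,100.3,\,100.9)}$. The tuple $(0,\,0.1,\,100.05,\,100.9)$ also lies in $O$ (all six pairwise differences fall in the same unit intervals; an explicit automorphism is given by $0\mapsto 0$, $0.3\mapsto 0.05$, $0.5\mapsto 0.1$, $0.9\mapsto 0.9$ on $[0,1)$, extended periodically). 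Contracting the unique wide gap to $K$ sends the first tuple to $(0,\,0.5,\,4.5,\,5.1)$ and the second to $(0,\,0.1,\,4.1,\,4.95)$; these lie in \emph{different} orbits, since $x_4-x_1$ falls in $(5,6)$ for the first and in $(4,5)$ for the second. Hence your $O'$ depends on the representative, and for whichever orbit you fix as $O'$, some $v^*\in O$ contracts to a tuple outside $O'$ --- so your step~(iii) does not go through.

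In fact this example shows more: no single small-span orbit has a $K$-extension containing all of $O$, because any such $O'$ must have its first and third consecutive gaps in $(0,1)$ and its middle gap equal to $K$, and that description already covers several orbits whose $K$-extensions partition $O$ rather than cover it individually. So the claim, read literally at the level of orbits, is too strong. What the Normal Form Lemma actually needs --- and what your contraction argument \emph{does} prove once you drop the single-orbit assertion --- is the element-wise version: every tuple $v\in X$ of span larger than $(d{-}1)K$ is a $K$-extension of some tuple in a small-span orbit $O'\subseteq X$; equivalently, the large-span orbits of $X$ are covered by the \emph{union} of the $K$-extensions put into the decomposition. Your parts~(i) and~(ii) are correct as written and supply exactly this.
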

	This completes the proof of Lemma~\ref{lem:nf}.
\end{proof}


\section{Proofs missing in Sec.~\ref{sec:trPDA}}


\ThmUndecid*
\begin{proof}
The proof works for transition rules satisfying $n = 1$, $m = 2$ in~\eqref{eq:rules m n}.
The idea is to simulate a 2-counter Minsky machine $M$ by a \trPDA $\aaut_M$: 
one counter is simulated using the stack and two stack symbols $\bot, \top$, 
while the other counter is simulated using the difference between the time values stored in the top-most
stack symbol and in the state. 
It is enough if state space and stack alphabet are 1-dimensional, i.e., store exactly one time
value.
A configuration of $M$ with a control state $p$ and values of counters $n_1, n_2$ is represented
by the following configuration of $\aaut_M$:
\[
( \ (p, t+n_1+n_2), \ \ (\top, t+n_1) \ldots (\top, t+1) (\bot, t) \ )
\]
for an arbitrary $t \in \Q$ chosen nondeterministically by $\aaut_M$ in the beginning of the simulation.
The simulation assumes that the time values stored in consecutive stack symbols increase by 1, thus a push operation needs to see the current top-most stack symbol. 
Then increment (resp.~decrement) of the first counter is simulated by a simultaneous push (resp.~pop), and
increment (resp.~decrement) of the state by 1, e.g.: if $M$ increments the first counter and changes
state from $p$ to $p'$, the PDA has the following transition rule ($\mathtt{inc}_1$ is an input letter):
\[
( (p, t), (\top, u), \mathtt{inc}_1, (p', t+1), (\top, u+1) (\top, u) ).
\]
Operations on the second counter are performed exclusively on the time value stored in the state. 
Zero test $n_1 = 0$ of the first counter is done by checking if the top-most stack symbol is $(\bot, t)$ for an arbitrary $t \in \Q$; 
while zero test $n_2 = 0$ of the second counter is done by an equality test $t = u$ of the time values stored in the state and in 
the top-most symbol.

$\aaut_M$ accepts if $M$ halts from the control state $p$.
Thus the language $L(\aaut_M)$ is non-empty iff $M$ halts.
\end{proof}

\LemtrCFL*
\begin{proof}
	Let $\g$ be a \trCFG  with transition relation $\rho$, recognizing a timed language $L$.
	We show the untiming of $L$ can be recognized by a  CFG $\g'$ of size exponential in $\g$.
	Enumerate all orbits $O$ of $S$; this can be done effectively by Corollary~\ref{cor:nf}.
	$\g'$ will have a non-terminal $X_O$ for every orbit $O$ of $S$.
	For every non-terminals $X_O, X_{O_1}, \dots, X_{O_n}$ and for every orbit $P$ of $A_\varepsilon$,
	a production
	\begin{align*}
		(X_O,  P, X_{O_1}, \cdots, X_{O_n})
	\end{align*}
	is included in $\g'$ 
	whenever $\exists x \in O, a \in P, x_1 \in O_1, \dots, x_n \in O_n \cdot \rho(x, a, x_1, \dots, x_n)$ holds.
	The latter condition can be checked in EXPTIME, similarly like in the proof of Lemma~\ref{lem:nf}.
	Then $\g$ recognizes a timed word if, and only if, $\g'$ recognizes its untiming.
\end{proof}

\ThmCFG*
\begin{proof}
	The EXPTIME upper-bound follows immediately from Lemma~\ref{lem:trCFL}:
	From a \trCFG $\g$ recognizing a timed language $L$,
	we derive an exponentially larger context-free grammar $\g'$
	recognizing the untiming of $L$, for which non-emptiness is decidable in PTIME. 
	Correctness follows since $L$ is non-empty if, and only if, its untiming is non-empty.
	
	For the lower-bound, we reduce from the non-emptiness problem of the intersection
	of the languages recognized by $n$ (untimed) NFAs $\aaut_1, \dots, \aaut_n$ and a (untimed) CFG $\g$.
	(A similar reduction from this same problem was used in \cite{AbdullaAtigStenman:DensePDA:12} to show EXPTIME-hardness of \dtPDA).
	It is folklore that the latter problem is EXPTIME-hard;
	this can be shown by a direct reduction from linearly bounded alternating Turing machines.
	
	We adapt the textbook construction for intersection of a regular language 
	and a context-free one~\cite{HopcroftMotwaniUllman} in order to define a \tr context-free grammar $\g'$.
	We use timed registers to succinctly represent control states of the NFAs $\aaut_i$'s.
	%
	%
	Let $P_i$ be the set of control states of $\aaut_i$.
	For simplicity, we assume that $P_i = \set{1, \dots, k}$,
	that $1$ is the unique initial state of each NFA,
	and that $2$ is the unique final state of each NFA.
	A tuple of states of NFAs may be encoded as $ r \in \set{1, \dots, k}^{n}$.
	We write $ r \stackrel{a}{\longrightarrow}  r'$ if for every $i$, the pair of states at coordinate $i$
	in $ r$ and $ r'$,
	is related in the automaton $\aaut_i$ by an $a$-labelled transition.
	We will represent a pair of such tuples $( r,  r') \in \set{1, \dots, k}^{2n}$
	as an orbit in $\Q^{2n+1}$;
	one additional component will serve as reference point, and the others will be interpreted as the difference 
	w.r.t. the reference point.
	Thus, we encode $( r,  r')$ as the following orbit $O_{ r,  r'}$ in $\Q^{2n+1}$:
	\begin{align*}
		O_{ r,  r'} \ = \ \bigcup_{t \in \Q}(t,  r + t,  r' + t).
	\end{align*}
	Let symbols $S'$ of $\g'$ be 
	\[
	S' \ = \ S \times O_{ r,  r'}
	\]
	for $S$ the symbols of $\g$. Thus symbols in $S'$ are of the form
	\[
	(X, t,  r+t,  r'+t).
	\]
	Notice that $S'$ is orbit-finite.
	From the initial symbols, $\g'$ goes to one of the symbols 
	\[
	(X_0, t,  \vec 1 + t,  \vec 2 + t), \qquad \text{ for } t \in \Q,
	\]
	where $X_0$ is the initial symbol of $\g$, and $\vec 1$, $\vec 2$ are constant tuples.
	Assume for simplicity that $\g$ is in Chomsky normal form.
	For every production $X \goesto{} a$ in $\g$, the grammar $\g'$ has productions
	\[
	(X, t,  r+t,  r' + t) \goesto{} a
	\]
	for every $t \in \Q$, whenever $ r \stackrel{a}{\longrightarrow}  r'$.
	Moreover, for every production $X \goesto{} Y Z$ of $\g$, the grammar $\g'$ has productions
	\[
	(X, t,  r+t,  r'+t) \goesto{} (Y, t,  r+t,  r''+t) \ (Z, t,  r''+t,  r'+t), 
	\]
	for every $t \in \Q$ and
	for every three tuples $ r,  r',  r''$.
	
	The productions above are definable with (only equality) constraints of polynomial size.
	It is an easy exercise to check that the grammar $\g'$ recognizes the same language
	as the intersection of languages of $\aaut_1, \dots, \aaut_n$ and $\g$.
\end{proof}

\LemoftrPDA*
\begin{proof}
	Let $\Aa$ be an orbit-finite \trPDA.
	We define an orbit-finite \trPDA $\Bb$ in short form recognizing the same language.
	Intuitively, $\Bb$ keeps in the state a prefix of the stack long enough to apply rules of $\Aa$ without directly looking at the stack.
	Thus, states in $\Bb$ are pairs $\tuple{q, v}$ where $v \in S^*$ is a prefix 
	of a lhs/rhs of a rule of $\Aa$.
	Since projection and finite union preserve orbit-finiteness, $\Bb$ has an orbit-finite set of states.
	By Lemma~\ref{lem:proj} the set is definable and effectively computable.
	For every rule $(q, v, a, q', v')$ in $\Aa$ there exists a rule $\rhonop(\tuple{q, v}, a, \tuple{q', v'})$ in $\Bb$.
	Moreover, for every state $\tuple{q, v s}$ in $\Bb$,
	there exist rules $\rhopop(\tuple{q, v}, s, \varepsilon, \tuple{q, v s})$
	and $\rhopush(\tuple{q, v s}, \varepsilon, \tuple{q, v}, s)$
	in order to load/unload the local buffer of $\Bb$.
\ignore{ 
For a fixed an orbit-finite \trPDA $\aaut$, we define an \trPDA $\aaut'$ in short form.
The idea is to replace every transition rule $(q, v, a, q', v')$ by one $a$-labeled transition rule,
and a number of silent pop rules and silent push rules.
The state space of $\aaut'$ will consist additional auxiliary states; in particular,
lhs's and rhs's or transition rules of $\aaut$ will be states of $\aaut'$.

A special case when both $v$ and $v'$ are empty is simulated by a dummy push followed by a pop. 
Suppose that $v = s_1 \ldots s_n$ is nonempty (the remaining case when $v$ may be empty but 
$v'$ is nonempty is similar). 

The automaton $\aaut'$ executes a sequence of $n$ silent pops; the first one is
\[
\rhopop(q, s_1, \eps, (q, s_2, \ldots, s_n)) 
\]
and the following ones are of the form:
\[
\rhopop((q, s_i, \ldots, s_n), s_i, \eps, (q, s_{i+1}, \ldots, s_n)).
\]
As a next step $\aaut'$ 
and a similar sequence of silent push rules for consecutive symbols in $v'$.
A state $s_i$ stores  
The first of the transition rules should be labeled with $a$, if $a \neq \eps$.
}
The language is preserved by this transformation, and the size of $\Bb$ in
short form grows only polynomially with respect to the size of $\Aa$.
\end{proof}

\begin{lemma}
	Non-emptiness of \trPDA with timeless stack is \exptime-hard.
\end{lemma}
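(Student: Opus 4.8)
The plan is to give a polynomial-time reduction from the non-emptiness problem for the intersection of $n$ (untimed) NFAs $\aaut_1,\dots,\aaut_n$ and an (untimed) context-free grammar $\g$, all over a common finite alphabet $\Sigma$ --- exactly the \exptime-hard problem used in the proof of Theorem~\ref{thm:trCFG} (it is \exptime-hard by a standard reduction from linearly bounded alternating Turing machines, cf.~the remark in that proof and~\cite{AbdullaAtigStenman:DensePDA:12}). The key observation is that a \trPDA with a \emph{timeless} stack has two essentially independent resources: the finite stack, on which we run the textbook pushdown simulation of $\g$, and the timed registers in the control state, in which we track the states of \emph{all} NFAs $\aaut_i$ at once without the exponential blow-up of the naive product construction. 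This is dual to the \trCFG hardness proof, which routes the NFA product through the (timed) stack because a \trCFG has only one state.

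Concretely, assume each $\aaut_i$ has states $\set{1,\dots,k}$ with initial state $1$ and unique final state $2$, and that $\g$ is in Chomsky normal form with start symbol $X_0$. The \trPDA $\aaut$ carries $n+1$ registers $x_0,x_1,\dots,x_n$ in its control state, with invariant $\bigwedge_{i=1}^n 1 \le x_i - x_0 \le k$; this has bounded span, hence defines an orbit-finite state space, and the dimension of $\aaut$ is $n+1$. A configuration with $x_i - x_0 = p_i$ means $\aaut_i$ is in state $p_i$, and $x_0$ is a fixed reference point (every rule imposes $x'_0 = x_0$). The finite control is the one-state pushdown simulator of $\g$; the stack alphabet is the set of terminals and non-terminals of $\g$ plus a bottom marker $\bot$, hence timeless. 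The initial $\varepsilon$-rule pushes $\bot$ then $X_0$ and sets $x_i - x_0 = 1$ for all $i$; each production $X \to YZ$ becomes an $\varepsilon$-rule that pops $X$ and pushes $Y$ on top of $Z$, leaving the registers unchanged; each production $X \to a$ becomes an $a$-labelled rule that pops $X$ and simultaneously advances every NFA, with guard the single constraint
\[
\bigwedge_{i=1}^n \ \bigvee_{(p,p')\in\delta_i^a}\bigl( x_i - x_0 = p \ \wedge \ x'_i - x_0 = p' \bigr) \ \wedge \ x'_0 = x_0,
\]
where $\delta_i^a \subseteq \set{1,\dots,k}^2$ is the $a$-transition relation of $\aaut_i$. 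Finally, an $\varepsilon$-rule pops $\bot$ and enters a final control location, guarded by $\bigwedge_{i=1}^n x_i - x_0 = 2$. By construction $L(\aaut) = L(\g)\cap\bigcap_{i=1}^n L(\aaut_i)$, so $L(\aaut)\neq\emptyset$ iff the source instance is positive; the constraint representation of $\aaut$ (with binary-encoded constants) has polynomial size.

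The main obstacle is precisely keeping the reduction polynomial: the textbook product automaton has state space $\set{1,\dots,k}^n$ times the pushdown of $\g$, which is exponential, so one must verify the two ingredients that avoid this --- (i) that encoding $(p_1,\dots,p_n)$ by the integer differences $x_i - x_0$ yields a state space of uniformly bounded span, hence a genuine orbit-finite \trPDA with timeless stack in the sense of Section~\ref{sec:timelesstrPDA}; and (ii) that the simultaneous move of all $n$ NFAs can be realized by a \emph{single} transition rule whose guard is a polynomial-size conjunction of disjunctions over the individual transition relations, rather than one rule per $n$-tuple of NFA moves. The remaining correctness argument --- that an accepting run of $\aaut$ on a word $w$ is the same thing as a leftmost derivation of $w$ in $\g$ together with an accepting run of each $\aaut_i$ on $w$ --- is the routine correctness of the standard grammar-to-pushdown simulation, performed componentwise on the registers, and should be relegated to a short paragraph.
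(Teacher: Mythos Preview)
Your proposal is correct and follows essentially the same approach as the paper: reduce from the \exptime-hard intersection non-emptiness of $n$ NFAs with a context-free language, encoding the $n$-tuple of NFA states succinctly via integer differences in the timed registers while the timeless stack handles the context-free part. The only cosmetic difference is that the paper phrases the source problem with a PDA $\baut$ (and also encodes $\baut$'s finite control in the registers), whereas you start from a CFG in Chomsky normal form and hence need only the one-state simulator on the stack side; both variants are equivalent and polynomial.
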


\begin{proof}
	As in \cite{AbdullaAtigStenman:DensePDA:12} (cf. Theorem~\ref{thm:trCFG}),
	we reduce from the non-emptiness problem of the intersection
	of the languages recognized by $n$ NFAs $\aaut_1, \dots, \aaut_n$ and a PDA $\baut$.
	This time, timed registers in the state are used to simulate the control states of the NFAs and the PDA,
	while the untimed pushdown simulates the pushdown of the PDA.
	We omit the details since they are very similar to Theorem~\ref{thm:trCFG}.
\end{proof}

 
\section{Proofs missing in Sec.~\ref{sec:proof}}

\subsection{Systems of equations}

\LemsetsofeqNP*

\begin{proof}
	NP-hardness of the membership problem follows from \cite{StockmeyerMeyer:1973},
	where it is shown that membership is already NP-hard when restricted to intersection-free systems with only non-negative constants $\set{0, 1}$.
	Moreover, the membership problem for $k \in \Z$ in $X$ easily reduces in polynomial time to non-emptiness  
	(by using intersection):
	it suffices to introduce a new variable $X'$ and a new inclusion $X' \supseteq X \cap \set k$.
	Then $X$ contains $k$ in the old system if, and only if, $X'$ is non-empty in the new system.
	The former inclusion can be simulated with only constants $\set{0, 1}$
	by looking at the binary representation of $k$
	and by introducing polynomially many new variables and inclusions.
	Thus, NP-hardness of the non-emptiness problem follows from NP-hardness of membership.

	It remains to show an NP upper bound for the non-emptiness problem.
	The procedure guesses in advance a sequence	of inclusions
	\[
		X_1 \ \supseteq \ Y_1 \cap \{0\}, \qquad \ldots, \qquad X_n \ \supseteq \ Y_n \cap \{0\}
	\]
	from $\Delta$, and then checks correctness of the guess by invoking membership tests.
	Let $\Delta'$ be obtained from $\Delta$ by removing all inclusions that use intersection.
	For every $0 \leq i < n$, let $\Delta_i$ be $\Delta'$ with the inclusions $X_1 \supseteq \{0\}, \dots, X_i \supseteq \{0\}$.
	The procedure checks that $0$ is in the least solution for $Y_{i+1}$ in $\Delta_i$.
	Each of these checks can be done in NP,
	as they reduce to non-emptiness of the intersection of the Parikh images of two context-free languages:
	\begin{itemize}
		\item the language of a context-free grammar over $\set{-1, 0, 1}$ obtained from $\Delta_i$
		by replacing addition with concatenation
		(as in the proof of Lemma~\ref{lem:setsofeq:PTIME}), and
		\item the language over $\set{-1, 0, 1}$ containing words with the same number of $-1$'s and $1$'s.
	\end{itemize}
	The non-emptiness of the intersection can be checked in NP by Kopczy{\'n}ski and To~\cite{KopczynskiTo:LICS2010}.
	
	It remains to argue for correctness.
	Let $\nu$ be the least solution for $\Delta$,
	and, for every $i$, let $\nu_i$ be the least solution for the guessed $\Delta_i$.
	By construction we have $\nu_1 \subseteq \nu_2 \subseteq \cdots \subseteq \nu_n \subseteq \nu$,
	therefore a right guess yields the correct answer.
	On the other side, suppose $k \in \nu(X)$ for some $k \in \Z$, and 
	let $t$ be a derivation of this fact constructed according to the following rules:
	\begin{align*}
		\frac{}{k \in X}
			&\textrm{ \quad for every } X \supseteq \set k \\[2ex]
					\frac{0 \in Y}{0 \in X}
			&\textrm{ \quad for every } X \supseteq Y \cap \set 0\\[2ex]
		\frac{k \in Y \quad l \in Z}{k + l \in X}
			&\textrm{ \quad for every } X \supseteq Y + Z
	\end{align*}
	The derivation is finite since we are considering least solutions.
	Given $t$, let $t_0, t_1, \dots, t_n$ be all sub-derivations (subtrees) of $t$
	s.t. $t_i$ proves a goal of the form $0 \in Y_{i+1}$.
	We further assume that $t_i$ is not a subtree of any previous $t_j$ with $1 \leq j < i$.
	The derivation $t_i$ can be used to show that $0$ belongs to $\nu_i(Y_{i+1})$.
	Thus, the algorithm correctly guesses and verifies $X_1 \supseteq Y_1 \cap \{0\}, \ldots, X_n \supseteq Y_n \cap \{0\}$.
\end{proof}
 
\subsection{Proof of Decomposition Lemma}

\DecompositionLemma*

\begin{proof}
The proof proceeds similarly as the proof of the Normal Form Lemma.
Consider the set of pairs of states extended with reference points (cf.~Sec.~\ref{sec:fromtrPDA:to:equations}):
\begin{align*}
	\ddot X = \setof{ (q\concat t, q'\concat t') \in {\dot Q}^2}{ (q, q') \in X, \ \ t, t' \in \Q }.
\end{align*}
The set $\ddot X$ is definable.
As in the proof of the Normal Form Lemma, we decompose $\ddot X$ into a finite union of orbits, 
and $K$-extensions of orbits $O$, for a sufficiently large positive
integer $K$, namely greater than the largest span of a state from $Q$.
Thus a state admits no gap $K$ or larger, and therefore a gap $K$ may only be caused by a large distance 
\emph{between} the reference points of two states.

Consider only those orbits $O \subseteq \ddot X$ where the difference of reference points is an integer (the property
is an invariant of an orbit); call these orbits \emph{integer-difference orbits}.

Every integer-difference orbit $O$ uniquely determines a pair
\begin{align} \label{eq:add1}
(o, z_0),
\end{align}
for $o$ an orbit in $\psQ$ and $z_0\in Z$ the difference of reference points, with $-2\cdot K \leq z_0 \leq 2\cdot K$.
Furthermore, consider $K$-extension of an integer-difference orbit $O$.
The integer-difference orbits included in the $K$-extension jointly determine one of the two sets,
\begin{align} \label{eq:add2}
\{o\} \times \Z_{\geq z_0} \qquad \text{ or } \qquad
\{o\} \times \Z_{\leq z_0}.
\end{align}
Therefore, the decomposition of the inverse image of $X$ contains singletons of all pairs listed in~\eqref{eq:add1},
and the sets listed in~\eqref{eq:add2}.
\end{proof}

\ignore{

\begin{proof}[Proof of Lemma~\ref{lem:decomp4}]
Lemma~\ref{lem:decomp4} is shown similarly as Lemma~\ref{lem:decomp}, with some
additional slightly technical and tedious details taken into account.

Fix a definable subset $X \subseteq Q^4$.
As before define $K$ and $k$.
Note that the span of elements of $\pspspsQ$ is bounded by $K+1$, as before.
As before we consider separately index quadruples indexing $Q^4$ and $\pspspsQ$.
For every fixed quadruple $(l, \bar l, \bar l', l')$, and for every fixed orbit $o$ in $\pspspsQ_{(l, \bar l, \bar l', l')}$,
we enumerate the image orbits $O$ of
\[
(o, z, \bar z, z'),
\]
for all $z, \bar z, z' \in \{ -M, \ldots M\}$ of sufficiently large absolute value.
The minimal constraint defining $O$ is computed from $(o, z, \bar z, z')$ similarly as before.
The safely large value of $M$ is larger than before, namely
\slcomm{right?}
\[
M \ = \ 3 (K + k + 1),
\]
as quadruples are now considered instead of tuples.
By the assumption that $X_{2 3}$ is orbit-finite we know by Span Lemma that the span of $X_{2 3}$ is bounded, and thus 
it is sufficient to consider the integer $\bar z$ of bounded absolute value; concretely, the sufficient bound is $K + k$.
Consider a fixed quadruple $(l, \bar l, \bar l', l')$ of indexes, a fixed orbit $o$  in $\pspspsQ_{(l, \bar l, \bar l', l')}$, 
and a fixed value 
\[
\bar z \in \{ -(K+k), \ldots, K+k \}
\]
from now on. 

Let $x, x'$ range below over all integers, contrarily to $z, z'$ than range over $\{ -M, \ldots, M\}$ only.
Call an item $(o, x, \bar z, x')$ \emph{contributing} if the image orbit $O$ of $(o, x, \bar z, x')$ 
is included in $X_{(l, \bar l, \bar l', l')}$.
Our aim is to compute a decomposition of the set of splits of all contributing items, i.e., of pairs
\begin{align} \label{eq:splits}
((o_{2 3}, \bar z), (o_{1 4}, x + \bar z + x')).
\end{align}
We separately treat 'small' items, namely those satisfying $x + \bar z + x' \in\{-M, \ldots, M\}$, 
and the remaining 'large' ones.

First, consider a small contributing item $(o, x, \bar z, x')$, i.e., $s = x + \bar z + x' \in \{-M, \ldots, M\}$.
If one of $x, x'$ is outside of $\{-M, \ldots, M\}$ then we may apply Gap Lemma to deduce that
for some small enough $z, z' \in \{-M, \ldots, M\}$, there is a contributing item $(o, z, \bar z, z')$ with the same
value of the sum, i.e., $s  = z + \bar z + z'$.
Therefore, in order to capture the splits~\eqref{eq:splits} of all small items, 
it is enough to add to the decomposition all  single-point sets
\begin{align} \label{eq:sets}
(\{o_{2 3}\} \times \{\bar z\}) \times (\{o_{1 4}\} \times \{z + \bar z + z'\}),
\end{align}
for all contributing small items $(o, z, \bar z, z')$.
As before, it is decidable in polynomial time if an item $(o, z, \bar z, z')$ is contributing.

In case of large items, we will replace the singleton
$\{z + \bar z + z'\}$ in~\eqref{eq:sets} with an infinite integer interval, as specified below.

We distinguish two symmetric cases. Consider a large contributing item $(o, x, \bar z, x')$, 
i.e., $x + \bar z + x' \geq M$.
As $M$ is large enough, one of $x, x'$ must be large enough as well, namely at least $K+k+1$.
Thus we may apply Gap Lemma to deduce that $(o, z, \bar z, z')$ is contributing as well,
for some $z, z' \in \{ -M, \ldots, M\}$ with $z + \bar z + z' = M$.
Therefore, we add to the decomposition the following set:
\begin{align*}
& (\{o_{2 3}\} \times \{\bar z\}) \times (\{o_{1 4}\} \times \Z_{> M}),
\end{align*}
for all contributing large items $(o, z, \bar z, z')$ with $z + \bar z + z' = M$.
Symmetrically, we also add to the decomposition the set:
\begin{align*}
& (\{o_{2 3}\} \times \{\bar z\}) \times (\{o_{1 4}\} \times \Z_{< -M}),
\end{align*}
for all contributing large items $(o, z, \bar z, z')$ with $z + \bar z + z' = -M$.
This completes the proof of Lemma~\ref{lem:decomp}.
\end{proof}

}

\end{document}